\documentclass[12pt]{article}
\pdfoutput=1
\usepackage[utf8]{inputenc}
\usepackage[margin=1in]{geometry}
\usepackage{setspace}
\usepackage{chngcntr}

\usepackage{xr}
 % ra = Riemann, a
 % ra = Riemann, a
\newcommand\rb{4} % rb = Riemann, b

\usepackage{tabularx}
\usepackage{booktabs}
\usepackage{amsmath,mathtools,amsthm,amssymb,graphicx,tikz,mathtools,pgfplots,hyperref,aliascnt,subfigure,epsfig}

\newtheorem{theorem}{Theorem}
\newtheorem{lemma}{Lemma}
\newtheorem{proposition}{Proposition}

\newtheorem{example}{Example}

\newtheorem{definition}{Definition}

\usepackage{caption}
\usepackage{xfrac}
\usepackage{setspace}
\usepackage{thmtools}
\usepackage{enumitem}
\usepackage{xkeyval}
\usepackage{soul} 
\usepackage{mdframed}
\DeclareMathOperator*{\argmax}{arg\,max}
\DeclareMathOperator*{\argmin}{arg\,min}
\makeatletter
\newlength{\LP@lh@boxwidth}      %% width of first box, set automatically within environment
% \setlength{\LP@lh@boxwidth}{0pt}
% \newlength{\LP@lh@boxsep}        %% distance between two boxes.  Default is `2em`.  Can be overridden by using key `boxsep=<length>`
% \setlength{\LP@lh@boxsep}{2em}
\newcommand{\LP@lh@content}{}
\newcommand{\LP@lh@tagname}{}
\renewcommand{\LP@lh@tagname}{\textbf{LP}}
{\setkeys[LP]{lhbox}{tagname,boxsep,#1}\relax%
\settowidth{\LP@lh@boxwidth}{\tagform@\LP@lh@tagname}%
\noindent%
\parbox{\LP@lh@boxwidth}{\begin{align}\tag{\LP@lh@tagname}\LP@lh@content\end{align}}%
\hspace*{\LP@lh@boxsep}%
\begin{minipage}{\linewidth-\LP@lh@boxwidth-\LP@lh@boxsep}}%
{~\end{minipage}}
\makeatother
\usepackage{soul} 
\usepackage[numbers]{natbib}
% Private macros here (check that there is no clash with the style)
\newcommand{\param}{\theta}
\newcommand{\paramspace}{\Theta}
\newcommand{\maximal}{M}
\newcommand{\trueparam}{\theta^*}
\newcommand{\priormean}{\mu^{\circ}}
\newcommand{\posterior}[1]{\mu_{#1}}
\newcommand{\fin}{y}
\newcommand{\indicator}[1]{\mathbb{I}\{#1\}}
\newcommand{\signal}{i}

\newcommand{\discpspace}{\Theta_\delta}
\newcommand{\final}{\fin_{\pi}^*}
\newcommand{\costfxn}[2]{\beta(#2; #1)}

\newcommand{\obj}{h}
\newcommand{\costperparam}{c_1}
\newcommand{\constantcost}{c_2}
\newcommand{\I}{\mathcal{I}}

\renewcommand{\P}{\mathbb{P}}

\newcommand{\benefit}{v}
\newcommand{\ws}{\ell_S}
\newcommand{\wrr}{\ell_R}

\newcommand{\ubar}[1]{\text{\b{$#1$}}}
\newcommand{\E}{\mathbb{E}}
\newcommand{\R}{\mathbb{R}}
\newcommand{\dist}{F}
\newcommand{\popdist}{G}
\newcommand{\goal}{\mathcal{Y}}

\newcommand{\wll}[1]{\omega_{#1}^\ell}
\newcommand{\wlh}[1]{\omega_{#1}^h}
\newcommand{\lol}[1]{\ubar{\theta}_{#1}}
\newcommand{\hil}[1]{\bar{\theta}_{#1}}

\newcommand{\numintervals}{K}
% Natbib setup for author-year style
% \usepackage{natbib}
 \bibpunct[, ]{(}{)}{,}{a}{}{,}%
\usepackage[colorinlistoftodos]{todonotes}

\usepackage{xpatch}
\usepackage{graphicx,tikz,mathtools,pgfplots,hyperref,aliascnt,subfigure}

\title{\LARGE
Information Design for Hybrid Work under Infectious Disease Transmission Risk
}

\author{Sohil Shah, Saurabh Amin, Patrick Jaillet\thanks{The authors are affiliated with the Laboratory for Information and Decision Systems, MIT, Cambridge, MA, USA, \{\texttt{sshah95},\texttt{amins}, \texttt{jaillet}\}@mit.edu}}
\date{}
\begin{document}

\maketitle
\begin{abstract}
\noindent We study a planner’s provision of information to manage workplace occupancy when strategic workers (agents) face risk of infectious disease transmission. The planner implements an information mechanism to signal information about the underlying risk of infection at the workplace. Agents update their belief over the risk parameter using this information and choose to work in-person or remotely. We address the design of the optimal signaling mechanism that best aligns the workplace occupancy with the planner's preference (i.e., maintaining safe capacity limits and operational efficiency at workplace). 

\noindent For various forms of planner preferences, we show numerical and analytical proof that interval-based information mechanisms are optimal. These mechanisms partition the continuous domain of the risk parameter into disjoint intervals and provision information based on interval-specific probability distributions over a finite set of signals. When the planner seeks to achieve an occupancy that lies in one of finitely many pre-specified ranges independent of the underlying risk, we provide an optimal mechanism that uses at most two intervals. On the other hand, when the preference on the occupancy is risk-dependent, we show that an approximately optimal interval-based mechanism can be computed efficiently. We bound the approximation loss for preferences that are expressed through a Lipschitz continuous function of both occupancy and risk parameter. We provide examples that demonstrate the improvement of proposed signaling mechanisms relative to the common benchmarks in information provision. 

\noindent Our findings suggest that information provision over the risk of disease transmission is an effective intervention for maintaining desirable occupancy levels at the workplace. Considering various preferences of the planner, our results provide the optimal signaling mechanisms for a heterogeneous workforce facing practically-driven, continuous distributions of underlying risk.
\end{abstract}

\maketitle
%!TEX root = arxiv_version_final.tex
\section{Introduction}
\subsection{Motivation and Focus} 
The COVID-19 pandemic has generated considerable interest in the design of strategies that allow for in-person activities while controlling the risk of disease spread (\cite{nowzari_analysis_2016,drakopoulos_efficient_2014}). Although pharmaceutical tools (i.e. vaccines and medicines) are critical for reducing the impact of diseases on public health, they often need to be supplemented by other interventions. These interventions include clinical education by experts, the use of peer influencers, or informational directives from community leaders (\cite{adeagbo_improving_2022}). Such interventions become especially important after the disease becomes endemic because pharmaceuticals may lose their efficacy with time (as the genetic diversity of the disease increases) and variant-adapted pharmaceuticals might be too costly for widespread deployment~(\cite{moore_approaches_2021}).

\noindent A ``planner’’--- an entity who seeks to achieve a desired tradeoff between the  value from in-person activities and the expected cost from the resulting disease spread---has access to two broad classes of non-pharmaceutical interventions: hard and soft.  \textit{Hard interventions} are meant to control the disease spread via enforcable restrictions such as lockdowns, capacity limits, or mask mandates. When vaccines are not available or not widely accessible, such measures can be effective in flattening the contagion growth. However, relying on hard interventions in the long-term is both economically and socially costly~(\cite{birge_controlling_2020}). On the other hand, \textit{soft interventions} aim to influence agents in a  susceptible population to take actions that reduce their risk of infection -- these measures include promoting self-testing (and guidance for home isolation upon positive test) and providing information to help agents schedule their in-person activities~(\cite{ely_rotation_2021,hernandez-chanto_contagion_2021}).  Recent empirical studies have shown that public information disclosure about the risk of infection from community transmission (i.e., when the source of transmission for agents is not traceable) can be an effective tool for shaping the agents’ activity choices~(\cite{bursztyn_misinformation_2020,simonov_persuasive_2020,allcott_polarization_2020}).

\noindent In this paper, we address the following question: \emph{How should a planner disclose information over the risk of infection from community transmission in order to align the aggregate outcome of workers’ choices about in-person activity with the planner's own preferences?} We contribute to the related literature on this topic by developing approaches to Bayesian information design that account for broad range of planner preferences over the aggregate outcomes of a strategic, heterogeneous agent population in the face of a stochastic, continuously-valued risk parameter. Our results provide new insights on which classes of planner preferences have optimal information disclosure rules that necessarily exhibit a ``monotone partitional structure". We also develop a linear programming formulation that provides approximately optimal and practically implementable designs for realistic planner preferences that cannot be directly captured by stylized models~ (\cite{de_vericourt_informing_2021,hu_disclosure_2022}). 
 
\subsection{Our setting and main contributions}
We focus on a hybrid work setting. In our setting, a strategic planner provisions information to a population (workforce) of risk-neutral, heterogeneous, non-atomic agents of unit total mass. Both the planner and the agent population face the same uncertainty about the stochastic risk of infection from community transmission. We consider that this risk can be measured by a parameter (state) that is a continuous random variable with a bounded domain, with larger values of the state corresponding to a higher risk of community transmission. Each agent in the population derives value from in-person work rather than working remotely, but faces a stochastic cost associated with being infected at the workplace. This cost increases with the mass of agents at the workplace and realized value of state because of the increased frequency of close contacts and the increased risk of disease transmission per close contact, respectively. 

\noindent Intuitively, under no information about the state beyond its prior distribution, agents’ choices can lead to an outcome that overcrowds the workplace (resp. depletes in-person work) even when the true parameter is large (resp. small). Similarly, revealing full information about the true parameter can lead to agents with a high value of in-person work infecting other such agents via contact or social interactions at the workplace. More generally, under imperfect information, agents choosing in-person work must have their benefit exceed a threshold value that is increasing in the expected value of state. This motivates the basic idea behind information design -- a planner may be able to induce desirable outcomes in comparison to no- and full-information benchmarks by designing a signaling mechanism that shapes the agents’ belief about the state. 

\noindent In general, the planner’s signaling mechanism comprises of a set of signals (e.g., public health advisory, reporting of case counts, highlighting findings of latest research) and a distribution of these signals (e.g., choosing reporting service or recommendation strategy) for each possible value of the state (\cite{bursztyn_misinformation_2020, allcott_polarization_2020}). The planner uses this mechanism to signal the agents about the value of state. Agents use this signal and public knowledge of the mechanism to update their public belief about the true risk and make strategic choices on where to work (in-person or remote). If the resulting equilibrium outcome in the absence of signaling matches with the planner’s preference for all values of state, information design is unnecessary. However, in most practical settings, the planner can achieve a more desirable outcome by choosing an appropriate signaling mechanism. 

\noindent We consider that the planner’s preference is captured by a utility function over equilibrium size of in-person (or remote) population and the true state. Fundamentally, this function allows the planner to tradeoff between gain from in-person work and the cost of infections from community transmission. A number of factors contribute to this tradeoff: productivity levels of in-person/remote agents, costs of maintaining the workspace, agents’ willingness to adhere to public health guidelines (e.g., masking when sharing workspace with others) and the expected cost due to ill health and unavailability of infected agents~ (\cite{vecherin_assessment_2022},\cite{parker_return_2020}). In this paper, we assume that the planner’s preference is given and focus on the design of an optimal signaling mechanism that maximizes her expected utility in equilibrium, subject to agents’ public uncertainty about the true state and randomness in their posterior mean belief induced by the signaling mechanism. 

\noindent We now highlight the types of preferences that can be addressed by our approaches to design optimal signaling mechanisms. First, the planner may seek to maintain the size distribution of agents across in-person and remote work in a certain set that may or may not vary with the state. We consider a state-independent set-based preference where the planner specifies a single range for the size distribution of agents that is fixed across all values of the state (Sec.~\ref{sec:static_id}). 
% On the other hand, state-dependent set-based preference captures the situation when planner specifies sets that are centered around low (resp. high) masses for the in-person work when the state has a high (resp. low) value. 
% Another relevant setting is that of state-dependent capacity limits where the planner seeks to comply with guidelines on the permissible size of in-person mass, with the capacity limit becoming stricter for higher values of state (Sec IV-A). 
More generally, we allow the planners utility to be a jointly Lipschitz function of the in-person equilibrium mass and the true state (Sec. \ref{sec:stateful}). Utilities of this form are general enough to cover a broad range of planner preference in practice.

\noindent By considering a continuous (and bounded) state, we extend the work of \cite{de_vericourt_informing_2021}. We believe that information provision for managing strategic agents in settings such as ours should be based on a continuous-valued state for at least two reasons. First of all, for a new disease, it may be argued that both planner and agents use beliefs over the infectiousness (hence risk from transmission) that are supported over multiple values, rather than a binary state. Secondly, as public health teams and academic research on predictive models of risk indicators becomes more advanced, it might be prudent to rely on ensemble forecasts that utilize multiple models (as opposed to stand-alone models)~(\cite{cramer_evaluation_2022}). The uncertainty resulting from such probabilistic forecasting tools can be better captured by a continuous state distribution. 

\noindent By studying the structure of information design for both state-independent and state-dependent preferences, we shed light on whether or not the optimal signaling mechanism admits a monotone partitional structure (MPS). Such a mechanism is particularly relevant to settings with continuous-valued state, since such mechanisms partition the state domain into contiguous intervals and maps each interval to a single interval-specific signal. It is then sufficient to disclose the signal corresponding to the interval that has the true state; and hence the set of signals is a strictly ordered set. In Sec.~ \ref{sec:static_id},  Theorem~ \ref{thm:r123}, we show that optimal signaling mechanism for state-independent set-based preference of the planner (i.e., when her utility is an indicator function of whether the equilibrium outcome lies within a fixed range) admits a MPS, except when agent’s prior belief on the state is not too tightly concentrated to be affected by signaling. We also obtain closed-form expressions of optimal mechanism for this setting and show performance improvement relative to no-information and full disclosure benchmarks. 

\noindent In contrast, we find that optimal signaling mechanisms for state-dependent preferences do not admit a MPS in general. Without a guarantee of MPS, the structure of the exactly optimal signalling mechanism may be arbitrary and difficult to characterize in closed-form. Instead, we seek methods that can provide an ``approximately" optimal solution whose suboptimality can become arbitrarily close to zero. Our method discretizes both the prior distribution on the state and the utility function. By using a linear programming formulation, we obtain the optimal solution to the discretized problem and show that with sufficient discretization the computed solution can achieve a value arbitrarily close to the optimal signalling mechanism (Theorem ~\ref{thm:stateful}). We do this by bounding the quality of approximate solution in terms of the Lipschitz constants of the planner’s utility function. We present a numerical study to demonstrate fast convergence of the approximation error as the level of discretization becomes finer. 
% In fact, for concave objectives we observe a much larger rate of convergence than the theoretical bound provided by Cor.~\ref{cor:stateful}. 
Finally, we show that our computational approach is flexible enough to recover the optimal design for binary valued state given by~\cite{de_vericourt_informing_2021} and provide near-optimal designs for other types of realistic utility functions that cannot be readily handled using earlier approaches (Sec. \ref{subsec:numerical} and \ref{subsec:dv_comp}). 

\subsection{Related Work} 
Our work is related to broader area of information design in the economics community, starting from similar work of \cite{kamenica_bayesian_2011} and well-surveyed in \cite{candogan_information_2020, bergemann_information_2019, kamenica_bayesian_2019}. In recent years, there has been a considerable interest in identifying information design problems for which an optimal signaling mechanism exhibits MPS and also settings in which such a structure is not retained (\cite{dworczak_simple_2019,guo_interval_2019,candogan_optimal_2021,ivanov_optimal_2015, candogan_information_2023}). Fundamentally, the set of distributions over posterior means that can be induced by a signaling mechanism exhibits an interesting property: the extreme points of this set correspond to all possible interval-based signal mechanisms (\cite{kleiner_extreme_2021}). Naturally, mechanisms satisfying MPS also have this extremal nature. For the case of state-independent set-based preference we obtain tight conditions for the optimal signaling mechanism to admit MPS. Here we leverage the equivalence between signaling mechanisms over continuous state and mean-preserving contractions of the parameter's prior distribution (\cite{gentzkow_rothschild-stiglitz_2016}).  Using our computational approach, we can address a variety of general state-dependent preferences which are hard to tackle analytically, and demonstrate that while retaining the interval-based structure one can compute approximately optimal signaling mechanisms with time complexity polynomial in reciprocal of error. In this sense, our work is the first one to establish the practical relevance of signaling mechanisms with interval-based structure to a fairly generic class of planner preferences. 

\noindent Another line of related work pertains to the recent work on the design of soft interventions to mitigate disease spread. Examples include: optimal design of rotation schemes of safe in-person work (\cite{ely_rotation_2021}); identifying conditions when fully information disclosure by the planner maximizes expected social welfare (\cite{hernandez-chanto_contagion_2021}); and optimal disclosure strategy for maximizing welfare in a healthcare congestion game (\cite{hu_disclosure_2022}). All these works choose specific utility functions to model planner preferences. As mentioned before, our work addresses these limitations and also considers continuous-valued state (as opposed to simplistic treatment of binary-valued state in \cite{de_vericourt_informing_2021}). 
The results we present significantly extend the work of~\cite{shah_optimal_2022} which considers optimal design over continuous-valued state relevant to occupancy management under risk of disease transmission. While they demonstrate an optimal signalling mechanism for a simpler state-independent preference, we fully characterize optimal MPS mechanisms for a more general setting. We also extend their results by introducing a computational approach to design signaling mechanisms with asymptotically diminishing approximation loss for state-dependent planner preferences under continuous-valued state.

% Other related literature includes modeling infection dynamics in mean-field games which considers risk of disease spread from a micro-perspective. Our work can benefit more principled ways of modeling the risk of community transmission, although our current approach makes reasonable assumptions on stochastic cost associated from being infected. Finally, comparing the impact of soft interventions with that of hard interventions (studied extensively in works such as \cite{birge_controlling_2020}) is important, but it is not the focus of our work. 

%!TEX root = arxiv_version_final.tex
\section{Model and Problem Formulation}
\label{sec:model}

\subsection{Agents and Information Environment}
We consider a population of non-atomic, risk-neutral, Bayesian-rational agents (workers). For convenience, assume that the total mass of the population is unity. Each agent faces a choice to either work in-person at a common workplace ($\ws$) or remotely ($\wrr$). The total mass of agents who choose to work remotely is denoted as $\fin\in[0,1]$; the mass of agents at the workplace is then $(1-\fin)$. Agents choosing $\ws$ each receive a privately-known \emph{value} from in-person work but also incur an \emph{uncertain cost} from being infected at the workplace and possibly facing symptoms of the disease. We describe both these quantities next. 

\noindent In our model, any agent’s \emph{value} from in-person work, denoted $\benefit$, is random and follows a (publicly-known) distribution $\popdist$ over $\R_+$. This value includes the agent’s personal gain from working in-person, which can be due to benefits of a shared environment (e.g., work efficiency and collaboration with co-workers) net the cost of travel to the workplace. The quantile function associated with $\popdist$ is given by $\popdist^{-1}(u) \coloneqq \sup\{t: \popdist(t) \leq u\}$. For any $u\in[0,1]$, $\popdist^{-1}(u)$ is the threshold below which a randomly drawn value of in-person work would fall below with probability $u$. 

\noindent The \emph{uncertain cost} form being infected and/or symptomatic depends on two quantities: (i) an unknown risk parameter (or state), denoted by $\trueparam$, which captures both the uncertainty from being infected via community transmission and the disutility from being symptomatic; and (ii) the mass of agents at the workplace $(1-\fin)$, which determines the likelihood of contact between the susceptible and infected agents. 

\noindent Importantly, we treat $\trueparam$ as a continuous random variable with a common prior distribution $\dist$ defined on the interval $\paramspace \coloneqq [0,\maximal]$. One can interpret $\maximal$ as the worst-case risk that agents face based on the prior knowledge about the transmissivity and severity of the disease. The prior distribution $\dist$ over $\paramspace$ then reflects the population’s overall uncertainty of the risk as estimated by the epidemiological models developed by researchers and public health agencies. 

\noindent For notational ease, we express our subsequent modeling choices using the mass of agents choosing remote work $\fin$ rather than in-person mass $1-\fin$. For any $\fin$ and $\trueparam$, the utility of any agent with value $\benefit$ from choosing to work in-person is: 
\begin{align}
\label{eqn:utility_fxn}
    u_\benefit(\ws,\fin;\trueparam) =  \benefit - \costfxn{\trueparam}{\fin}, 
\end{align}
where $\costfxn{\trueparam}{\fin}$ denotes the expected cost incurred by the agent from being infected by the disease and facing its symptoms. Here, the subscript $\benefit$ can be regarded as ``type’’ of the agent. For simplicity, agents who choose to work remotely neither face the cost of infection nor receive the benefit of in-person work; hence $u_\benefit(\wrr,\fin;\trueparam) = 0$ for any $\trueparam$ and any $\fin$.  

\noindent Motivated by a simple epidemiological model of community transmission (see Appendix \ref{appendix:infectious_cost_model}), we assume that this cost is linear in the true state $\trueparam$ and decreasing in the mass of agents choosing remote work $\fin$, and can be expressed as: 
\begin{align}
    \label{eqn:infectious_cost}
   \costfxn{\trueparam}{\fin} \coloneqq \trueparam \costperparam(\fin)+\constantcost(\fin),
\end{align}
where $c_1, c_2: [0,1] \rightarrow \mathbb{R}$ are publicly known functions with following properties: \textit{(i)} $c_1(1) = c_2(1) = 0$; \textit{(ii)} $\costperparam$ is strictly decreasing, continuous and bounded above by a constant $C$; and \textit{(iii)} $\constantcost$ is weakly decreasing and continuous.

\noindent The \emph{planner} is a strategic entity who can implement a \emph{signaling mechanism} to publicly provision information about the true state $\trueparam$ to all agents of the population. The provision of information occurs as follows. First, the planner commits to and discloses a mechanism $\pi = \langle \I, \{z_\param\}_{\param \in \paramspace} \rangle$ where $\I$ is the set of signals and  $\{z_\param\}_{\param \in \paramspace}$ is a set of probability distributions with each $ z_\param $ denoting a distribution over the set $\I$. Next, the true state~$\trueparam$ is realized from the distribution $ \dist$ unbeknownst to the agents and the planner, and the corresponding probability distribution $z_{\trueparam}$ is used to disclose a signal to all the agents; that is, $\signal \in \I$ is publicly signaled with probability $z_{\trueparam}(\signal)$. Finally, agents use the received signal to symmetrically update their belief over $\trueparam$ and make simultaneous choices to either work in-person ($\ws$) or remotely ($\wrr$). 

\noindent Specifically, on receiving signal $\signal’ \in \I$, the agents update their belief over $\trueparam$ according to Bayes’ rule: 
\begin{align}
    \label{eqn:F_i}
    \dist_{\signal'}(t) &= \mathbb{P}[\trueparam \leq t|\signal=\signal'] = 
    \frac{\int_{0}^{t} z_\param(\signal') d\dist(\param)}{\int_0^\maximal z_\param(\signal') d\dist(\param)} \hspace{0.5em} ,
\end{align}
where $\dist_{\signal'}$ is the posterior distribution corresponding to the signal $\signal’$. For a signaling mechanism $\pi$ and each signal $\signal \in \I$, one can obtain the probability with which the signal is generated $q_\signal$ and the corresponding posterior mean of the state $\posterior{\signal}$ as follows:
\begin{align}
    q_\signal &\coloneqq \int_{0}^{\maximal} z_{\param}(\signal) d\dist(\param) \hspace{1em} [\text{signal probability}] \label{eqn:q_i}\\
    \posterior{\signal} &\coloneqq \frac{\int_{0}^{\maximal} \param z_{\param}(\signal) d\dist(\param) }{\int_{0}^{\maximal} z_{\param}(\signal) d\dist(\param) }\hspace{1em}  [\text{posterior mean}] \label{eqn:theta_i}
\end{align}

\noindent Since agents are risk-neutral, they only account for the posterior mean  (and do not consider higher-order statistics) in choosing their strategies. Hence, it is often convenient to consider \emph{direct mechanisms} where the planner performs the Bayesian update and shares the updated posterior mean corresponding to the realized signal with all agents. The direct mechanism corresponding to $\pi$ is denoted as $\mathcal{T}_{\pi} = \{(q_\signal, \posterior{\signal})\}_{\signal \in \I}$.

\subsection{Equilibrium characterization}

We adopt the concept of Bayes-Nash equilibrium to determine the outcome of agents’ strategic choices under the information provided by a signaling mechanism $\pi$ (or its direct counterpart $\mathcal{T}_{\pi}$). In particular, for a mechanism $\pi$ and realized signal $\signal\in\I$, we are interested in characterizing the equilibrium mass of remote agents $\final(\signal)$, resulting from all the agents simultaneously making their choices under the posterior belief $F_\signal$ over the state $\trueparam$. The following result shows that  $\final(\signal)$ can be simply expressed a function of the posterior mean $\posterior{\signal}$. 

\begin{proposition}
\label{prop:eqbm_formula_prop_1}
For any signal $i \in \I$ realized by mechanism $\pi$, the equilibrium mass of remote agents is given by: 
\[\final(i)  = m(\posterior{\signal}) \coloneqq \inf\{u \geq 0: G^{-1}(u) \geq c_1(u)\posterior{\signal}+c_2(u)\},\]
where $\posterior{\signal} $  is the posterior mean for signal $i$. Furthermore, at equilibrium, agents with private value of in-person work $\benefit$ choose $\ws$ if $v > m(\posterior{\signal}) $, and choose $\wrr $ otherwise. 
\end{proposition}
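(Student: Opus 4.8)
The plan is to pass from each agent's Bayesian decision problem to a cutoff best response, characterize the induced aggregate fixed point, and then show that fixed point coincides with the quantile expression $\mfxn(\posterior{\signal})$.

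\emph{Reduction to the posterior mean.} Fix a realized signal $\signal$ with posterior $\dist_\signal$ and posterior mean $\posterior{\signal}$. Because $\costfxn{\param}{\fin}=\param\,\costperparam(\fin)+\constantcost(\fin)$ is affine in the state and agents are risk-neutral, the expected payoff to a type-$\benefit$ agent from choosing $\ws$ against a conjectured remote mass $\fin$ is $\benefit-\E_{\dist_\signal}[\param]\,\costperparam(\fin)-\constantcost(\fin)=\benefit-\posterior{\signal}\costperparam(\fin)-\constantcost(\fin)$, while $\wrr$ yields $0$. Hence the best response depends on the posterior only through $\posterior{\signal}$ and is a cutoff rule in the value: play $\ws$ iff $\benefit>\tau(\fin):=\posterior{\signal}\costperparam(\fin)+\constantcost(\fin)$, with indifferent types ($\benefit=\tau(\fin)$) sent to $\wrr$ by the stated convention. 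By the hypotheses on $\costperparam,\constantcost$ and $\posterior{\signal}\in[0,\maximal]$, the map $\fin\mapsto\tau(\fin)$ is continuous, weakly decreasing on $[0,1]$, nonnegative, with $\tau(1)=0$.

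\emph{Aggregate fixed point and identification.} A remote mass $\fin$ is an equilibrium outcome exactly when the cutoff rule it induces reproduces it, i.e. the mass $\popdist(\tau(\fin))$ of agents with $\benefit\le\tau(\fin)$ equals $\fin$. Writing $\psi(\fin):=\popdist(\tau(\fin))$ --- a weakly decreasing self-map of $[0,1]$ --- the function $\fin\mapsto\fin-\psi(\fin)$ is strictly increasing, so there is at most one equilibrium remote mass. To locate it, set $\phi(u):=\popdist^{-1}(u)-\posterior{\signal}\costperparam(u)-\constantcost(u)$: the quantile $\popdist^{-1}$ is nondecreasing, and since $\posterior{\signal}\ge0$ with $\costperparam$ strictly decreasing and $\constantcost$ weakly decreasing, both $-\posterior{\signal}\costperparam$ and $-\constantcost$ are nondecreasing, so $\phi$ is nondecreasing; as $\costperparam(1)=\constantcost(1)=0$ we get $\phi(1)=\popdist^{-1}(1)\ge0$, so $\{u:\phi(u)\ge0\}$ is a nonempty up-set and $\mfxn(\posterior{\signal})=\inf\{u\ge0:\phi(u)\ge0\}$ is well defined. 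Monotonicity of $\phi$ gives $\popdist^{-1}(u)\ge\tau(u)$ for $u>\mfxn(\posterior{\signal})$ and $\popdist^{-1}(u)<\tau(u)$ for $u<\mfxn(\posterior{\signal})$; feeding these through the generalized-inverse inequalities $t<\popdist^{-1}(u)\Rightarrow\popdist(t)\le u$ and $t>\popdist^{-1}(u)\Rightarrow\popdist(t)>u$, together with continuity of $\tau$, yields $\psi(u)\le u$ for $u>\mfxn(\posterior{\signal})$ and $\psi(u)\ge u$ for $u<\mfxn(\posterior{\signal})$. Hence $\mfxn(\posterior{\signal})$ is exactly the (unique) crossing of $\psi$ with the diagonal, i.e. $\final(\signal)=\mfxn(\posterior{\signal})$, with the tie-break pinning down the boundary case.

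\emph{The threshold description.} At remote mass $\final(\signal)=\mfxn(\posterior{\signal})$, the cutoff rule above has agents choose $\ws$ precisely when $\benefit$ exceeds the equilibrium cutoff value $\tau(\mfxn(\posterior{\signal}))=\popdist^{-1}(\mfxn(\posterior{\signal}))$, i.e. when the agent lies in the top $1-\mfxn(\posterior{\signal})$ fraction of the value distribution (equivalently, $\benefit>\mfxn(\posterior{\signal})$ in the quantile-rank sense of the statement), and $\wrr$ otherwise, completing the argument.

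\emph{Main obstacle.} The delicate step is the identification: making the passage from the quantile-based infimum to the equilibrium remote mass rigorous when $\popdist$ is neither continuous nor strictly increasing, so that $\popdist$ and $\popdist^{-1}$ are only one-sided ``Galois'' inverses. Then $\psi$ may jump across the diagonal, the exact equation $\fin=\popdist(\tau(\fin))$ can fail to admit a solution, and one must argue via the one-sided inequalities and the tie-breaking convention that $\mfxn(\posterior{\signal})$ is the correct selection. The remaining ingredients --- the affine-in-state reduction, strict monotonicity of $\fin-\psi(\fin)$ for uniqueness (which uses that $\costperparam$ is strictly decreasing), and the endpoint checks at $u=0$ and $u=1$ --- are routine.
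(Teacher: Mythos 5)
Your argument is correct and follows essentially the same route as the paper's: first establish the threshold (cutoff-in-value) structure of the best response, then pin down the equilibrium remote mass by a monotone comparison of the quantile $\popdist^{-1}(u)$ with the expected cost $\costperparam(u)\posterior{\signal}+\constantcost(u)$ --- the paper does this via Lemma~\ref{lemma:rectangular_curve} followed by a two-sided contradiction on the value of the infimum in Lemma~\ref{lemma:compute_inperson_mass}, while you repackage the second step as locating the unique diagonal crossing of the fixed-point map $\psi(\fin)=\popdist(\costperparam(\fin)\posterior{\signal}+\constantcost(\fin))$. The only substantive additions in your version are that uniqueness of the equilibrium mass is made explicit (the paper tacitly assumes it) and that the generalized-inverse edge cases (atoms or flat parts of $\popdist$) are flagged and handled, whereas the paper's proof glosses over them by asserting the exact indifference condition $\popdist^{-1}(z)=\costperparam(z)\posterior{\signal}+\constantcost(z)$ at equilibrium.
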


\noindent Intuitively, at equilibrium, the remote agent mass in response to the posterior mean $m(\posterior{\signal})$ can be obtained by decreasing the mass of remote agents $u$ until the benefit from in-person work given by the (monotone) quantile function $\popdist^{-1}(u)$ no longer exceeds the expected cost from being infected $\posterior     {\signal}\costperparam(u)+\constantcost(u)$. 

\noindent The benefit that the marginal agent derives from in-person work can be viewed as the ``critical type’’ $v^*(i) \coloneqq G^{-1}(\final(i))$. Agents will work in-person if and only if their benefit exceeds $v^*(i)$. This threshold-based equilibrium characterization plays a crucial role in our design of signaling mechanism because --- to influence the mass of remote workers --- the signaling mechanism equivalently needs to shape the posterior mean that is evaluated by $m(\cdot)$. 

\noindent Moreover, we establish that as the posterior mean $\posterior{\signal}$ of the true state $\trueparam$ increases, the mass of agents choosing in-person work weakly decreases in equilibrium because the expected cost from being infected strictly increases.  This property is captured by the monotonicity and continuity of $m(\cdot)$:
\begin{lemma}
$m(\cdot)$ is non-decreasing, bounded and continuous function of posterior mean.
\label{lemma:m_smooth}
\end{lemma}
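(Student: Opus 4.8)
The plan is to establish the three properties in turn after rewriting the equilibrium map in a more tractable form. For a posterior mean $\mean \in \paramspace$, write $\phi_{\mean}(u) \coloneqq \popdist^{-1}(u) - \costperparam(u)\mean - \constantcost(u)$ for $u \in [0,1]$, so that Proposition~\ref{prop:eqbm_formula_prop_1} reads $\mfxn(\mean) = \inf\{u \ge 0 : \phi_{\mean}(u) \ge 0\}$. The first thing I would record is that $\phi_{\mean}(\cdot)$ is non-decreasing in $u$: $\popdist^{-1}$ is non-decreasing by definition, $-\costperparam(u)\mean$ is non-decreasing in $u$ since $\costperparam$ is strictly decreasing and $\mean \ge 0$, and $-\constantcost(u)$ is non-decreasing since $\constantcost$ is weakly decreasing. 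Hence the feasible set $\{u \ge 0 : \phi_{\mean}(u) \ge 0\}$ is an ``up-set'' that contains $u=1$, because $\phi_{\mean}(1) = \popdist^{-1}(1) - \costperparam(1)\mean - \constantcost(1) = \popdist^{-1}(1) \ge 0$ using $\costperparam(1) = \constantcost(1) = 0$ and that $\popdist$ is a distribution on $\R_+$. This already gives boundedness, $0 \le \mfxn(\mean) \le 1$.

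For monotonicity I would use a nesting argument. Since $\costperparam$ is strictly decreasing with $\costperparam(1) = 0$, we have $\costperparam \ge 0$ on $[0,1]$, so for $\mean_1 \le \mean_2$ the threshold $\costperparam(u)\mean + \constantcost(u)$ is pointwise non-decreasing in $\mean$; consequently every $u$ feasible at $\mean_2$ is feasible at $\mean_1$, the feasible sets are nested in $\mean$, and taking infima gives $\mfxn(\mean_1) \le \mfxn(\mean_2)$.

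Continuity is the only part requiring real work, and here I would argue by contradiction, using monotonicity to reduce to jumps. If $\mfxn$ had a jump there would be a point $\mean_0 \in \paramspace$ and a non-degenerate interval $(\alpha,\beta)$, $0 \le \alpha < \beta \le 1$, that is ``skipped'': $\mfxn(\mean) \le \alpha$ for $\mean < \mean_0$ and $\mfxn(\mean) \ge \beta$ for $\mean > \mean_0$ (with one-sided readings at the endpoints of $\paramspace$). Fix $u \in (\alpha,\beta)$. Since $\phi_{\mean}$ is non-decreasing in its argument, $u > \mfxn(\mean)$ forces $u$ into the feasible set and $u < \mfxn(\mean)$ forces it out; thus $\phi_{\mean}(u) \ge 0$ for $\mean < \mean_0$ and $\phi_{\mean}(u) < 0$ for $\mean > \mean_0$. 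Because $\phi_{\mean}(u)$ is affine, hence continuous, in $\mean$ for fixed $u$, letting $\mean \uparrow \mean_0$ and $\mean \downarrow \mean_0$ gives $\phi_{\mean_0}(u) \ge 0$ and $\phi_{\mean_0}(u) \le 0$, so $\phi_{\mean_0}(u) = 0$ for every $u \in (\alpha,\beta)$, i.e.\ $\popdist^{-1}(u) = \costperparam(u)\mean_0 + \constantcost(u)$ on $(\alpha,\beta)$. This is contradictory: for $\mean_0 > 0$ the right-hand side is strictly decreasing in $u$ while $\popdist^{-1}$ is non-decreasing; for $\mean_0 = 0$ it forces $\popdist^{-1}$ to be constant on $(\alpha,\beta)$, which requires $\popdist$ to have an atom and is excluded by the atomless reading of $\popdist$ that underlies the quantile identity in Section~\ref{sec:model} (and at $\mean_0 = \maximal$ the contradiction is already immediate from $\phi_{\maximal}(u) \ge 0$ together with $u < \mfxn(\maximal)$). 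Hence $\mfxn$ has no jumps and is continuous.

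I expect the continuity argument to be the main obstacle — specifically the sandwich step that converts ``no jump'' into the pointwise identity $\phi_{\mean_0} \equiv 0$ on the skipped interval, and then the observation that the structural hypotheses on the cost are exactly what close it: strict monotonicity of $\costperparam$ supplies both the sign $\costperparam \ge 0$ used in the nesting argument and the strict decrease of $\costperparam(u)\mean_0 + \constantcost(u)$ when $\mean_0 > 0$, while atomlessness of $\popdist$ handles the boundary case $\mean_0 = 0$. A secondary point to verify carefully is that the feasible set is genuinely an up-set containing $1$, which is where $\costperparam(1) = \constantcost(1) = 0$ enters.
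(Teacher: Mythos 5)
Your proof is correct, and while the boundedness and monotonicity steps match the paper's (the paper nests the super-level sets of $f(u) = \sfrac{(\popdist^{-1}(u)-\constantcost(u))}{\costperparam(u)}$, which incidentally divides by $\costperparam(1)=0$ at $u=1$; your formulation with $\phi_\mean$ avoids that), the continuity step takes a genuinely different route. The paper disposes of continuity in one line by invoking Berge's Maximum Principle, which in fact requires verifying that the constraint correspondence $\mean \mapsto \{u : \phi_\mean(u)\ge 0\}$ is continuous (in particular lower hemicontinuous) — a check the paper omits. Your elementary jump-and-sandwich argument replaces that black box with an explicit computation and, in doing so, exposes exactly where the structural hypotheses are consumed: strict decrease of $\costperparam$ kills any skipped interval when $\mean_0>0$, and the boundary case $\mean_0=0$ genuinely requires an extra regularity condition on $\popdist$ that the paper never states. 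This last point is not pedantry: if $\popdist$ places an atom at $0$ (so $\popdist^{-1}\equiv 0$ on an initial interval) and $\constantcost\equiv 0$, then $\mfxn(0)=0$ while $\mfxn(\mean)$ is bounded away from $0$ for all $\mean>0$, so $\mfxn$ actually jumps at $0$ and the correspondence fails lower hemicontinuity there — i.e., the paper's Berge argument needs the same implicit atomlessness you flag. So your proof is not only valid under the natural reading of the model (and under the continuously differentiable $\popdist$ assumed wherever the lemma is used quantitatively, e.g.\ Theorem~\ref{thm:stateful}), it is more self-contained and more honest about the boundary case than the published one.
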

 
\noindent See Appendix \ref{si:proofs_sec_ii} for the proofs of Prop.~\ref{prop:eqbm_formula_prop_1} and Lemma~\ref{lemma:m_smooth}. Together, these results immediately allow us to obtain equilibrium outcome for two benchmarks: no information and full information. For the case when agents have \emph{no information} beyond the prior mean of $\dist$, denoted $\priormean=\E_{\trueparam\sim\dist}[\trueparam]$, the equilibrium mass of remote agents is simply the constant $m(\priormean)$. On the other hand when they have \emph{full information} about the realized true state $\trueparam$, the equilibrium outcome is the random quantity $m(\trueparam)$. 
%  $\priormean$ & Prior mean belief over risk parameter ($\E_{\trueparam\sim\dist}[\trueparam]$)

\noindent In Fig.~\ref{fig:m_fxn}  we illustrate how the equilibrium mass of remote agents and the threshold benefit needed by an agent to shift to in-person work varies with mean belief of the true state for various distributions $\popdist$ on the benefit of in-person work.
\begin{figure}[h]
\centering
\includegraphics[width=62mm]{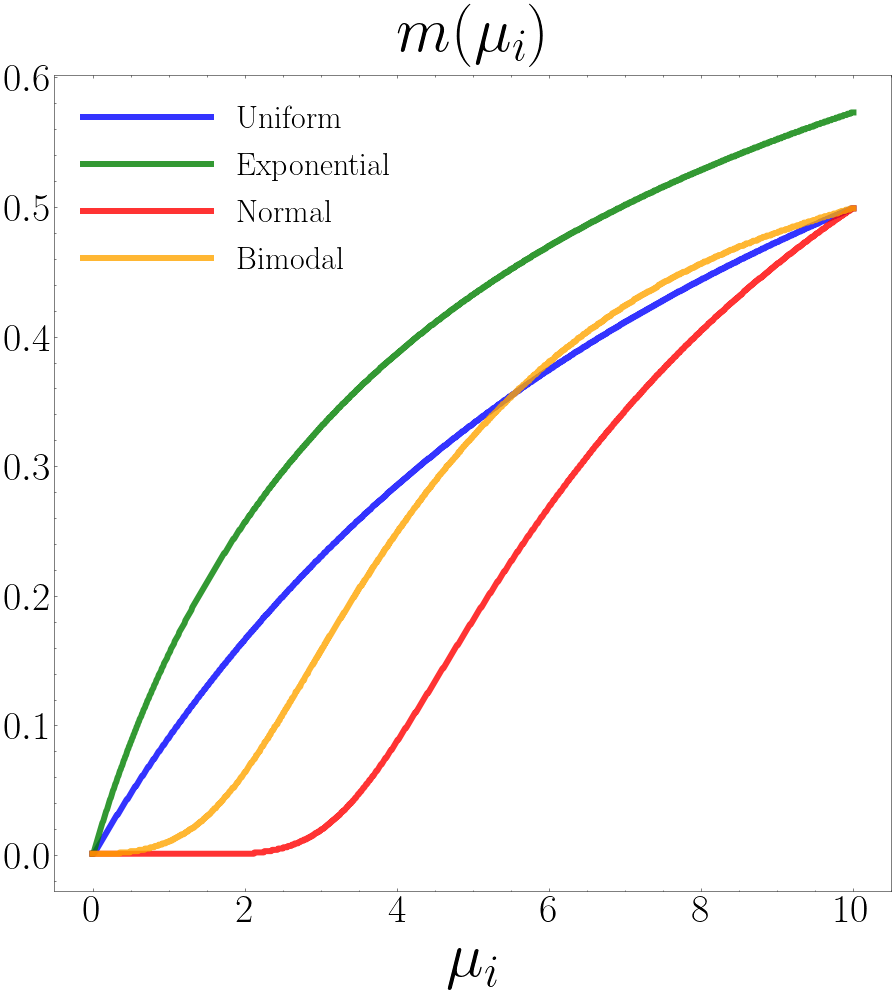}
\includegraphics[width=60mm]{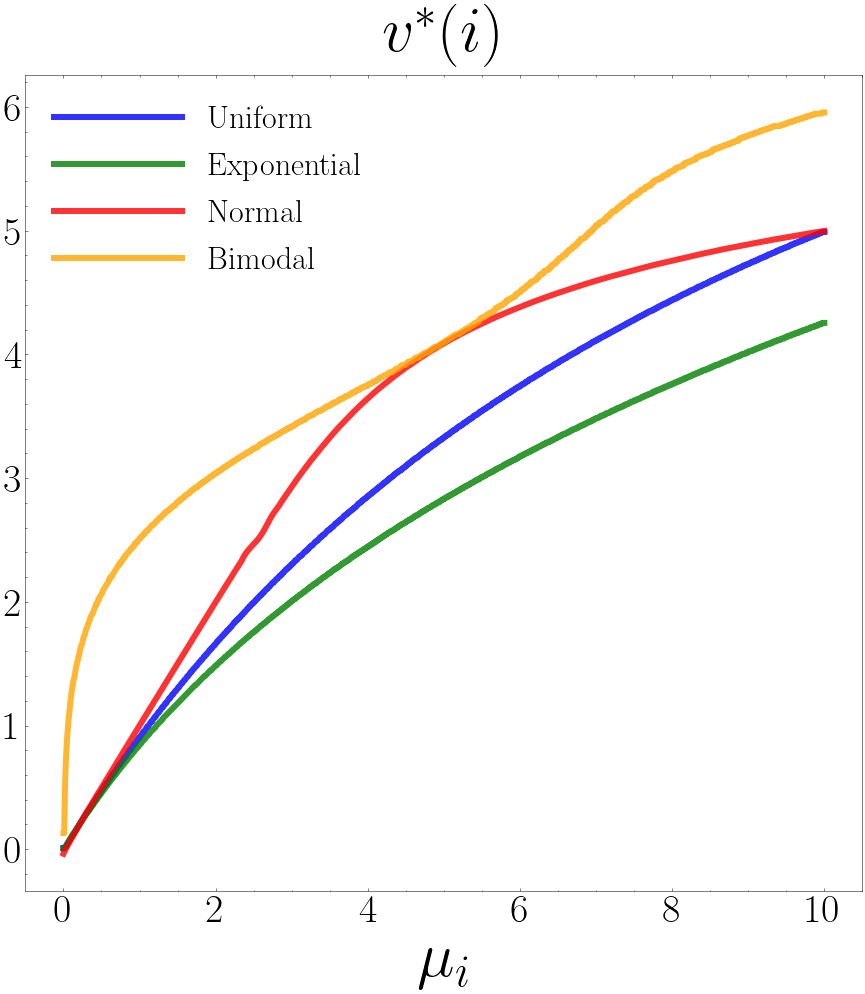}
     %%% not \center
%\subfigure{\label{fig:a}\includegraphics[width=62mm]{img/m_theta.png}}\quad 
%\subfigure{\label{fig:b}\includegraphics[width=60mm]{img/v_star.png}}
\caption{Equilibrium mass of remote agents versus posterior mean (left) and critical type or threshold versus posterior mean (right) for various distributions $\popdist$. $c_1(u) = 1-u$, $c_2(u) = 0$.\\ Uniform: $\popdist\sim Unif[0,10]$, Exponential: $\popdist\sim Exp(\lambda = \frac{1}{5})$, Normal: $\popdist\sim \mathcal{N}(\mu=5,\sigma^2=1)$, Bimodal: $\popdist\sim \frac{1}{2}\mathcal{N}(\mu=3,\sigma^2=1)+\frac{1}{2}\mathcal{N}(\mu=7,\sigma^2=1)$.}
\label{fig:m_fxn}
\end{figure}
Here we remark that the prior distribution $\dist$ over the true state $\trueparam$ does not directly impact the equilibrium outcome -- the equilibrium outcome is only a function of the posterior belief $\posterior{\signal}$ corresponding to the signal $i$ realized by the mechanism. Also note that while $m(\cdot)$ satisfies Lemma~\ref{lemma:m_smooth}, it is not necessarily concave over the domain of state $\paramspace$.\footnote{In the literature, the concavity of $m(\cdot)$ over the region of $\paramspace$ where it assume non-zero values often plays a crucial role in the design of signaling mechanism; see for e.g., two-state setting of \cite{kamenica_bayesian_2011,bergemann_information_2019,de_vericourt_informing_2021}. In our model, this holds when $\popdist$ is a uniform or exponential distribution.}

\subsection{Planner preferences and information design problem}

In general, we consider that the planner's utility function $\obj$ is a mapping from $(\fin; \trueparam)\in[0,1]\times \paramspace$ into $\mathbb{R}_+$. For a signaling mechanism $\pi = \langle \I, \{z_\param\}_{\param \in \paramspace} \rangle $ the planner’s expected utility (objective function), denoted $V_{F,\obj}(\pi)$, is given by: 
\begin{align}
    V_{F,\obj}(\pi) = \mathbb{E}_{\trueparam \sim \dist, i \sim z_{\trueparam}}\big[\obj(\final(i); \trueparam)\big],  \label{eqn:gen_objective_stateful}
\end{align}
where $\final(i)$ is the equilibrium mass of remote agents when the signal realized by mechanism $\pi$ is $i\in\I$. 

\noindent We say that a signaling mechanism $\pi_{F,\obj}^\ast$ is optimal if it maximizes the objective function: 
\begin{align}
    \pi_{F,\obj}^\ast &\in \argmax_{\pi: \langle\I,\{z_\param\}_{\param \in \paramspace} \rangle} V_{F,\obj}(\pi).
    \label{eqn:opt_v}
\end{align}

% The objective function values under no- and full-information benchmarks are $\mathbb{E}_{\trueparam \sim \dist}\big[\obj(m(\priormean); \trueparam)\big]$ and $\mathbb{E}_{\trueparam \sim \dist}\big[\obj(m(\trueparam); \trueparam)\big]$, respectively.
\noindent Under the no-information (resp. full-information) environment, the value of the planner's objective is $\mathbb{E}_{\trueparam \sim \dist}\big[\obj(m(\priormean); \trueparam)\big]$ (resp. $\mathbb{E}_{\trueparam \sim \dist}\big[\obj(m(\trueparam); \trueparam)\big]$) which we show generally is not necessarily equal to the maximum achievable utility $V_{F,\obj}^\ast$ under an optimal signaling mechanism~$\pi_{F,\obj}^\ast$. Thus, we are concerned with the problem of designing~$ \pi_{F,\obj}^\ast$ that induces the agents' posterior mean beliefs on the state, with the equilibrium choices of agents resulting in an outcome that maximizes the planner's expected utility~\eqref{eqn:gen_objective_stateful}. For the sake of comparison, we will use the notation $\pi_{\mathrm{NI}}$ and $\pi_{\mathrm{FI}}$ to denote the mechanisms corresponding to no- and full-information benchmarks, respectively. 

\noindent In Sec. \ref{sec:static_id}, we first focus on the design of optimal signaling mechanisms for the setting when the planner maintains a \emph{fixed}, set-based preference over the size distribution of agents across in-person and remote work for \emph{all} values of the state -- we refer to this case as \emph{state-independent set-based preference}. Then, in Sec. \ref{sec:stateful}, we consider a general state-dependent preference model $\obj(\fin;\trueparam)$ where $\obj$ is a jointly Lipschitz function. 
% Our approach to solving for the optimal signaling mechanisms in these practically relevant settings provides us with useful insights into their structure, and also guides the development of a computational approach to solving the general design problem~\eqref{eqn:opt_v}. 

%In the second case, the planner’s preference over size distribution is state-dependent and given by $\goal(\param) \coloneqq [b(\param),\infty)$ for some weakly increasing function $b(\cdot)$. The planner’s utility is then $\obj(\fin; \trueparam) = \indicator{\fin \geq b(\trueparam)}$ and the expected objective under a mechanism $\pi$ is given by:
%\begin{align}
%    V_{F,\obj}(\pi) = \mathbb{P}\{\final(i) \geq b(\trueparam)\}. 
%\end{align}
%This stateful objective captures planner’s goal to meet the capacity restrictions that are imposed on the shared workspace; these restrictions becomes tighter for higher values of realized state. Hence we refer to this case as \emph{state-dependent capacity limits}. 

%!TEX root = arxiv_version_final.tex
\section{State-Independent, Set-Based Preferences}
\label{sec:static_id}

In this section, we consider the information design setting in which the planner maintains a set-based preference over the size distribution of agents across in-person and remote work, identical for all values of the state. This preference is represented by the union of finitely many ($\numintervals$) closed intervals: $\goal =\cup_{k=1}^{\numintervals} \Omega_k\subseteq [0,1]$ where $\Omega_k \coloneqq [\wll{k}, \wlh{k}]$. Without loss of generality, we consider that these intervals are disjoint and increasing, that is $0\leq \wlh{k} < \wll{k+1} \leq 1$ for all $k$. We refer to this setting as \emph{state-independent set-based preference}. 

\noindent The choice of $\Omega_k$ is driven by practical considerations such as desirable ranges of agent occupancies at the workplace, as driven by the number of workplace  facilities, their sizes, and minimum/maximum number of occupants and public health guidelines at each facility. As an example, $\numintervals=1$ and $\goal = [\wll{1}, \wlh{1}]$ would mean that the planner with two workplace facilities prefers the in-person mass of agents to be in one of the intervals $[0, \wll{1})$ or $(\wlh{1}, 1]$; this corresponds to an occupancy limit below $\wll{1}$ in the first facility, and a minimum (resp. maximum) limit $(\wlh{1}-\wll{1})$ (resp. ($1-\wll{1})$) in the second facility, which is to be used after the first facility's occupancy limit is reached. 

\noindent The planner's utility for state-independent set-based preferences can be defined as $\obj(\fin;\trueparam) \coloneqq \mathbb{I}\{y \in \goal\}$. Her expected objective $V_{F,\obj}(\pi)$ for a signaling mechanism $\pi = \langle \I, \{z_\param\}_{\param \in \paramspace} \rangle$ then becomes~$V_{F,\obj}(\pi) = \mathbb{P}\{\final(i) \in \goal\}=\mathbb{P}\{\final({\signal}) \in \cup_{k=1}^{\numintervals} \Omega_k\}$, where $\final({\signal})$ is the agents' equilibrium remote mass in response to signal $i\in\I$ that is realized with probability $z_\trueparam(i)$. Following Prop.~\ref{prop:eqbm_formula_prop_1}, we can  write the problem of maximizing planner's objective~\eqref{eqn:gen_objective_stateful} as follows: 
\begin{align}
        V_{F,h}^* = \max_{\pi: \langle \I,  \{z_\param\}_{\param \in \paramspace}\rangle} \sum_{k=1}^{\numintervals} \mathbb{P}\{\wll{k} \leq m(\posterior{\signal}) \leq \wll{k}\}, \label{eqn:value_c_i}
\end{align}
where $\posterior{\signal}$ is the posterior mean belief induced by the signal $i$. Lemma~\ref{lemma:m_smooth} implies that for any $k=1,\ldots,\numintervals$ the preimage of $m(\cdot)$ over $\Omega_k=[\wll{k}, \wlh{k}]$ is a closed interval $[\lol{k}, \hil{k}]\eqqcolon \bar{\Theta}_k\subseteq \paramspace$. Hence, the occurrence of the event $\{m(\posterior{\signal})\in \Omega_k\}$ is equivalent to that of the event $\{\posterior{\signal} \in \bar{\Theta}_k\}$. Furthermore, the monotonicity of $m(\cdot)$ implies that $\lol{k}$ and $\hil{k}$ are increasing in $k$. By exploiting this structure, the problem~\eqref{eqn:value_c_i} can be re-written as optimization over direct mechanisms of the form $\mathcal{T}_{\pi} = \{(q_\signal, \posterior{\signal})\}_{\signal \in \I}$:  
\begin{align}
        V_{F,h}^* &= \max_{\mathcal{T}_\pi: \{(q_i, \posterior{i})\}_{i \in \I}} \sum_{i \in \I} \sum_{k=1}^{K} q_i \mathbb{I}\{\lol{k} \leq \posterior{i} \leq \hil{k}\}, 
        \label{eqn:red_1_form}
\end{align}
where $q_i$ and $\posterior{i}$ are the signal probability and posterior mean for signal~$i$ (refer to~\eqref{eqn:q_i} and \eqref{eqn:theta_i}). 

\subsection{Regimes}
\label{subsec:state_indpt_regimes}
Now consider the no-information mechanism $\pi_{\mathrm{NI}}$ which can be constructed by choosing $\I$ as a singleton set (say $\{s\}$) and $z_\param=1$ for all $\param \in \paramspace$. The corresponding direct mechanism is $\mathcal{T}_\mathrm{NI}=\{(1, \priormean)\}$, where $\priormean$ is the mean for prior distribution $\dist$. From~\eqref{eqn:red_1_form}, note that the no-information mechanism achieves maximum planner utility of $1$ if and only if there exists an interval  $k=[\numintervals]$ for which the prior mean belief $\priormean\in\bar{\Theta}_k$. On the other hand, if $\priormean\notin\bar{\Theta}_k$ for all $k\in[\numintervals]$, then the planner achieves a utility of $0$ under $\pi_{\mathrm{NI}}$.  

\noindent Consequently, to solve~\eqref{eqn:red_1_form}, it is useful to distinguish the following qualitative different cases -- which we refer to as \emph{regimes} -- based on the position of prior mean $\priormean$ relative to the intervals $\{[\lol{k},\hil{k}]\}_{k \in [K]}$. These regimes can be defined in terms of $\cup_{k=1}^K \bar{\Theta}_k$ as follows: 
%\begin{defn}
%\label{defn:regimes}
\begin{itemize}
    \item[] $(\mathrm{R1})$: $\priormean \in \cup_{k=1}^K \bar{\Theta}_k$. The prior mean $\priormean$ lies in one of the intervals $\bar{\Theta}_k$ -- and as noted above $\pi_{\mathrm{NI}}$ is optimal in this regime.     
    \item[]  $(\mathrm{R2})$: $\priormean > \sup \cup_{k=1}^K\bar{\Theta}_k$. Equivalently, $\hil{\numintervals}<\priormean\leq \maximal$, where $\maximal$ is the maximum value of the state.  
    \item[]  $(\mathrm{R3})$: $\priormean < \inf \cup_{k=1}^K\bar{\Theta}_k$. Equivalently, $0\leq\priormean < \lol{1}$.
    \item[] $(\mathrm{R4})$: $\priormean \notin \cup_{k=1}^K \bar{\Theta}_k\; \wedge \; \inf \cup_{k=1}^K\bar{\Theta}_k < \priormean < \sup \cup_{k=1}^K\bar{\Theta}_k $. That is, $\priormean$ does not lie in any interval but lies in the gap between two contiguous intervals ($\exists k'\in[\numintervals]$ such that $\priormean\in( \hil{k'}, \lol{k'+1})$). 
\end{itemize}
%\end{defn}

\noindent Furthermore, in any regime, any direct mechanism~$\mathcal{T}_\pi$ that solves~\eqref{eqn:red_1_form} is not unique in general. This follows from the fact that any signal $i\in\I$ with signal probability~$q_i$ and posterior mean~$\posterior{i}$ can be branched into two signals $i_i$, $i_2$ uniformly at random to induce symmetric posterior means~$\posterior{i_1}=\posterior{i_2}=\posterior{i}$ with probability~$\sfrac{q_i}{2}$ each, and hence such a construction achieves the same planner objective. However, the following lemma shows that the search for an optimal mechanism can be limited to the class of direct mechanisms that use a set of signals~$\I$ of size at most $|\I|=\numintervals+1$ (see proof in Appendix \ref{appendix:proofs_sec_iii}). 
\begin{lemma}
\label{lemma:num_signals_bdd}
There exists a direct mechanism~$\mathcal{T}_{\pi}^\ast=\{(q_\signal, \posterior{\signal})\}_{\signal \in [\numintervals+1]}$ that achieves optimal planner objective in~\eqref{eqn:red_1_form} and satisfies following constraints:
\[\posterior{i} \in \bar{\Theta}_i,\quad i=1,\dots,\numintervals, \quad \text{and} \quad \posterior{K+1} \notin \cup_{k=1}^K \bar{\Theta}_k.\]
\end{lemma}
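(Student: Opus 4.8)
The plan is to start from an arbitrary optimal direct mechanism $\mathcal{T}_\pi = \{(q_i,\posterior{i})\}_{i\in\I}$ solving~\eqref{eqn:red_1_form} (such a mechanism exists because the feasible set of distributions over posterior means is compact in the weak topology and the objective in~\eqref{eqn:red_1_form} — while not continuous — is bounded and piecewise, so we may appeal to the existence of a maximizer, or equivalently invoke the extreme-point characterization of the set of mean-preserving contractions) and then \emph{merge} signals until at most $\numintervals+1$ remain, without decreasing the objective and while maintaining Bayes-plausibility (the barycenter/mean-preservation constraint $\sum_i q_i \posterior{i} = \priormean$ together with $\posterior{i}\in\paramspace$). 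The key observation is that the objective only depends on which of the $\numintervals$ target intervals $\bar\Theta_k$ each posterior mean $\posterior{i}$ lands in: signals whose posterior mean falls in $\bar\Theta_k$ each contribute their full mass $q_i$, and signals whose posterior mean lies outside $\cup_k \bar\Theta_k$ contribute $0$.

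First I would group the support of $\mathcal{T}_\pi$ into the $\numintervals+1$ classes: for $k=1,\dots,\numintervals$ let $S_k=\{i : \posterior{i}\in\bar\Theta_k\}$ (breaking ties arbitrarily if some $\posterior{i}$ lies in two adjacent closed intervals — by the assumed ordering $\hil{k}<\lol{k+1}$ this cannot happen, so the $\bar\Theta_k$ are in fact disjoint), and let $S_{\numintervals+1}$ be the remaining signals. Within each $S_k$ with $k\le\numintervals$, replace all of its signals by a single signal of probability $q^{(k)} := \sum_{i\in S_k} q_i$ and posterior mean $\posterior{}^{(k)} := \tfrac{1}{q^{(k)}}\sum_{i\in S_k} q_i\posterior{i}$; since $\bar\Theta_k$ is an interval (Lemma~\ref{lemma:m_smooth} gives that the preimage of $\Omega_k$ under the continuous monotone map $m(\cdot)$ is the closed interval $[\lol{k},\hil{k}]$), this convex combination again lies in $\bar\Theta_k$, so the merged signal still contributes its full mass and the objective is unchanged. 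This is exactly the reverse of the "branching" operation noted in the paragraph before the lemma, so Bayes-plausibility is preserved: the overall barycenter is untouched. The signals in $S_{\numintervals+1}$ can be collapsed in the same way into one signal with posterior mean the $q$-weighted average of $\{\posterior{i}\}_{i\in S_{\numintervals+1}}$; the only subtlety is to check this average still lies outside $\cup_k\bar\Theta_k$, which need not hold automatically — but if it lands inside some $\bar\Theta_k$, that only \emph{increases} the objective, and we can then re-merge it with $S_k$ and drop to fewer signals, which is consistent with the claimed bound $|\I|\le\numintervals+1$. After these merges we are left with at most one signal per class, i.e. $|\I|\le\numintervals+1$, each posterior mean of the first $\numintervals$ signals lies in the corresponding $\bar\Theta_i$ (re-indexing so that signal $i$ handles interval $i$; any empty class is simply omitted or assigned an arbitrary feasible posterior mean in $\bar\Theta_i$ with zero probability), and the last in the complement, as required.

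The main obstacle — and the only place where care is genuinely needed — is the collapsing of $S_{\numintervals+1}$: unlike the $\bar\Theta_k$, the complement $\paramspace\setminus\cup_k\bar\Theta_k$ is not convex, so averaging two of its points can leave it. I would handle this by arguing that this can only help: either the average stays in the complement (fine), or it enters some $\bar\Theta_k$ and the objective strictly rises, contradicting optimality unless we also absorb it into the $k$-th signal, in which case we end with $\le\numintervals$ signals and pad with a zero-probability dummy in the complement (which exists whenever the complement is nonempty; if the complement were empty the problem is degenerate and $\numintervals$ signals already suffice). A cleaner alternative is to not collapse $S_{\numintervals+1}$ to a single point at all but instead note that, since those signals contribute $0$ regardless, we may \emph{pool} their total mass $\sum_{i\in S_{\numintervals+1}}q_i$ with its barycenter; the barycenter is automatically a valid posterior mean in $\paramspace$, and by optimality it cannot lie in any $\bar\Theta_k$ (else the pooled mechanism would do strictly better, contradicting that the original was optimal — here I use that the original objective value was optimal and pooling preserves Bayes-plausibility). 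This last argument is the crux; everything else is the routine "merge within a convex cell" bookkeeping.
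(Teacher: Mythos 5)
Your proposal is correct and follows essentially the same route as the paper's proof: merge all signals whose posterior means fall in the same cell (each target interval $\bar{\Theta}_k$ or the complement), using convexity of the intervals to keep the merged mean inside $\bar{\Theta}_k$ and the observation that merging complement signals can only (weakly) increase the objective. The extra care you take with the non-convexity of the complement, empty classes, and zero-probability dummy signals is sound and only makes explicit what the paper's recursive pairwise-merging argument leaves implicit.
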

\noindent That is, to solve for an optimal mechanism, we need at most one signal for each of the $\numintervals$ intervals  (i.e., $\bar{\Theta}_i,\; i=1,\dots,\numintervals$) to induce posterior mean in the desirable set, and one additional signal that induces posterior mean that does not lie in any of these sets. We henceforth use this insight to search over mechanisms that can be represented by the set of tuples~$\{(q_\signal, \posterior{\signal})\}_{\signal \in [\numintervals+1]}$, and define the (cumulative) distribution of posterior means for such a set of tuples as $H(t) = \sum_{k=1}^{K+1} q_k \mathbb{I}\{\posterior{k} \leq t\}$.

We say that posterior mean distribution $H$ corresponding to~$\mathcal T_\pi=\{(q_\signal, \posterior{\signal})\}_{\signal \in [\numintervals+1]}$ is a \emph{mean-preserving contraction} if and only if for any $y\in \paramspace$, $\int_0^y H(t)dt \leq \int_0^y F(t)dt$, with equality for $y=\maximal$ (\cite{mas-colell_microeconomic_1995}). If $H$ is a mean-preserving contraction of $\dist$ we write $H \succcurlyeq \dist$ (i.e., $H$ majorizes $\dist$). This relationship can be expressed with the following equivalent constraints (\cite{candogan_optimal_2021}): 
\begin{align}
   \int_0^y (1-H(t)) dt \geq \int_0^y (1-F(t)) dt,  \forall y\in \paramspace \quad \Leftrightarrow \quad \int_0^x H^{-1}(s) ds \geq \int_0^x F^{-1}(s) ds, \forall x\in [0,1]\label{eq:MPC_constraint1}
\end{align}
with equality at $y=\maximal$ (resp. $x=1$).

\noindent Following the seminal result from~\cite{gentzkow_rothschild-stiglitz_2016} who build on~\cite{blackwell_theory_1954}, we know that set of tuples~$\{(q_\signal, \posterior{\signal})\}_{\signal \in [\numintervals+1]}$ is implementable via a signaling mechanism~$\pi$ \emph{if and only if} the corresponding posterior mean distribution~$H\succcurlyeq \dist$, i.e., $H$ is a mean-preserving contraction of the prior distribution~$\dist$. Intuitively, implementability requires that the mechanism shifts the probability mass from the tails of the prior distribution~$\dist$ ``inward'' in a manner that preserves the mean of the distribution (both $\dist$ and $H$ have equal means).  

\noindent We note that $f(x)=\int_0^x F^{-1}(s) ds$ is convex in $x$ with $f(0)=0$. Furthermore, by definition of $H$, $\int_0^u H^{-1}(s) ds$ is a piecewise linear function in $u$ with breakpoints in the set~$\{\sum_{j}^n q_j: n\in[\numintervals+1]\}$. Hence, we obtain that to ensure $H\succcurlyeq \dist$, it suffices to enforce the constraint~\eqref{eq:MPC_constraint1} at these breakpoints. The following lemma captures this observation (see proof in Appendix \ref{appendix:proofs_sec_iii}):
\begin{lemma}
\label{lemma:reduce_MPC_constraints}
For the posterior distribution of means defined as $H(t)=\sum_{k=1}^{K+1} q_k \mathbb{I}\{\posterior{k} \leq t\}$ for all $t\in\paramspace$, enforcing that $H\succcurlyeq \dist$ is equivalent to the following constraints:
\begin{subequations}\label{eq:MPCall}
\begin{align}
    &\sum_{j=1}^{n} q_j\posterior{j} \geq \int_0^{\sum_{j=1}^{n} q_j} \dist^{-1}(s)ds,\quad \forall n \in [\numintervals] \label{eqn:MPC}\\
 &\sum_{j=1}^{\numintervals+1} q_j \posterior{j} = \priormean. \label{eqn:mean_mean}
\end{align}
\end{subequations}
\end{lemma}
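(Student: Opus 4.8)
The plan is to show the two characterizations of $H \succcurlyeq \dist$ in \eqref{eq:MPC_constraint1} collapse, for our finitely-supported $H$, to exactly the system \eqref{eq:MPCall}. I would work with the quantile-function form of the majorization constraint, namely $\int_0^x H^{-1}(s)\,ds \geq \int_0^x F^{-1}(s)\,ds$ for all $x \in [0,1]$, with equality at $x = 1$. First I would record the structure of $G_H(x) := \int_0^x H^{-1}(s)\,ds$: since $H$ is the CDF of the discrete distribution placing mass $q_k$ at $\posterior{k}$, its quantile function $H^{-1}$ is a step function taking the value $\posterior{n}$ on the interval $\big(\sum_{j<n} q_j,\ \sum_{j\le n} q_j\big]$ (after sorting the atoms; I would note the $\posterior{k}$ can be taken increasing in $k$, which is already available from Lemma~\ref{lemma:num_signals_bdd} and the monotonicity of $m(\cdot)$). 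Hence $G_H$ is continuous, piecewise linear, and convex, with breakpoints exactly at the partial sums $x_n := \sum_{j=1}^n q_j$, $n \in [\numintervals+1]$, and slope $\posterior{n}$ on the $n$-th piece; in particular $G_H(x_n) = \sum_{j=1}^n q_j \posterior{j}$.

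The key step is the convexity-vs-piecewise-linear comparison. Let $f(x) = \int_0^x F^{-1}(s)\,ds$, which (as noted in the excerpt) is convex with $f(0) = 0$. We want $G_H(x) \ge f(x)$ on $[0,1]$ with equality at $x=1$. I would argue: since $G_H$ is piecewise linear with breakpoints $\{x_n\}$ and $f$ is convex, the difference $G_H - f$ is concave on each piece $[x_{n-1}, x_n]$; a concave function on an interval attains its minimum at an endpoint, so $G_H - f \ge 0$ on all of $[0,1]$ if and only if $G_H(x_n) - f(x_n) \ge 0$ at every breakpoint $x_n$, $n = 0, 1, \dots, \numintervals+1$. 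The endpoint $x_0 = 0$ gives $G_H(0) = f(0) = 0$ automatically, and $x_{\numintervals+1} = 1$ is the equality constraint; the remaining breakpoints $n \in [\numintervals]$ give precisely $\sum_{j=1}^n q_j \posterior{j} \ge \int_0^{x_n} F^{-1}(s)\,ds = f(x_n)$, which is \eqref{eqn:MPC}. Finally, evaluating the equality-at-$x=1$ condition: $G_H(1) = \sum_{j=1}^{\numintervals+1} q_j \posterior{j}$ and $f(1) = \int_0^1 F^{-1}(s)\,ds = \E_{\trueparam \sim \dist}[\trueparam] = \priormean$, giving \eqref{eqn:mean_mean}. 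I would also remark that \eqref{eqn:mean_mean} together with $f(1) = G_H(1)$ makes the "$\le$ with equality at $y = \maximal$" half of \eqref{eq:MPC_constraint1} redundant, i.e. the two halves of \eqref{eq:MPC_constraint1} are consistent.

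For the converse direction of the "equivalent" claim I would just run the same argument backwards: if \eqref{eqn:MPC} and \eqref{eqn:mean_mean} hold, then $G_H - f$ is $\ge 0$ at $x = 0$, at the internal breakpoints, and at $x = 1$ (where it is zero), and by piecewise-linearity of $G_H$ plus convexity of $f$ it is $\ge 0$ everywhere, so $H \succcurlyeq \dist$. The main obstacle I anticipate is bookkeeping rather than conceptual: carefully justifying that $H^{-1}$ is the asserted step function and that $G_H$ has slope exactly $\posterior{n}$ on the $n$-th piece when some $q_k$ may be zero or some $\posterior{k}$ coincide (so the "sorted" atoms and the indexed atoms agree up to reordering), and handling the degenerate breakpoints cleanly. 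One should also double-check the measure-theoretic point that $\int_0^x H^{-1}(s)\,ds$ really is the relevant object — i.e. that the equivalence in \eqref{eq:MPC_constraint1} is being used correctly — but that is imported from \cite{candogan_optimal_2021} and need not be reproven here.
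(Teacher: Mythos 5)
Your proposal is correct and follows essentially the same route as the paper's proof: both reduce the quantile-form majorization condition $\int_0^x H^{-1}(s)\,ds \geq \int_0^x F^{-1}(s)\,ds$ to the breakpoints $\sum_{j\le n} q_j$ by exploiting convexity of $\int_0^x F^{-1}(s)\,ds$ against piecewise linearity of $\int_0^x H^{-1}(s)\,ds$, with the mean-equality at $x=1$ giving \eqref{eqn:mean_mean}. Your version is somewhat more explicit about identifying $\int_0^{x_n} H^{-1}(s)\,ds = \sum_{j=1}^{n} q_j\posterior{j}$ and about the need to index the atoms in increasing order, a bookkeeping point the paper's proof leaves implicit.
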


\subsection{Regimes with monotone paritional structure (MPS)}
\label{subsec:state_indpt_mps}
We define a class of signaling mechanisms that plays an important role in our subsequent results:
\begin{definition}
\label{defn:mps}
We say that a signaling mechanism~$\pi$ has a \emph{monotone partitional structure} (MPS) if there exists a finite partition of the state-space~$\paramspace$, defined as $\mathcal{P} \coloneqq \{\paramspace_j\}_{j=1}^n = \{(t_{j-1},t_j]\}_{j=1}^n$  for some $n$ with $0=t_0<t_1<\dots<t_{n-1}<t_n = \maximal$, such that $\I=[n]$ and for any $\param\in\paramspace$, $z_\param(j) = \mathbb{I}\{\param \in (t_{j-1}, t_j]\}$. The corresponding direct counterpart can be written as the set of tuples~$\{(q_j, \posterior{j})\}_{j \in [n]}$, where following~\eqref{eqn:q_i} and \eqref{eqn:theta_i}, $q_j = \dist(t_j) - \dist(t_{j-1})$ and $\posterior{j} = \sfrac{\int_{t_{j-1}}^{t_j} \param d\dist(\param)}{q_j}$. 
\end{definition}

\noindent We are now in the position to characterize the optimal signaling mechanism for the regimes $\mathrm{R2}$ and $\mathrm{R3}$. (We already know that $\pi_{\mathrm{NI}}$ is optimal in regime~$\mathrm{R1}$.) The following proposition shows that the optimal signaling mechanism in these regimes has a monotone partitional structure. (This property does not necessarily hold for regime $\mathrm{R4}$, as extensively discussed in Sec. \ref{subsec:state_indpt_non_mps}. We instead identify sufficient conditions for MPS to hold at optimality for regime $\mathrm{R4}$ and provide a method to find the optimal \emph{direct} signalling mechanism.) Before proceeding, we define the increasing function for any $\param\in\paramspace$:
\begin{align}
    \bar f(\param) \coloneqq \sup\left\{x:\int_0^x F^{-1}(s)ds \leq x \param\right\},\label{eq:barffn}
\end{align}
and note that for any~$\param\in\paramspace$ the constraint $\param x\geq f(x)$ is satisfied if and only if $0\leq x\leq \bar f(\param)$.

\begin{theorem}
The optimal value of planner's objective~$V_{F,h}^\ast$ and the corresponding signaling mechanism~$\pi_{F,\obj}^\ast$ for the regimes~$\mathrm{R1}-\mathrm{R3}$ are as follows:
\begin{itemize}
    \item[] $(\mathrm{R1})$: $V_{F,h}^\ast=1$ and $\pi_{F,\obj}^\ast=\langle \{1\}, \{z_\param\}_{\param \in \paramspace}\rangle$ with $z_\param(1) = 1$ for all $\param\in\paramspace$. 
    \item[]  $(\mathrm{R2})$: $V_{F,\obj}^\ast = q_1^\ast$ where  $ q_1^\ast\coloneqq \min\left\{\bar f(\hil{\numintervals}),\frac{M-\priormean}{M-\hil{\numintervals}}\right\}$ and $\pi_{F,\obj}^\ast = \langle \{1,2\}, \{z_\param\}_{\param \in \paramspace}\rangle$ with $z_\param(1) = 1$ for $\param \leq F^{-1}(q_1^\ast)$ and $z_\param(2) = 1$ for $\param > F^{-1}(q_1^\ast)$.
  
    \item[] $(\mathrm{R3})$: $V_{F,\obj}^\ast = 1-q_2^\ast$ where $q_2^\ast \coloneqq \inf\left\{q \geq \frac{\lol{1}-\priormean}{\lol{1}}:q \leq \bar f(\lol{1}-\frac{\lol{1}-\priormean}{q})\right \}$ and $\pi_{F,\obj}^* = \langle \{1,2\}, \{z_\param\}_{\param \in \paramspace}\rangle$ with $z_\param(1) = 1$ for $\param > F^{-1}(q_2^\ast)$ and $z_\param(2) = 1$ for $\param \leq F^{-1}(q_2^\ast)$. 
\end{itemize}
Thus, $\pi_{F,\obj}^\ast$ has a monotone partitional structure in regimes~$\mathrm{R1}-\mathrm{R3}$: $t_0 = 0$ and $t_1 = M$ for $(\mathrm{R1})$; $t_0 = 0$, $t_1 = F^{-1}(q_1^*)$ and $t_2 = M$ for $(\mathrm{R2})$; and $t_0 = 0$, $t_1 = F^{-1}(q_2^*)$ and $t_2 = M$ for $(\mathrm{R3})$.  
\label{thm:r123}
\end{theorem}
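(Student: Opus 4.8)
The plan is to reduce the problem in each regime to the equivalent majorization-constrained optimization \eqref{eqn:red_1_form} over direct mechanisms, and then argue that among all feasible posterior-mean distributions, a two-signal mechanism of the claimed partitional form is optimal. Regime R1 is immediate: since $\priormean \in \bar\Theta_k$ for some $k$, the no-information mechanism $\mathcal T_{\mathrm{NI}} = \{(1,\priormean)\}$ already achieves $V_{F,h}^\ast = 1$, which is the trivial upper bound. For regimes R2 and R3 the substance is that the prior mean lies strictly outside $\cup_k \bar\Theta_k$ (above it in R2, below it in R3), so the planner cannot put all mass on a single desirable signal; the goal is to maximize the total probability $\sum_i q_i \mathbb I\{\posterior i \in \cup_k \bar\Theta_k\}$ subject to implementability, i.e., $H \succcurlyeq \dist$.

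First I would handle R2. Using Lemma~\ref{lemma:num_signals_bdd} I may restrict to mechanisms with at most $\numintervals+1$ signals, with $\posterior i \in \bar\Theta_i$ for $i \le \numintervals$ and $\posterior{\numintervals+1} \notin \cup_k\bar\Theta_k$. I would then argue that since all the ``good'' intervals $\bar\Theta_k$ lie strictly below $\priormean$, it is without loss to collapse all good signals into a single one: merging any two signals with posterior means both in $\cup_k\bar\Theta_k$ preserves feasibility (a merge is a further mean-preserving contraction, hence still $\succcurlyeq\dist$) and preserves the objective only if the merged mean still lands in some $\bar\Theta_k$ — here I would push the merged good signal's mean up to $\hil{\numintervals}$, the largest attainable good value, which only relaxes the majorization constraint and lets the complementary ``bad'' signal sit as high as possible. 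So the optimal mechanism is a two-signal mechanism $\{(q_1,\posterior 1),(q_2,\posterior 2)\}$ with $\posterior 1 \le \hil{\numintervals}$, $\posterior 2 > \hil{\numintervals}$, and we maximize $q_1$. Now two constraints bind: the mean constraint $q_1\posterior 1 + q_2\posterior 2 = \priormean$ forces, at best, $\posterior 1 = \hil{\numintervals}$ and $\posterior 2 = M$ (pushing $\posterior 2$ to its maximum allows the largest $q_1$), giving $q_1 \le (M-\priormean)/(M-\hil{\numintervals})$; and the single remaining majorization inequality \eqref{eqn:MPC} at $n=1$, namely $q_1 \hil{\numintervals} \ge \int_0^{q_1} F^{-1}(s)ds = f(q_1)$, which by definition of $\bar f$ in \eqref{eq:barffn} is exactly $q_1 \le \bar f(\hil{\numintervals})$. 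Taking the min of the two bounds gives $q_1^\ast$, and feasibility at that value is achieved by the stated threshold (interval) mechanism $z_\param(1)=1$ for $\param \le F^{-1}(q_1^\ast)$ — this is the monotone partition with breakpoint $t_1 = F^{-1}(q_1^\ast)$, whose two posterior means are $\le \hil{\numintervals}$ and in $[\posterior{\cdot}, M]$ respectively by construction of $\bar f$ and the mean identity.

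For R3 the situation is the mirror image but slightly more delicate because the objective is $1 - q_2$ where $q_2$ is the probability of the single ``bad'' (too-low) signal: now the good intervals all lie strictly above $\priormean$, so we want to route as little mass as possible through a low signal while pulling the remaining mass up into $\bar\Theta_1$ (the lowest good interval, the easiest to reach from below). A symmetric merging argument collapses everything to two signals: a low signal with mean $\posterior 2$ and probability $q_2$, and a good signal with mean $\posterior 1 \ge \lol 1$ and probability $1-q_2$. The mean identity $q_2\posterior 2 + (1-q_2)\posterior 1 = \priormean$ together with the requirement $\posterior 1 \ge \lol 1$ and $\posterior 2 \ge 0$ gives $q_2 \ge (\lol 1 - \priormean)/\lol 1$ (taking $\posterior 1 = \lol 1$, $\posterior 2 = 0$), but we must also satisfy majorization, and here the binding inequality is the one at the breakpoint separating the low signal from the good one. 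Writing $\posterior 2 = \priormean - (\lol 1 - \priormean)(1-q_2)/q_2 \cdot(\text{correction})$ — more cleanly, expressing $\posterior 2$ via the mean identity once $\posterior 1 = \lol 1$ — the majorization constraint \eqref{eqn:MPC} evaluated after the low block reads $q_2 \le \bar f\bigl(\lol 1 - (\lol 1 - \priormean)/q_2\bigr)$, which is precisely the defining inequality for $q_2^\ast$. The infimum over $q$ satisfying this is the smallest feasible low-mass, so $V_{F,h}^\ast = 1 - q_2^\ast$, realized by the threshold mechanism with breakpoint $t_1 = F^{-1}(q_2^\ast)$.

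The main obstacle — and the step that needs the most care — is the \emph{merging / collapsing} argument: showing rigorously that it is without loss to use exactly two signals with the extreme posterior means claimed, rather than spreading mass over several good intervals or choosing interior good means. This requires (i) verifying that merging two signals is a mean-preserving contraction of the finer distribution and hence preserves $H \succcurlyeq \dist$ (this is standard, via \eqref{eq:MPC_constraint1} and convexity of $f$), and (ii) the monotonicity/extremality observation that pushing the good mean to the boundary of $\cup_k\bar\Theta_k$ nearest to $\priormean$ and the bad mean to the far endpoint ($M$ in R2, $0$ in R3) relaxes every majorization constraint — so no feasible mechanism can do better than the two-parameter family, whose optimum is then a one-dimensional calculation. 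Once the reduction to this family is justified, identifying which of the two constraints ($\bar f$ versus the mean bound) is active, and checking that the stated interval mechanism actually attains the bound, is routine from the definitions of $m(\cdot)$, $\bar f$, and Lemma~\ref{lemma:reduce_MPC_constraints}.
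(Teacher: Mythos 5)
Your overall route is the same as the paper's: restrict to direct mechanisms with at most $\numintervals+1$ signals (Lemma~\ref{lemma:num_signals_bdd}), collapse to a single ``good'' signal with mean at the boundary of $\cup_k\bar\Theta_k$ nearest $\priormean$ plus one complementary signal, and then read off $q_1^\ast$ (resp.\ $q_2^\ast$) as the binding one of the majorization constraint~\eqref{eqn:MPC} at the first breakpoint (giving $\bar f$) and the mean identity with the boundedness of the state (giving $\frac{M-\priormean}{M-\hil{\numintervals}}$, resp.\ $\frac{\lol{1}-\priormean}{\lol{1}}$). Your collapse step is mechanically different from the paper's: you merge good signals and then shift the merged mean to $\hil{\numintervals}$, whereas the paper's Claim~2 transfers probability mass among signals $j$, $\numintervals$, $\numintervals+1$ while keeping all posterior means fixed and invokes Theorem~3.A.7 of \cite{shaked_univariate_2007} to preserve $H\succcurlyeq\dist$. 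Your version is workable (a merge is a contraction, and in $\mathrm{R2}$ raising the good mean while lowering the bad mean is again a contraction), but as you note the merged mean can land in a gap between intervals, so the two operations must be performed in the right order and the compensation on the bad signal checked; the paper's fixed-mean perturbation avoids this bookkeeping.

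The genuine gap is the last step, which is also the content of the theorem's headline claim: you never verify that the \emph{monotone partitional} mechanism with threshold $t_1=F^{-1}(q_1^\ast)$ actually implements the optimal direct mechanism, i.e.\ that $\E_F[\param\mid\param\le F^{-1}(q_1^\ast)]$ lies in $\bar\Theta_{\numintervals}$ (and symmetrically for $\mathrm{R3}$). This is immediate from the definition of $\bar f$ only when $q_1^\ast=\bar f(\hil{\numintervals})$ binds; when the other term of the $\min$ were active the conditional mean of the lower block would be strictly below $\hil{\numintervals}$ and could in principle fall below $\lol{\numintervals}$, so ``by construction of $\bar f$ and the mean identity'' does not close the argument. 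The paper handles exactly this point with its Claim~3: a tangency argument showing that $g(x)=\int_0^x\mathcal G(t)\,dt$ must touch $g^\circ(x)=\int_0^xF(t)\,dt$ at some $s\in[\hil{\numintervals},\posterior{2}^\ast]$ (otherwise $q_1^\ast$ could be strictly improved, contradicting optimality), and the tangency point is precisely the partition threshold $F^{-1}(q_1^\ast)$. Relatedly, your phrase ``pushing $\posterior{2}$ to $M$'' is fine as a device for deriving the upper bound $q_1\le\frac{M-\priormean}{M-\hil{\numintervals}}$, but the two-point distribution with an atom at $M$ is generally not itself a mean-preserving contraction of a non-degenerate $F$, so it should not be presented as the optimal mechanism; the paper instead sets $\posterior{2}^\ast=\frac{\priormean-q_1^\ast\hil{\numintervals}}{1-q_1^\ast}$.
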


 \begin{figure}[h!]
    \centering
\includegraphics[width=140mm]{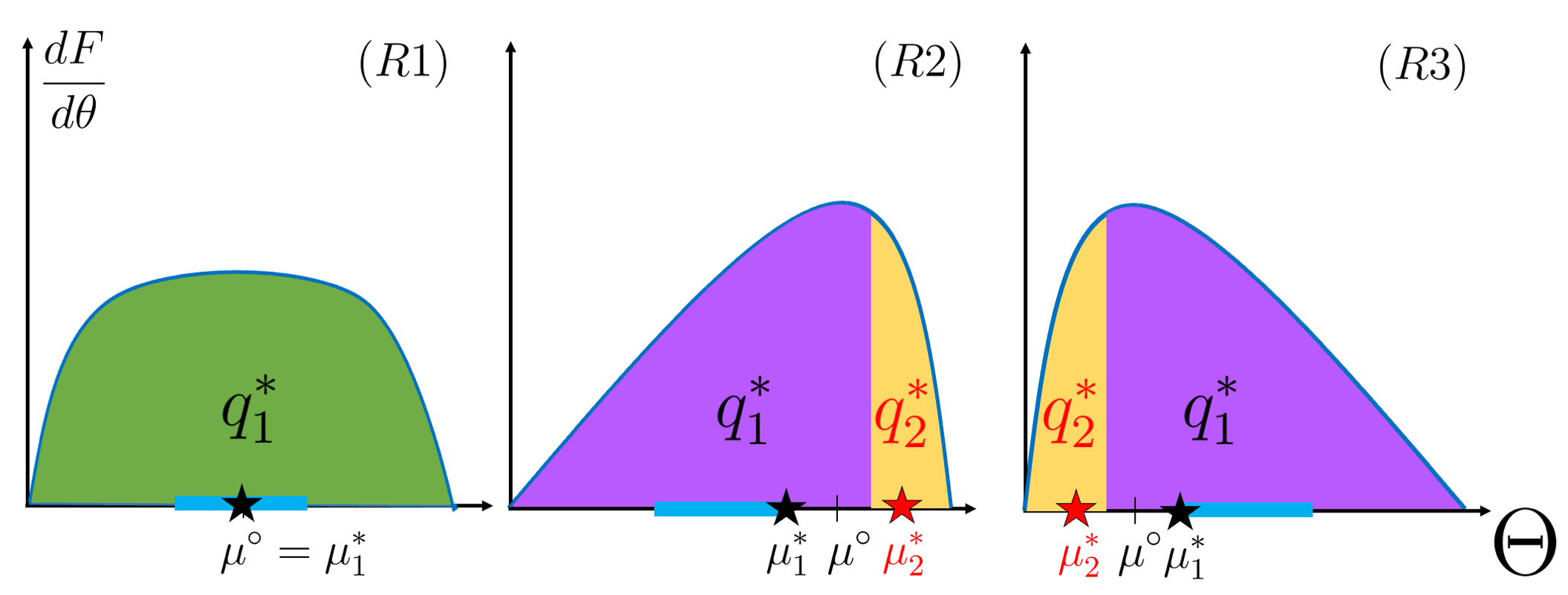}
    \caption{The probability density function for $\trueparam$, the location of prior mean~$\priormean$, and the intervals $[\lol{k},\hil{k}]$ (blue) for regimes~$(\mathrm{R1}-\mathrm{R3})$; for simplicity~$\numintervals=1$. For optimal mechanism given by~Theorem~\ref{thm:r123}, the signal probabilities are determined by the probability mass of colored regions~(green, violet, yellow) and the locations of corresponding posterior means are marked as $\star$.}
    \label{fig:r123}
\end{figure}
\noindent We provide the proof of this result in Appendix \ref{appendix:proofs_sec_iii}. The structure of optimal mechanism for each of the regimes~$(\mathrm{R1}-\mathrm{R3})$ is depicted in Fig.~\ref{fig:r123}. In $\mathrm{R1}$, the optimal mechanism is~$\pi_{\mathrm{NI}}$. It maps the entire probability mass over~$\paramspace$ to a single signal, and induced posterior mean is same as the prior mean. In regimes $\mathrm{R2}$ and $\mathrm{R3}$, the optimal mechanism partitions the  state-space~$\paramspace$ into two intervals which can understood as ``low'' and ``high'' parameter ranges, and each interval corresponds to a unique signal. The locations of posterior means induced by the two signals relative to the prior mean are also illustrated in the figure, along with the corresponding signal probabilities. These signal probabilities and posterior means define the optimal direct mechanism that is implemented by~the $\pi_{F,\obj}^\ast$ given in~Theorem~\ref{thm:r123}.

\noindent While optimal, we also demonstrate the derived optimal mechanism manifests a nontrivial practical improvement non-informative and fully-informative benchmarks. We substantiate these improvements from optimal signaling with numerical experiments described in further detail in Section \ref{subsec:numerical_state_indpt}.

\section{State-dependent Preferences}
\label{sec:stateful}

In this section, we consider state-dependent preferences; i.e., settings where the planner's preference is modeled by~$\obj(\fin;\trueparam)$  which is her utility for an equilibrium remote mass~$\fin$ and the state is $\trueparam$. In particular, we seek to solve the design problem~\eqref{eqn:opt_v} for preference models $\obj$ that are \textit{Lipschitz continuous} where the
% planner's utility is Lipschitz in both~$\fin$ and~$\trueparam$.
% We cover two forms of preference models: (i) a scaled-capacity preference model where the planner seeks outcomes that satisfy capacity limits that become progressively restrictive in the state; and (ii) a Lipschitz continuous  model where the
planner's utility is Lipschitz in both~$\fin$ and~$\trueparam$.

\noindent The information design problem for these settings becomes more challenging in comparison to Sec. \ref{sec:static_id} due to the fact that the set of preferred equilibrium outcomes depends on the true state. Hence, it is no longer sufficient to characterize the optimal mechanism by analyzing the locations of induced posterior means~$\posterior{\signal}$ within the state-space~$\paramspace$. 

\noindent In fact, one can show MPS mechanisms are not necessarily optimal as the optimal signaling mechanism may ``pool" disparate intervals of the state space $\paramspace$ to the same signal (see Section \ref{ex:statefulpool} in supplementary materials). Lacking simple structural guarantees, this creates tremendous difficulty in the analytical characterization of optimal signaling for state-dependent preferences.  Hence, we adopt a computational approach to design approximately optimal signaling mechanisms, while still maintaining an interval-based structure. The approach entails discretizing the continuous distribution $\dist$ in order to limit the number of states $\param$ for which we need to consider preferences~$\obj(\cdot;\param)$ over remote agent mass. This allows us to compute an optimal solution under this discretization using a linear programming (LP) formulation. We subsequently use this LP-based design to provide approximately optimal solutions for Lipschitz continuous preference models.

\subsection{LP-based Design for Discretized Problem}

We discretize $\dist$ over the continuous space $\paramspace$ into a discrete distribution $\dist_\delta$ by taking a uniform partition of $\paramspace$ with $N \coloneqq M\delta$ intervals $[\frac{j-1}{\delta},\frac{j}{\delta})$ each of length $\frac{1}{\delta}$ and assigning all the probability mass in the interval to the minimum of that interval $\nu_j \coloneqq \frac{j-1}{\delta}$, giving us the distribution~$\dist_\delta$. 
% Subsequently, we will show that when an optimal design is achieved for this particular distribution, a signaling mechanism for the continuous distribution $\dist$ can be mapped from the discrete solution without much loss of planner's utility. 

%We also define the set $\nu_j$ by $\discpspace \coloneqq \{\nu_1, ...,\nu_N\}$.

% Why is DEFINITION environment not rendering here? 
\begin{definition}
\label{defn:discretized_dist}
The \textit{$\delta$-discretization} $\dist_\delta$ of any continuous distribution $\dist$ is a discrete probability distribution over $\nu_j$ by $\discpspace \coloneqq \{\nu_1, ...,\nu_N\}$ such that $\hat{\theta} \sim \dist_\delta$ has $\mathbb{P}[\hat{\theta} = \nu_j] \coloneqq p_{j} = \dist(\frac{j}{\delta})-F(\frac{j-1}{\delta})$.
\end{definition}

\noindent Observe that $\mathbb{E}_{\trueparam \sim \dist}[\trueparam] \geq \mathbb{E}_{\nu \sim \dist_\delta}[\nu]$ since the probability mass of each interval is shifted towards the minimum of the interval.

\noindent For a given $\fin\in[0,1]$, the function $\obj(\fin;\nu_j)$ can be evaluated for each $\nu_j\in\discpspace$. To computationally obtain an optimal design for the discretized setting, we further consider that for all $\nu_j$, the function $\obj(\cdot;\nu_j)$ is evaluated at pre-specified discrete number of points $y_0 \coloneqq 0  < y_1 < .. < y_{K-1} < y_{K}\coloneqq 1$ ($K \in \mathbb{N}$). Additionally, for all $\nu_j \in\discpspace$ and $\fin \in [y_{k-1},y_k]$, we take $\obj(\fin;\nu_j) \approx c_{jk}$, where $ c_{jk}$ for $j=1,\dots,N$ and $k=1,\dots,K$ are values corresponding to a piecewise-constant approximation of $\obj$. For convenience, we let $\mathbf{y}=\left(y_0, \dots, y_K\right)$ and $\mathbf{c}=\left(c_{jk}\right)\in\R^{N\times K}$. We are ready to state the LP-based design for the discretized setting when $\trueparam\sim \discpspace$ and $\obj(\fin;\nu)$ is piecewise-constant in $y$ (see \ref{sec:proofs_for_sec_4} for proof).  

\begin{lemma}
\label{lemma:lp_discrete}
An optimal design $\pi_{H,\obj}^\ast = \langle \I, \{z_{\param}\}_{\param \in \discpspace} \rangle$ for the discretized setting where $H$ is discrete and $h$ is piecewise-constant can be constructed from an optimal solution $\{z_{ji}^\ast\}$ of the following linear program by choosing $\I = [K]$ and, for all $\signal \in \I$ and $j \in [N]$ setting $z_{\nu_j}(\signal) = 0$ if $p_j = 0$ and $z_{\nu_j}(\signal) = \frac{z_{ji}^\ast}{p_j}$ otherwise.\\
% \begin{tagblock}
    \begin{equation*}
\begin{array}{ll@{}ll}
\text{maximize}  & \displaystyle&\sum_{j=1}^N \sum_{i=1}^{K} c_{ji} z_{ji} &\\
\text{subject to}& \displaystyle&\sum_{i=1}^{N+1}  z_{ji} = p_j,  &j=1 ,\dots, N\\
   &\displaystyle &z_{ji} \geq 0,\hspace{0.5cm} &j=1 ,\dots, N,\quad i=1 ,\dots, K \\
    &\displaystyle m^{-1}(y_{i-1})&\sum_{j=1}^N z_{ji} \leq  \sum_{j=1}^N \nu_j z_{ji}, &i=1 ,\dots, K \\
    &\displaystyle&\sum_{j=1}^N \nu_j z_{ji} \leq m^{-1}(y_i)\sum_{j=1}^N z_{ji}, &i=1 ,\dots, K
\end{array}
\end{equation*}
% \end{tagblock}
\end{lemma}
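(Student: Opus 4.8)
The plan is to establish a bijection-style correspondence between feasible signaling mechanisms on the discretized state space $\discpspace$ (restricted, as in the state-independent case, to at most $K$ signals indexed by the occupancy intervals) and feasible points of the linear program, in such a way that the planner's objective value is preserved. The key observation is that on a discrete state space with masses $p_j$ at atoms $\nu_j$, a signaling mechanism $\langle \I, \{z_\param\}\rangle$ with $\I = [K]$ is completely described by the numbers $z_{ji} = p_j\, z_{\nu_j}(i) \geq 0$, which must satisfy $\sum_{i=1}^K z_{ji} = p_j$ for each $j$ (every atom's mass is fully allocated across signals). This is exactly the first two blocks of constraints in the LP (the index $N+1$ in the displayed sum is a harmless typo for $K$, matching the number of signals $\I = [K]$). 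So the translation of ``mechanism'' into ``LP variables'' is immediate; the content is in the remaining two blocks of constraints and in the objective.

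Next I would handle the objective. By Proposition~\ref{prop:eqbm_formula_prop_1}, under signal $i$ the equilibrium remote mass is $\final(i) = m(\posterior{i})$, where $\posterior{i} = \bigl(\sum_j \nu_j z_{ji}\bigr)/\bigl(\sum_j z_{ji}\bigr)$ is the posterior mean (equation~\eqref{eqn:theta_i} specialized to the discrete distribution), and $q_i = \sum_j z_{ji}$ is the signal probability. The planner's expected utility~\eqref{eqn:gen_objective_stateful} is $\sum_i \sum_j z_{ji}\, \obj(m(\posterior{i}); \nu_j)$. The crucial step is that, because $\obj(\cdot;\nu_j)$ is piecewise constant with value $c_{ji}$ on $[y_{i-1}, y_i]$, whenever $\final(i) \in [y_{i-1}, y_i]$ the contribution of signal $i$ and atom $j$ is exactly $z_{ji}\, c_{ji}$, giving the LP objective $\sum_j\sum_i c_{ji} z_{ji}$. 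This requires that signal $i$ be the one whose induced occupancy falls in the $i$-th bin $[y_{i-1}, y_i]$ — i.e. we are re-labeling signals by occupancy bin, exactly as Lemma~\ref{lemma:num_signals_bdd} did in the set-based case. The last two constraint blocks enforce precisely this: $\final(i) = m(\posterior{i}) \in [y_{i-1}, y_i]$ iff $\posterior{i} \in [m^{-1}(y_{i-1}), m^{-1}(y_i)]$ (using monotonicity and continuity of $m$ from Lemma~\ref{lemma:m_smooth}, so that $m^{-1}$ of a point is well-defined as an interval endpoint), and clearing the denominator $q_i = \sum_j z_{ji} \geq 0$ turns $m^{-1}(y_{i-1}) \leq \posterior{i} \leq m^{-1}(y_i)$ into the two linear inequalities $m^{-1}(y_{i-1})\sum_j z_{ji} \leq \sum_j \nu_j z_{ji} \leq m^{-1}(y_i)\sum_j z_{ji}$.

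The two directions of the proof are then: (i) \emph{LP feasible $\Rightarrow$ mechanism}: given $\{z_{ji}\}$, define $z_{\nu_j}(i) = z_{ji}/p_j$ when $p_j > 0$ (and arbitrarily, say $z_{\nu_j}(i) = \indicator{i=1}$, when $p_j = 0$, since such atoms contribute nothing); the row constraints make each $z_{\nu_j}$ a probability distribution over $\I = [K]$, so this is a valid mechanism. Its posterior means satisfy the bin-membership property by the last two constraint blocks, so its objective value equals $\sum_j\sum_i c_{ji} z_{ji}$. One subtlety: a signal $i$ with $q_i = 0$ has an undefined posterior mean, but then both sides of its constraints and its objective contribution vanish, so it can be dropped harmlessly. (ii) \emph{Mechanism $\Rightarrow$ LP feasible}: given any mechanism, first invoke the discrete analogue of Lemma~\ref{lemma:num_signals_bdd} to reduce to at most $K$ signals, one per occupancy bin — this is the step that needs a short argument rather than a citation, since the earlier lemma was stated for the set-based preference; the argument is the standard one that merging two signals whose induced occupancies lie in the same bin $[y_{i-1},y_i]$ does not change any $c_{ji}$-weighted contribution (the posterior mean of the merged signal is a convex combination, hence still in the same bin by convexity of the interval, and $m$ maps it back into $[y_{i-1},y_i]$). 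Then set $z_{ji} = p_j z_{\nu_j}(i)$; feasibility of all four constraint blocks is immediate from the definitions of $q_i$, $\posterior{i}$, and bin membership. Combining (i) and (ii), the LP optimum equals $V_{H,\obj}^*$ and the stated reconstruction recovers an optimizer.

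I expect the main obstacle to be the careful bookkeeping around degenerate signals and zero-probability atoms — ensuring that the map from LP solutions to mechanisms is well-defined and objective-preserving even when some $q_i = 0$ or some $p_j = 0$ — together with writing out the discrete version of the ``at most $K$ signals, one per bin'' reduction cleanly, since unlike in Section~\ref{sec:static_id} the relevant intervals here are occupancy bins $[y_{i-1},y_i]$ rather than preimages of a target set, and one must check that $m^{-1}(y_{i-1})$ and $m^{-1}(y_i)$ are genuine well-defined scalars (which follows from $m$ being non-decreasing and continuous, though $m$ may be constant on subintervals, in which case $m^{-1}$ should be read as, e.g., the infimum of the preimage — a convention that must be fixed and shown not to affect feasibility). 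Everything else is a routine unwinding of the definitions~\eqref{eqn:q_i}--\eqref{eqn:theta_i} and~\eqref{eqn:gen_objective_stateful} specialized to $\dist_\delta$.
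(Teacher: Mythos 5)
Your proposal is correct and follows essentially the same route as the paper's proof: reparameterize via $z_{ji}=p_j z_{\nu_j}(i)$, consolidate signals whose posterior means fall in the same bin $[m^{-1}(y_{i-1}),m^{-1}(y_i))$ (the discrete analogue of Lemma~\ref{lemma:num_signals_bdd}), and clear the denominator $q_i=\sum_j z_{ji}$ to linearize the posterior-mean constraints. You additionally spell out the degenerate cases ($p_j=0$, $q_i=0$, the $N+1$ vs.\ $K$ typo, and the convention for $m^{-1}$) that the paper's terser proof leaves implicit, but the underlying argument is identical.
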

\noindent The above linear program, denoted \textbf{LP($H,\mathbf{y}, \mathbf{c}$)}, has $NK$ variables $z_{ji}$. We can conclude that the time complexity of \textbf{LP}($\dist_\delta,\mathbf{y}, \mathbf{c}$) is $O(N^{2.5}K^{2.5})$, i.e. polynomial in the number of partitions ($K$) to represent the piecewise approximation of $\obj(\cdot;\nu)$ and the size ($N$) of the support for the discretized distribution~$\dist_\delta$ (\cite{vaidya_speeding-up_1989}). Hence, we can identify optimal signaling mechanisms for discretized objectives and discretized distributions over state with efficient computation time.

\noindent To use the above linear program, we require $h$ to be piecewise-constant. To implement this, we approximate our Lipschitz objective $\obj(\cdot, \param)$ for each value of $\param \in \paramspace$ by a piecewise constant function $\obj_\tau$, where $\tau$ is another discretization parameter. Precisely, for a fixed $\tau$, we create uniform intervals of length $\frac{1}{\tau}$ over $[0,1]$ with the discretized function taking a constant value over each interval equal to the average of the minimum and maximum over this interval. 
\begin{definition}
The \textit{$\tau$-discretization} $\obj_\tau$ of any continuous function $\obj:[0,1]\times\param\rightarrow \R$ is a piecewise constant function such that for all $k \in [\tau], \param\in\paramspace,x \in[\frac{k-1}{\tau},\frac{k}{\tau})$:
\begin{align*}
h_\tau(x,\param) \coloneqq h(\frac{2k-1}{2\tau},\param)
    % h_\tau(x,\param) = \frac{\min_{x\in[\frac{k-1}{\tau},\frac{k}{\tau}]}h(x,\param)+\max_{x\in[\frac{k-1}{\tau},\frac{k}{\tau}]}h(x,\param)}{2} 
\end{align*}
\end{definition}
\noindent Consequently, $\obj_\tau$ is piecewise-constant over each interval $[\frac{k-1}{\tau},\frac{k}{\tau})$ for all $k \in [\tau]$.

\subsection{$\epsilon$-optimal Design}
\label{subsec:epsilon_optimal_design}
To extend the solution obtained from Lemma~\ref{lemma:lp_discrete} for the discrete distribution $\dist_\delta$ to the original continuous distribution~$\dist$, we introduce the notion of an \textit{$\epsilon$-optimal} signaling mechanism where $\epsilon > 0$ bounds the suboptimality gap of the mechanism.

\begin{definition}
\label{defn:eps_approx}
A mechanism $\hat{\pi}$ is $\epsilon$-optimal for a problem instance \eqref{eqn:gen_objective_stateful} defined by distribution $\dist$ over $\paramspace$ and utility function $\obj$ if $|V_{\dist,\obj}(\hat{\pi}) - V_{\dist,\obj}(\pi^\ast)| \leq \epsilon$.
\end{definition}

\noindent Thus, the $\epsilon$-optimal signaling mechanism $\hat{\pi}$ must be \textit{close} to $\pi^\ast$ when evaluated according to the planner's objective {in expectation} with the true prior distribution $\dist$. However, the tuples $\pi^\ast$ and $\hat{\pi}$ themselves need not be ``close'' and cannot be compared in a straightforward manner. 

\noindent We utilize~Lemma~\ref{lemma:lp_discrete} to develop an $\epsilon$-optimal signalling mechanism as follows. We first solve for $\pi_{\dist_\delta,\obj_\tau}^\ast = \langle \I_\delta , \{z^\delta_{\nu_j}\}_{\nu_j \in \paramspace_\delta}\rangle$ using the linear program in Lemma \ref{lemma:lp_discrete} as both $\dist_\delta$ is discrete and $\obj_\tau$ is piecewise-constant. We then adapt $\pi_{\dist_\delta,\obj_\tau}^\ast$ to a continuous signaling mechanism $\hat{\pi}_{\dist_\delta,\obj_\tau}~\coloneqq~\langle \I_\delta , \{\hat{z}_\param\}_{\param \in \paramspace}\rangle$ such that for all $j \in [N], \param \in [\nu_{j-1},\nu_j)$, and $\signal \in \I_\delta$, we have $\hat{z}_\param(\signal) \coloneqq z^\delta_{\nu_j}(\signal)$. We prove that, subject to regularity on the distribution of agents' value of in-person work\footnote{This condition is justified by the fact that highly concentrated distribution $G$ can lead to a high sensitivity of equilibrium mass of remote agents to the induced posterior means, thus making approximation difficult.} and the Lipschitz continuity of $\obj$, this solution $\hat{\pi}_{\dist_\delta,\obj_\tau}$ is $\epsilon$-optimal and hence achieves an objective $\epsilon$-close to that of $\pi_{\dist,\obj}^\ast$ (details on the proof are deferred to Sec. \ref{sec:proofs_for_sec_4}). 
\begin{theorem}\label{thm:stateful}
Let $\obj(\fin; \trueparam)$ be uniformly $\eta_{1}$-Lipschitz for all $\trueparam \in \paramspace$ and uniformly $\eta_{2}$-Lipschitz for all $\fin \in [0,1]$. Then, if $\popdist$ is continuously differentiable with $0 < \frac{d\popdist}{dv} \leq \kappa$, the signaling mechanism $\hat{\pi}_{\dist_\delta,\obj_\tau}$ constructed from $\pi_{\dist_\delta,\obj_\tau}^\ast$ obtained by solving \textbf{LP}$(F_\delta, \mathbf{y},\mathbf{c})$ where $y_i = \frac{2i-1}{2\tau}$ and $c_{jk} = h_\tau(y_k;\nu_j)$ (Lemma \ref{lemma:lp_discrete}) is $\epsilon$-optimal for $\delta > \frac{8\eta_{2} +8C\eta_{1} \kappa}{\epsilon}$, $\tau > \frac{4\eta_{1}}{\epsilon}$. 
\end{theorem}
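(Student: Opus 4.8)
The plan is to show the (nonnegative) suboptimality gap $V_{\dist,\obj}(\pi_{\dist,\obj}^\ast) - V_{\dist,\obj}(\hat\pi_{\dist_\delta,\obj_\tau})$ is at most $\epsilon$ by chaining four mechanisms,
\[
\pi_{\dist,\obj}^\ast \;\longrightarrow\; \pi' \;\longrightarrow\; \pi_{\dist_\delta,\obj_\tau}^\ast \;\longrightarrow\; \hat\pi_{\dist_\delta,\obj_\tau},
\]
where each arrow is a ``transfer'' that moves the relevant expected objective by a controlled amount and the middle arrow is free by the optimality of the LP solution (Lemma~\ref{lemma:lp_discrete}). Here $\pi'$ is an auxiliary mechanism, feasible for the discretized prior $\dist_\delta$, obtained from an optimal continuous-instance mechanism by rounding the realized state down to the grid $\discpspace$. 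I also use that an optimal continuous-instance mechanism $\pi_{\dist,\obj}^\ast$ exists, which follows from compactness of the space of Bayes-plausible mechanisms (joint laws of $(\trueparam,\posterior{\signal})$) together with continuity of the objective for bounded continuous $\obj$ and continuous $m(\cdot)$ (Lemma~\ref{lemma:m_smooth}).

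Two quantitative ingredients drive the estimates. First, a Lipschitz modulus for the equilibrium map: using Prop.~\ref{prop:eqbm_formula_prop_1} together with the hypotheses $0<\tfrac{d\popdist}{dv}\le\kappa$ (so $\popdist^{-1}$ has slope at least $1/\kappa$) and $c_1\le C$, a perturbation of the defining relation $\popdist^{-1}(m(\mu))=c_1(m(\mu))\mu+c_2(m(\mu))$ --- in which $c_1(u)\mu+c_2(u)$ is nonincreasing in $u$ --- shows that $m(\cdot)$ is $\kappa C$-Lipschitz, sharpening Lemma~\ref{lemma:m_smooth}. Second, $\lVert\obj-\obj_\tau\rVert_\infty\le\tfrac{\eta_1}{2\tau}$, immediate from the definition of the $\tau$-discretization and uniform $\eta_1$-Lipschitzness in $\fin$; hence $\lvert V_{\dist_\delta,\obj}(\pi)-V_{\dist_\delta,\obj_\tau}(\pi)\rvert\le\tfrac{\eta_1}{2\tau}$ for every $\pi$.

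The first transfer fixes $\pi_{\dist,\obj}^\ast$, viewed as a joint law of the state $\trueparam^\ast$ and a signal, and replaces $\trueparam^\ast$ by $\nu:=\nu_j$ whenever $\trueparam^\ast\in[\tfrac{j-1}{\delta},\tfrac{j}{\delta})$; the resulting mechanism $\pi'$ has state-marginal exactly $\dist_\delta$ (the one genuinely structural point: rounding down automatically reproduces $\dist_\delta$, so $\pi'$ is feasible for $\dist_\delta$), each signal's posterior mean drops by at most $\tfrac1\delta$, and the state drops by at most $\tfrac1\delta$; combining $\eta_2$-Lipschitzness in $\trueparam$, $\eta_1$-Lipschitzness in $\fin$, and the $\kappa C$-bound on $m$ gives $\lvert V_{\dist_\delta,\obj}(\pi')-V_{\dist,\obj}(\pi_{\dist,\obj}^\ast)\rvert\le\tfrac{\eta_2+C\eta_1\kappa}{\delta}$. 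The third transfer is the construction of $\hat\pi_{\dist_\delta,\obj_\tau}$ from $\pi_{\dist_\delta,\obj_\tau}^\ast$ by reusing, for every $\trueparam$ in the $j$-th grid interval, the signal distribution assigned to $\nu_j$: since $\dist$ puts mass $p_j$ on that interval, each signal keeps its probability while its posterior mean now rises by at most $\tfrac1\delta$ and the state rises by at most $\tfrac1\delta$, so by the same facts $\lvert V_{\dist,\obj}(\hat\pi_{\dist_\delta,\obj_\tau})-V_{\dist_\delta,\obj}(\pi_{\dist_\delta,\obj_\tau}^\ast)\rvert\le\tfrac{\eta_2+C\eta_1\kappa}{\delta}$. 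Chaining these two bounds with two applications of $\lVert\obj-\obj_\tau\rVert_\infty\le\tfrac{\eta_1}{2\tau}$ and with $V_{\dist_\delta,\obj_\tau}(\pi_{\dist_\delta,\obj_\tau}^\ast)\ge V_{\dist_\delta,\obj_\tau}(\pi')$ (LP optimality, Lemma~\ref{lemma:lp_discrete}) yields
\[
V_{\dist,\obj}(\pi_{\dist,\obj}^\ast)-V_{\dist,\obj}(\hat\pi_{\dist_\delta,\obj_\tau})\;\le\;\frac{2(\eta_2+C\eta_1\kappa)}{\delta}+\frac{\eta_1}{\tau},
\]
which is below $\epsilon$ (in fact below $\tfrac{\epsilon}{2}$) for $\delta>\tfrac{8\eta_2+8C\eta_1\kappa}{\epsilon}$ and $\tau>\tfrac{4\eta_1}{\epsilon}$.

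I expect the main obstacle to be the first ingredient --- extracting a clean, dimensionally correct Lipschitz constant for the equilibrium selection $m(\cdot)$ from the primitive regularity of $\popdist$ --- since $m$ is only defined implicitly (Prop.~\ref{prop:eqbm_formula_prop_1}), may be locally constant (so that the $m^{-1}$ appearing in the LP must be read as a measurable selection), and has boundary regimes $m(\mu)\in\{0,1\}$ that need separate treatment; this is exactly where the factor $C\kappa$ in the bound on $\delta$ comes from. The rest is careful but routine bookkeeping of the two state-rounding transfers and of the objective discretization.
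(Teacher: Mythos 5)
Your proposal is correct and follows essentially the same route as the paper's proof: the same four-mechanism chain (the paper's $\pi_{\dist,\obj}'$ in~\eqref{eqn:z'}, defined by averaging the signal kernels over each grid cell, is exactly your ``round the realized state down'' coupling), the same $C\kappa$-Lipschitz bound on $m(\cdot)$ (Lemma~\ref{lemma:m_lipschitz}), the same $\tfrac{1}{\delta}$ control on posterior-mean shifts (Lemmas~\ref{lemma:posteriors_close_I}--\ref{lemma:posteriors_close_I_delta}), and the same per-transfer bound $\tfrac{\eta_2+C\kappa\eta_1}{\delta}$ combined with LP optimality and the $\obj$-versus-$\obj_\tau$ sup-norm error. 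Your accounting $\tfrac{2(\eta_2+C\eta_1\kappa)}{\delta}+\tfrac{\eta_1}{\tau}<\epsilon$ matches the paper's $\tfrac{\epsilon}{4}+\tfrac{\epsilon}{4}+\tfrac{\epsilon}{2}$ decomposition, and your added remarks on existence of $\pi_{\dist,\obj}^\ast$ and on measurable selection for $m^{-1}$ are minor extra rigor rather than a different argument.
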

\noindent Although closed-form solutions are not attainable in general for this class of preference models, the preceding theorem shows that asymptotically optimal approximations can be achieved through sufficient discretization. This implies that, for practical preference models that adhere to the Lipschitz condition by exhibiting moderate sensitivity to variations in infectiousness or mass, a straightforward procedure exists for identifying nearly optimal solutions in practice. As illustrated in~Sec.~\ref{subsec:numerical}, our computational approach to tackle this large class of models allows us to consider richer preferences in comparison to other works (e.g.~\cite{de_vericourt_informing_2021}).  Furthermore, by construction, the derived mechanisms use identical randomization over signals for each $\trueparam$ in each discretized interval (i.e. interval-based mechanisms). In practice, this interval-based feature of the computed mechanism has the advantage of greater interpretability when discretization is minimal.

\section{Computational Study}
\label{subsec:numerical}

In this section, we apply our computational approach to a class of operationally relevant planner preferences, and demonstrate that we asymptotically recover the optimal planner utility. 

\noindent In Section \ref{subsec:dv_comp}, we provide a detailed comparison of our method on the objectives $\obj_{\text{ref}(\lambda)}$ that are investigated in \cite{de_vericourt_informing_2021}:
\begin{align*}
\obj_{\text{ref}(\lambda)}(\fin;\trueparam) = \lambda \mathbb{E}_{v \sim G}[v\mathbb{I}\{v\geq G^{-1}(\fin)\}]- (1-\lambda) \trueparam (1-\fin)^2.
\end{align*} 
We demonstrate that our approach can recover their optimal closed-form solutions faster than the rate described in Theorem \ref{thm:stateful}. Todemonstrate that our approach is more general, we consider another class of preference for which there is no provably known optimal solution. Particularly, we allow $\dist$ to now be continuous. Moreover, while it is possible to characterize the optimal signaling mechanism in closed-form for specific state distributions (e.g., binary valued), in practice the planner may also want to ensure that the induced in-person mass is not too close to fully remote or full in-person work. This additional ``regularization'' becomes especially relevant for hybrid work settings in which the workplace facilities need to be used at moderate occupancy levels to contain the risk of transmission and yet maintain sufficient productivity levels. To reflect this preference, we modify the planner's utility function: 
\begin{align}
\label{eq:h_rho_eqn}
\obj_\rho(\fin;\trueparam) = \frac{1}{2}\big((1-\rho)(5(1-\fin^2) -  \trueparam (1-\fin)^2)\big)  + \rho \fin(1-\fin), 
\end{align}
where $\rho \fin(1-\fin)$ reflects the regularization term with parameter~$\rho$ and other terms are same as for~$\obj_{\text{ref}(0.5)}(\fin;\trueparam)$. Previously known results cannot be used to compute an optimal mechanism for such a preference model due to its complex dependence on posterior means. However, the model satisfies the conditions of Theorem~\ref{thm:stateful}; thus, we can design an $\epsilon$-optimal mechanism using the LP-based solution introduced in~Sec.~\ref{sec:stateful}. Furthermore, we can bound the approximation loss in terms of discretization parameters~($\delta, \tau$).

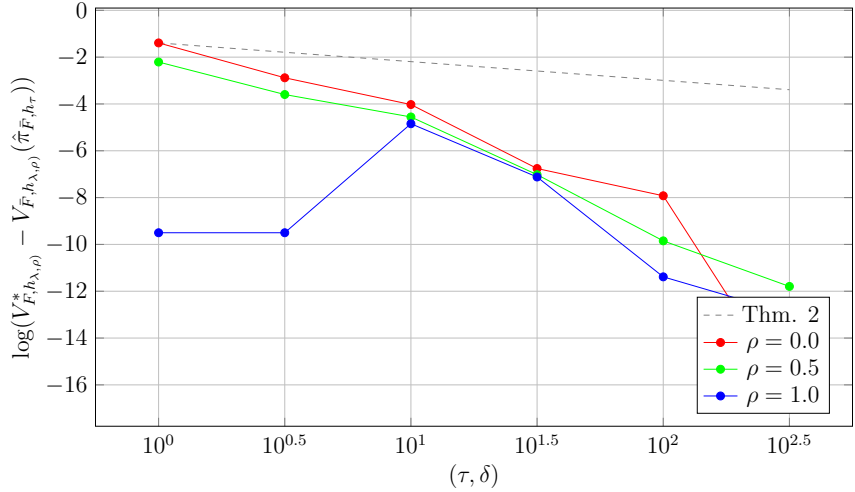
\begin{figure}[h!]

\centering
\begin{tikzpicture}[scale=0.75]
\begin{axis}[
% scaled y ticks=-12:0,
% ytick scale label code/.code={},
symbolic x coords={1,3,10,32,100,316},
xticklabels={$10^{0}$,$10^{0.5}$,$10^{1}$,$10^{1.5}$,$10^{2}$,$10^{2.5}$},
xtick=data,
height=9cm,
width=15cm,
grid=major,
xlabel={$(\tau,\delta)$},
ylabel={$\log(V_{F,h_{\lambda,\rho)}}^\ast - V_{\bar{F},h_{\lambda,\rho)}}(\hat{\pi}_{\bar{F},h_{\tau}}))$},
legend style={
cells={anchor=east},
legend pos=south east,
}
]
\addplot[-,dashed,gray] coordinates {
(1,-1.39188)  (3,-1.79188)  (10,-2.19188)  (32,-2.59188)  (100,-2.99188)  (316,-3.39188)};
\addplot[red,mark=*] coordinates {
(1,-1.39188)  
(3,-2.88172)  
(10,-4.02515)  
(32,-6.75781)  
(100,-7.92214)
(316,-16.27845)
};
\addplot[green,mark=*] coordinates {
(1,-2.20855)  
(3,-3.59715)  
(10,-4.55526)  
(32,-7.02078)  
(100,-9.84782)
(316,-11.79503) 
};
\addplot[blue,mark=*] coordinates {
(1,-9.50138)  
(3,-9.50127)  
(10,-4.84466)  
(32,-7.11988)  
(100,-11.38512) 
(316,-12.96203)
};

\legend{Thm. \ref{thm:stateful},$\rho = 0.0$,$\rho = 0.5$,$\rho = 1.0$}
\end{axis}
\end{tikzpicture}

\caption{Error of computed $\epsilon$-optimal solution $\hat{\pi}_{F_\delta,h_{\tau}}$ as discretization ($\delta,\tau$; assume equal) increases and the regularizations $\rho$ are varied; $\lambda = 0.5$.} \label{fig:continuous_error}
\end{figure} % tau = 10^5
\noindent We leverage our result in Theorem \ref{thm:stateful} and assume that the true optimal signaling mechanism $\pi_{F,h_\rho}^\ast$ is well approximated by choosing the approximate solution $\hat{\pi}_{F_\delta,h_\tau}$ for $\delta=\tau = 1000$. Figure~\ref{fig:continuous_error} shows how the planner's utility corresponding to our approximate solution compares against the computationally obtained optimal value for varying levels of discretization and choice of regularization parameter. Observe that the error of our computational solution reduces quickly to the limits of numerical precision and achieves much faster rate of convergence to~$0$, in comparison to the theoretically guaranteed rate of~$\frac{1}{\epsilon^5}$. Again, this can be explained by noting that our computed solution has an interval-based structure -- in particular, the agent distribution induced by our signaling mechanism is (close to) the outcome achieved by no- or full-information mechanisms, depending on the value of underlying state (see Fig. \ref{fig:y_theta_plot}). Since interval-based mechanisms are extreme points of the polytope containing all signaling mechanisms, our LP-based solution achieves a very fast convergence rate in traversing the extreme points of the polytope (\cite{bergemann_information_2019}). 

\noindent Finally, we can verify that the optimal design for the preference model~$\obj_\rho$ induces outcomes with progressively more moderate in-person agent mass as $\rho$ increases. In Figure \ref{fig:y_theta_plot}, we plot the joint distribution of the equilibrium mass and the state corresponding to the approximately optimal mechanism $\hat{\pi}_{F_\delta,h_\tau}$ with the $\epsilon$-optimality guarantee provided in Theorem \ref{thm:stateful} ($\delta = \tau = 1000$). From these plots we can determine which equilibrium masses are most frequent by considering the marginal distribution. These plots also reveal the structure of the signalling mechanism since we can determine which posterior means (through the equilibrium mass) are mapped from each state. We again consider $\obj_\rho$ from Equation \eqref{eq:h_rho_eqn} for values of $\rho \in [0.5,0.75,1]$ and qualitatively compare how the equilibrium behavior compares. We again choose $\popdist\sim Unif[0,6]$ and $\dist\sim Unif[0,10]$, and consider the product distribution over $(\trueparam,\fin_{\pi_{F,h}^\ast}(\signal))$. 
\begin{figure}[h!]
\centering
\includegraphics[width=125mm]{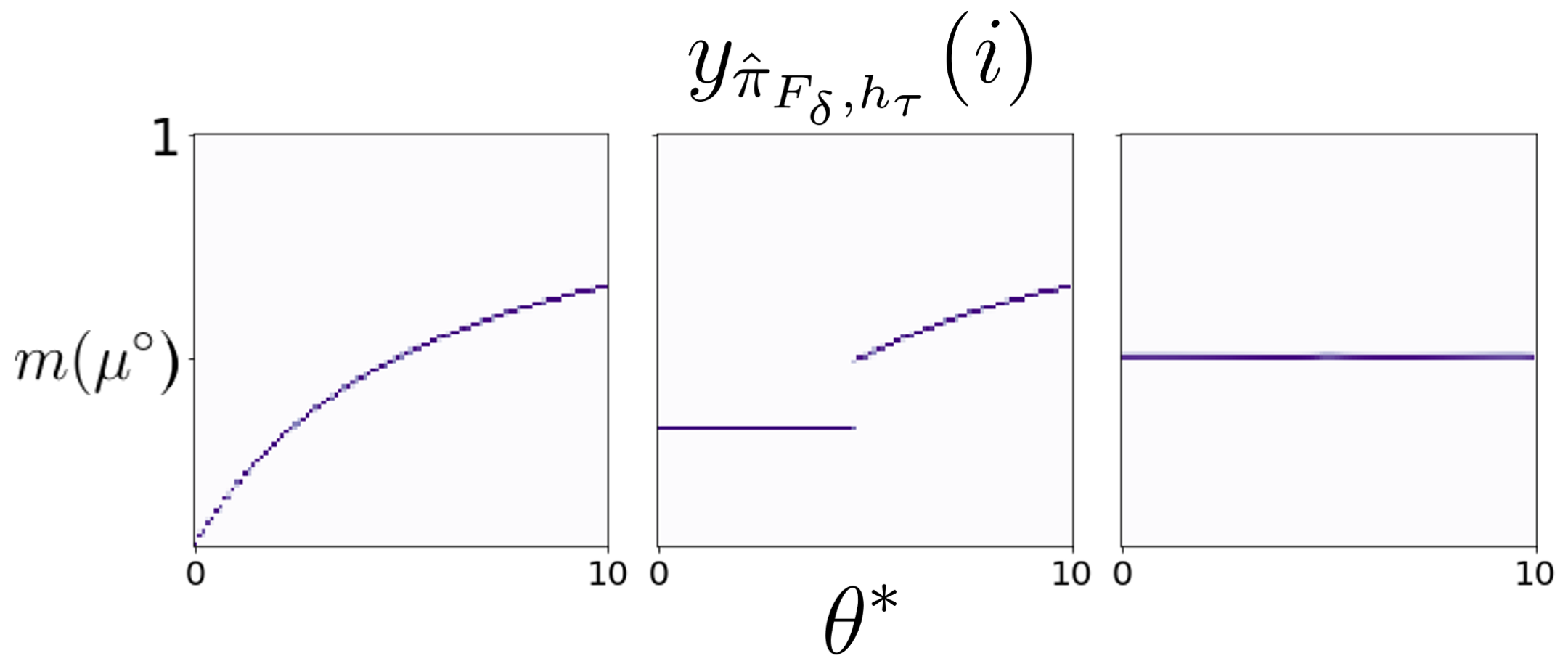}
\caption{Density plot over the joint distribution of $\big(\trueparam,\fin_{\pi_{\dist,\obj}^\ast}(\signal)\big)$ (higher density is in purple).\\ $\rho = 0.5 \text{ (left)}, 0.75\text{ (middle)}, 1\text{ (right)}$.}
\label{fig:y_theta_plot}
\end{figure}

\noindent In Fig. \ref{fig:y_theta_plot}, observe that for $\rho = 0.5$ (and for all $\rho < 0.5$ though not shown), all the probability mass lies on an approximately smooth curve that is identical to the structure of $m(\theta)$. This implies that, at optimality, in any given state $\trueparam$, the planner simply reveals the state and the equilibrium remote mass is $m(\trueparam)$. Hence, fully-informative signaling is optimal for $\rho \leq 0.5$.\\
\noindent We also observe that as $\rho$ increases to $1$ and the regularization term becomes more prominent, the planner strongly prefers moderate masses and seeks to avoid equilibrium masses close to 0 or 1. The prior mean belief induces a moderate mass ($m(\priormean) = \frac{5}{11}$), so the optimal mechanism shifts to the non-informative mechanism as that generates this same moderate posterior mean regardless of state. \\
Interestingly, we find that for intermediate values such as $\rho=0.75$, we obtain a mechanism that directly reveals the state for larger values when the function $m(\cdot)$ levels off, but elects to aggregate the states for smaller values of the state to a single signal. This mechanism can be thought of as a combination of the non-informative and fully-informative mechanism as agents learn whether or not $\trueparam$ exceeds a threshold and then are either directly revealed the state or revealed nothing further. This mechanism belongs to a class of mechanisms that either only reveals the interval containing the state or fully reveals the state on each interval for a partition of the state space. This class falls under a superset of interval-based mechanisms which captures the extreme points of the polytope containing all signaling mechanisms. In fact, such mechanisms find prominence in optimal designs and the related literature for simpler settings than we consider (\cite{ivanov_optimal_2015,guo_interval_2019}). This reinforces that computationally-obtained signalling mechanisms from Theorem \ref{thm:stateful} allow us to achieve near-optimal designs when analytical approaches to design becomes intractable. 
% We find the qualitative insights from Theorem \ref{thm:r123} as simple interval-based mechanisms are in fact a sufficient class of mechanisms to consider if seeking to achieve optimal design. 
Consequently, for many practical objectives, planners may need to consider mechanisms that do not have MPS to achieve good outcomes. 

\noindent While certain planner preferences recover results similar to that of~\cite{de_vericourt_informing_2021} where optimal mechanisms are based on no- or full-information signaling mechanisms, we also demonstrated examples where partially informative, interval-based signaling mechanisms are optimal (Fig. \ref{fig:y_theta_plot}). Our approach handles general forms of state uncertainty over a continuous domain, accommodates a richer class of planner preferences and identifies signalling mechanisms with a more complex structure. Our approach also provides an efficiently computable solution, with the approximation error converging to zero even with a coarse discretization level.

\section{Concluding Remarks}
\noindent In this paper, we introduced a model to study information provision for strategic hybrid workers. The central planner seeks to control the mass of in-person workers across each group in the equilibrium outcome. Our model captures two key features: (a) a general objective that aims to maximize the probability that the equilibrium outcome lies in a particular set which may or may not be state-dependent; (b) heterogeneous workers making strategic decisions to trade-off in-person work and infectious risk. We provided a complete description of the equilibria of the game in response to the signals and derived the optimal signaling mechanism that the planner can employ.

\noindent For settings with more complex objectives, we derived algorithms that compute $\epsilon$-optimal signaling information disclosure rules. These analytic and numerical insights suggest that simple information disclosure rules using interval-based disclosure strategies, which are more easily implementable in practice, are sufficient to achieve near-optimal or optimal outcomes. While it is not always possible to codify exact functional representations of the objectives or functions in practice, these insights can inform how planners should strategically disclose information.

\noindent These results provide valuable guidelines for the design and deployment of signaling mechanisms, especially as hard intervention measures are being phased down by public health agencies.
\bibliographystyle{informs2014} 
\bibliography{ref}
\appendix
\section{Appendix}
% \setcounter{lemma}{0}
% \renewcommand{\thelemma}{A\arabic{lemma}}
% \setcounter{proposition}{0}
% \renewcommand{\theproposition}{A\arabic{proposition}}
% \setcounter{proposition}{0}
% \renewcommand{\thetheorem}{A\arabic{theorem}}
% \setcounter{figure}{0}
% \renewcommand{\thefigure}{A\arabic{lemma}}
%!TEX root = arxiv_version_final.tex
\subsection{Proofs for Sec. \ref{sec:static_id}}
\label{appendix:proofs_sec_iii}
\subsubsection*{Proof of Lemma \ref{lemma:num_signals_bdd}}
For any optimal signaling mechanism $\pi^\ast = \langle \I, \{z_\param\}_{\param\in\paramspace}\rangle$, consider its direct mechanism which takes the form
$\mathcal{T}_\pi^{\ast} = \{(q_\signal,\posterior{\signal})\}_{\signal \in \I}$. Suppose that for any $k \in [K]$ there exists $i,j\in\I$ such that $\posterior{i},\posterior{j} \in \bar{\paramspace}_k$. Then, replacing the two signals $i$ and $j$ with a single signal $ij$, consider $\pi'=\langle \I\cup\{ij\}\setminus\{i,j\}, \{z_\param\}_{\param\in\paramspace}\rangle$ where $z_\param(s) = z_\param(s)$ for all $s \in \I\setminus\{i,j\}, \param \in \paramspace$, and $z_\param(ij) = z_\param(i) + z_\param(j)$ for all $\param \in \paramspace$. Let $\posterior{i}\leq\posterior{j}$ without loss of generality. Observe that $q_s'$ and $\posterior{s}'$ are unchanged for all $s\in\I$ such that $s\neq i,j$. Moreover, from \eqref{eqn:q_i} and \eqref{eqn:theta_i}, observe that $q_{ij}' = q_i+q_j$ and $\posterior{ij}' = \frac{q_i}{q_i+q_j}\posterior{i}+\frac{q_j}{q_i+q_j}\posterior{j}$ is a weighted average of $\posterior{i}$ and $\posterior{j}$ so $\posterior{i}\leq\posterior{ij}'\leq\posterior{j}$ and $\posterior{ij}'\in\bar{\paramspace}_k$. Hence, from \eqref{eqn:red_1_form}, observe that $\mathcal{T}_{\pi'}$ achieves the same objective as $\mathcal{T}_\pi^{\ast}$ and therefore must also be optimal. Likewise, if $\posterior{i},\posterior{j} \notin \bar{\paramspace}_k$ for all $k$, then from examination of \eqref{eqn:red_1_form}, we note that $\mathcal{T}_{\pi'}$ is also optimal since it achieves an objective no smaller than that achieved by $\mathcal{T}_{\pi^\ast}$:
\begin{align*}
 \sum_{s\in\I}\sum_{k=1}^{K} q_s \mathbb{I}\{\lol{k} \leq \posterior{s} \leq \hil{k}\} &= \sum_{s\in\I\setminus\{i,j\}}\sum_{k=1}^{K} q_s \mathbb{I}\{\lol{k} \leq \posterior{s} \leq \hil{k}\}+\sum_{k=1}^{K} q_i \mathbb{I}\{\lol{k} \leq \posterior{i} \leq \hil{k}\}+\sum_{k=1}^{K} q_j \mathbb{I}\{\lol{k} \leq \posterior{j} \leq \hil{k}\} \\
 &= \sum_{s\in\I\setminus\{i,j\}}\sum_{k=1}^{K} q_s \mathbb{I}\{\lol{k} \leq \posterior{s} \leq \hil{k}\} \\
 &= \sum_{s\in\I\setminus\{i,j\}}\sum_{k=1}^{K} q_s' \mathbb{I}\{\lol{k} \leq \posterior{s}' \leq \hil{k}\} \\
&\leq \sum_{s\in\I\cup\{ij\}\setminus\{i,j\}}\sum_{k=1}^{K} q_s' \mathbb{I}\{\lol{k} \leq \posterior{s}' \leq \hil{k}\}
 \end{align*}

\noindent Consequently, recursively performing this reduction in the size of the signal set $\I$ for each interval $\bar{\paramspace}_k$ for $k \in [K]$ and $[0,M]\setminus\cup_{k=1}^K \bar{\paramspace}_k$, we obtain an optimal direct mechanism with each subset containing no more than one posterior mean $\mu_i$. Hence, from this optimal direct mechanism, replacing the signal set with $\I = [K+1]$ where $\posterior{k} \in \bar{\paramspace}_k$ for all $k\in[K]$ achieves the result.\qed 

\subsubsection*{Proof of Lemma \ref{lemma:reduce_MPC_constraints}}
Consider any posterior mean distribution of the form $H(t) =\sum_{k=1}^{K+1}~q_k~\mathbb{I}\{\posterior{k}~\leq~t\}$ for all $t\in\paramspace$. We show the equivalence between $H\succcurlyeq \dist$ and constraints given by \eqref{eqn:MPC} and \eqref{eqn:mean_mean}.\\
\noindent From the definition of mean-preserving contractions, $H\succcurlyeq \dist$ implies the constraints in \eqref{eqn:MPC} are satisfied and that the mean is preserved across $\dist$ and $H$ which subsequently implies  \eqref{eqn:mean_mean}.

\noindent Moreover, consider any distribution $H(t) =\sum_{k=1}^{K+1}~q_k~\mathbb{I}\{\posterior{k}~\leq~t\}$ satisfying \eqref{eqn:MPC} and \eqref{eqn:mean_mean}. Then, by \eqref{eqn:mean_mean}, the mean of $H$ and $\dist$ are equal so $\int_0^1 H^{-1}(s) ds = \int_0^1 F^{-1}(s) ds$. For any $0 \leq x < 1$, there exists $n \in [K]$ such that $\sum_{j=1}^{n-1} q_j\leq x < \sum_{j=1}^{n} q_j$. Observe that $f(t) \coloneqq \int_0^t (F^{-1}(s)-H^{-1}(s)) ds$ is convex over $\sum_{j=1}^{n-1} q_j\leq t \leq \sum_{j=1}^{n} q_j$ since $\int_0^t F^{-1}(s) ds$ is convex and $\int_0^t H^{-1}(s) ds$ is linear over $\sum_{j=1}^{n-1} q_j\leq t \leq \sum_{j=1}^{n} q_j$. Since the constraints of \eqref{eqn:MPC} imply that $f(\sum_{j=1}^{n-1} q_j),f(\sum_{j=1}^{n} q_j)\leq 0$, the convexity of $f$ implies that $f(x) \leq 0$. Hence, $\int_0^x H^{-1}(s) ds \geq \int_0^x F^{-1}(s) ds$ for all $x\in [0,1)$ which implies $H\succcurlyeq \dist$.\qed
\subsubsection*{Proof of Theorem \ref{thm:r123}}
\begin{proof}
\noindent In $\mathrm{R1}$, we know that $\pi_{F,\obj}^\ast=\pi_{\mathrm{NI}}$, and hence $\mathcal T_{NI}=\{(1, \priormean)\}$. From~\eqref{eqn:red_1_form}, we obtain~$V_{F,h}^\ast=1$, which is the maximum achievable value of planner's objective function.

\noindent To proceed with R2 and R3, we can simplify the objective function in~\eqref{eqn:red_1_form} using Lemma~\ref{lemma:num_signals_bdd} as follows:
\begin{align*}
    \sum_{i =1}^{\numintervals+1} \sum_{k=1}^{\numintervals} q_i \mathbb{I}\{\lol{k} \leq \posterior{i} \leq \hil{k}\}=\sum_{i=1}^\numintervals q_i = \left(1-q_{\numintervals+1}\right). 
\end{align*}
Hence, the problem of optimal signaling mechanism design can be expressed as follows:
\begin{align*}
    &\min_{\mathcal{T}_\pi: \{(q_i, \posterior{i})\}_{i \in [\numintervals+1]}}  q_{\numintervals+1}\\
    \text{s.t.} & \quad H\succcurlyeq \dist, \\ 
    & \quad   \posterior{i} \in \bar{\Theta}_i,\quad i\in[\numintervals],  
\end{align*}
where the second constraint ensures implementability by requiring that the posterior distribution $H$ is a mean-preserving contraction of prior distribution~$\dist$. Using~Lemma~\ref{lemma:reduce_MPC_constraints}, the definitions of sets~$\bar{\Theta}_i$, and the fact that all signal probabilities must sum to~$1$, we can re-write the above problem:  
\begin{subequations}\label{eq:equivalentopt}
\begin{align}
 &\min q_{\numintervals+1} \label{eq:qkplus1}\\ 
    \text{s.t.}   &\sum_{i \in [\numintervals+1]} q_i=1 \label{eq:probsum}\\ 
    & \lol{i}\leq \posterior{i} \leq \hil{i}, \quad \forall i\in[\numintervals] \label{eq:posteriorbounds}\\
    & \eqref{eqn:MPC}, \eqref{eqn:mean_mean} \nonumber. 
\end{align}
\end{subequations}
We now proceed to solve for the optimal signaling mechanism for~$\mathrm{R2}$; the proof for~$\mathrm{R3}$ is analogous. For simplicity and without loss of generality, we assume $\maximal=1$ since any optimal solution is invariant to linear scaling.

\underline{Claim 1.} $\posterior{\numintervals+1} \geq \priormean$ and $q_{\numintervals+1} > 0$, hence $q_{\numintervals+1}^\ast > 0$. 

Under~$\mathrm{R2}$, we know that $\hil{\numintervals}<\priormean$. From constraint $\posterior{\numintervals}\leq\hil{\numintervals}$, we obtain $\posterior{\numintervals}\leq \hil{\numintervals}<\priormean$. \\
Suppose that $\posterior{\numintervals+1}<\priormean$. Then by~\eqref{eqn:mean_mean} and by condition of $\mathrm{R2}$, we obtain $\priormean=\sum_{i=1}^{\numintervals+1} q_i \posterior{i}< \sum_{i=1}^{\numintervals+1} q_i \priormean = \priormean$. However, this is a contradiction. Hence, we conclude that  $\mu_{\numintervals+1}\geq\priormean$.\\
Next, suppose that $q_{\numintervals+1}=0$. Then, $\priormean=\sum_{i=1}^\numintervals q_i \posterior{i}\leq \sum_{i=1}^\numintervals q_i \hil{i} \leq \hil{\numintervals} \sum_{i=1}^\numintervals q_i = \hil{\numintervals}$. However, this is a contradiction since under~$\mathrm{R2}$, $\hil{\numintervals}<\priormean$. Hence, we conclude that  $q_{\numintervals+1}>0$ (and hence $q_{\numintervals+1}^\ast > 0$).

\underline{Claim 2.} $q_j^* = 0$ for all $j < \numintervals$.\\ 
Suppose for the sake of contradiction that $q_j^\ast > 0$ for some $j < \numintervals$, and let the corresponding distribution of posterior means be denoted~$H^\ast$. Now consider the new distribution of signal probabilities obtained by decreasing $q_j^\ast$ by $\tfrac{(\posterior{\numintervals+1}^\ast-\posterior{\numintervals}^\ast)}{(\posterior{\numintervals+1}^\ast-\posterior{j}^\ast)}\epsilon$, increasing $q_{\numintervals}^*$ by $\epsilon$, decreasing $q_{\numintervals+1}^*$ by $\tfrac{(\posterior{\numintervals}^\ast-\posterior{j}^\ast)}{(\posterior{\numintervals+1}^\ast-\posterior{j}^\ast)}\epsilon$, for a small $\epsilon>0$. Then, the value of objective~\eqref{eq:qkplus1} strictly increases and constraints~\eqref{eq:probsum} and~\eqref{eq:posteriorbounds} are still satisfied. Furthermore, this is a convex stochastic modification of the original mechanism; hence the modified set of tuples generates a distribution of posterior means such that~$H'\succcurlyeq H^\ast$ (refer to~Theorem 3.A.7. of \cite{shaked_univariate_2007}). By transitivity, we obtain that $H'\succcurlyeq \dist$, which implies that constraints~\eqref{eqn:MPC}, \eqref{eqn:mean_mean} for ensuring mean-preserving contraction also hold. This establishes the contradiction. Hence, $q_j^* = 0$ for all $j < \numintervals$.  

Following the above claims, we can simply rename $q_{\numintervals}$ to $q_1$ and $q_{\numintervals+1}$ to $q_2$, and similarly rename $\posterior{\numintervals}$ to $\posterior{1}$ and $\posterior{\numintervals+1}$ to $\posterior{2}$, where the signal set is $\I = \{1,2\}$. The problem~\eqref{eq:equivalentopt} simplifies as follows: 
\begin{subequations}\label{eq:simpleopt}
\begin{align}
   & \max_{q_1,q_2,\posterior{1},\posterior{2}} q_1 \\
  \text{s.t.}\quad & q_1+q_2=1 \\
   & \lol{\numintervals} \leq  \posterior{1} \leq \hil{\numintervals} \\
   & \priormean \leq \posterior{2} \leq \maximal \\
   & q_1\posterior{1} \geq \int_0^{q_1} \dist^{-1}(s)ds \label{eqn:convv_new}\\ 
   & q_1 \posterior{1} + q_2 \posterior{2} = \priormean. 
\end{align}
\end{subequations}
Suppose that~$(q_1^\ast,q_2^\ast,\posterior{1}^\ast,\posterior{2}^\ast)$ is an optimal solution of ~\eqref{eq:simpleopt} with~$\posterior{1}^* < \hil{\numintervals}$. Then we can find another optimal solution by choosing same signal probabilities~$q_1'=q_1^\ast$, $q_2'=q_2^\ast$, but the posterior means as $\posterior{1}'=\hil{\numintervals}$ and $\posterior{2}'=\priormean+\tfrac{(\posterior-\hil{\numintervals})(\posterior{2}^\ast-\priormean)}{(\priormean-\posterior{1}^\ast)}$ (this follows by noting that all the constraints in~\eqref{eq:simpleopt} are satisfied). Hence, we can restrict~$\posterior{1}=\hil{\numintervals}$ in the optimal design. Following~\eqref{eq:barffn}, we can rewrite the constraint~\eqref{eqn:convv_new} as $0\leq q_1 \leq \bar f(\hil{\numintervals})$ and substitute $q_2$ with $1-q_1$ to obtain:
\begin{subequations}\label{eq:finalopt}
\begin{align}
    &\max_{q_1,\posterior{2}}\quad q_1 \\
    \text{ s.t.   } &q_1 \hil{\numintervals} + (1-q_1) \posterior{2} = \priormean \label{eq:finalopt1}\\
    & 0 \leq q_1 \leq \bar f(\hil{\numintervals}) \label{eq:finalopt2}\\
    % & \int_{0}^{q_1} \dist^{-1}(t) dt \leq q_1 \hil{\numintervals}\\
  & \priormean \leq \posterior{2} \leq M .\label{eq:finalopt3}
\end{align}
\end{subequations}

From~\eqref{eq:finalopt1} we obtain  $q_1=\frac{\posterior{2}-\priormean}{\posterior{2}-\hil{\numintervals}}$ and using $\hil{\numintervals}\leq \priormean \leq \posterior{2} \leq M$ (Claim 1 and~\eqref{eq:finalopt3}), we know that $q_1\leq \frac{\maximal-\priormean}{\maximal-\hil{\numintervals}}$. Combining with~\eqref{eq:finalopt2}, the optimal value of~\eqref{eq:finalopt} is $q_1^\ast=\min\{\bar f(\hil{\numintervals}),\frac{M-\priormean}{M-\hil{\numintervals}}\}$. To summarize, $\posterior{2}^\ast=\frac{\priormean-q_1^\ast \hil{\numintervals}}{1-q_1^\ast}$, $\posterior{1}^\ast=\hil{\numintervals}$, and $q_2^\ast=1-q_1^\ast$, specifies the optimal direct mechanism~$\mathcal T^\ast=\{(q_1^*, \posterior{1}^*),(q_2^*, \posterior{2}^*)\}$.

Finally,  given the optimal objective value $V_{F,\obj}^\ast$ and the direct mechanism $\mathcal{T}^\ast$, we want to find a mechanism $\pi_{F,\obj}^\ast = \langle \I, \{z_\theta\}_{\theta \in \Theta} \rangle$ that implements $\mathcal{T}^\ast$ to achieve the value~$V_{F,\obj}^\ast$. Here, we appeal to earlier results: Prop.~1 in \cite{gentzkow_rothschild-stiglitz_2016} and Thm.~3.A.4 in \cite{shaked_univariate_2007}. In particular, consider the discrete distribution $\mathcal G$ that places probability $q_1^\ast$ on $\posterior{1}^\ast=\hil{\numintervals}$ and $1-q_1^\ast$ on $\posterior{2}^*$. Then, $g(x) \coloneqq \int_{0}^x \mathcal{G}(t) dt$ can be expressed as:  
\begin{align*}
    g(x) &= \begin{cases} 0 & x \leq \hil{\numintervals} \\ q_1^\ast (x-\hil{\numintervals}) & \hil{\numintervals} < x \leq \posterior{2}^\ast \\ q_1^*(\posterior{2}^\ast-\hil{\numintervals})+(x-\posterior{2}^\ast) & \posterior{2}^\ast<x\leq \maximal   \end{cases}
\end{align*}
We know that $g$ is convex and $g(x)\leq g^\circ(x)\coloneqq\int_0^x F(t)dt$ for all $x\in[0,1]$. Moreover, $g'(0) = {g^\circ}'(0)$ and $g'(1) = {g^\circ}'(1)$. From \cite{shaked_univariate_2007} and \cite{ivanov_optimal_2021}, it is known that if there exists $s \in [\hil{\numintervals},\posterior{2}^\ast]$ such that $g^\circ(s) = g(s)$ ---see \underline{Claim 3} below--- then $g$ is tangent to $g^\circ$ at $s$ and therefore is tangent on each linear segment of $g$. It follows that such a direct mechanism can be implemented using a signaling mechanism that has a monotone partitional structure with $t_0 = 0,\; t_1 = s\coloneqq F^{-1}(q_1^\ast),\; t_2 = 1$. To check consistency, note that $\mathbb{P}[\theta \in [0,s]] = F(s) = F(F^{-1}(q_1^\ast)) = q_1^\ast$, which is the probability of signal~$1$. 

\underline{Claim 3.}  $\exists \; s \in [\hil{\numintervals},\posterior{2}^\ast]$ such that $g^\circ(s) = g(s)$.

Suppose by contradiction that for all $s \in [\hil{\numintervals},\posterior{2}^*]$, $g^\circ(s) > g(s)$. Then since $g^\circ-g$ is convex over $[\hil{\numintervals},\posterior{2}^*]$, let $\inf_{t \in [\hil{\numintervals},\posterior{2}^*]} g^\circ(t)-g(t) = \epsilon > 0$ with some minimizer $t^* \in [\hil{\numintervals},\posterior{2}^*]$ such that $g^\circ(t^*)-g(t^*) = \epsilon$. Furthermore, let $\tilde{\posterior{2}^*}$ solve $(q_1^*+\epsilon)(x-\hil{\numintervals}) = q_1^*(\posterior{2}^*-\hil{\numintervals})+(x-\posterior{2}^*)$ and define the function $\tilde{g}$:
\begin{align*}
    \tilde{g}(x) &= \begin{cases} 0 & x \leq \hil{\numintervals} \\ (q_1^*+\epsilon) (x-\hil{\numintervals}) & \hil{\numintervals} < x \leq \tilde{\posterior{2}^*} \\ (q_1^*+\epsilon)(\tilde{\posterior{2}^*}-\hil{\numintervals})+(x-\tilde{\posterior{2}^*}) & x > \tilde{\posterior{2}^*} \end{cases}
\end{align*}
Notice that $\tilde{g}$ is also convex and that $g\leq\tilde{g}\leq f$; hence following \cite{gentzkow_rothschild-stiglitz_2016} the distribution over posterior means with signal probability $q_1^\ast+\epsilon$ on posterior mean $\hil{\numintervals}$ and $1-q_1^*-\epsilon$ on $\tilde{\posterior{2}^*}$ is implementable through a signaling mechanism. However, this would violate the optimality of $q_1^*$, which is a contradiction. 

\end{proof}

%!TEX root = arxiv_version_final.tex
\section{Supplementary Information}
\subsection{Proofs for Sec. \ref{sec:model}}
\label{si:proofs_sec_ii}
\subsubsection*{Proof of Proposition \ref{prop:eqbm_formula_prop_1}}
First, we show that \textit{in equilibrium}, there is a critical type $v^*(i) \in \R_+$ such that all agents of type $v \leq v^*(i)$ work remotely, and all agents of type $v > v^*(i)$ work in-person. We denote the equilibrium action of agents with type $v$ by $s_v^*(i)$.
\begin{lemma}
\label{lemma:rectangular_curve}
There exists a critical type $v^*(i) \in \R_+$ such that for all $v \leq v^*(i)$, $s_v^*(i) = \wrr$ and for all $v > v^*(i)$, $s_v^*(i) = \ws$.
\end{lemma}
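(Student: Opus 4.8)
The plan is to prove the threshold (or ``rectangular'') structure of equilibrium by a standard monotonicity/single-crossing argument applied to the agent's utility in \eqref{eqn:utility_fxn}. First I would fix the realized signal $i$ and the resulting posterior belief $F_i$ over $\trueparam$, and write the expected payoff to an agent of type $v$ from choosing $\ws$ against a candidate equilibrium profile in which the remote mass is some $\fin$. Using \eqref{eqn:utility_fxn} and \eqref{eqn:infectious_cost}, this expected payoff is $v - \bigl(\E_{F_i}[\trueparam]\,\costperparam(\fin) + \constantcost(\fin)\bigr) = v - \bigl(\posterior{i}\,\costperparam(\fin) + \constantcost(\fin)\bigr)$, since the agent is risk-neutral and hence only the posterior mean $\posterior{i}$ matters; the payoff from $\wrr$ is $0$. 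The key observation is that for any fixed $\fin$ the quantity $\posterior{i}\,\costperparam(\fin) + \constantcost(\fin)$ does not depend on $v$, so the best response of type $v$ is $\ws$ exactly when $v$ exceeds this common cutoff, and $\wrr$ (weakly) otherwise. This already gives the threshold form for any fixed candidate aggregate $\fin$; I would then note that in equilibrium the aggregate $\fin$ is itself determined consistently, but whatever its equilibrium value, the same cutoff expression applies, so defining $v^*(i)$ to be that cutoff (evaluated at the equilibrium $\fin$) yields the claim.

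Concretely, the steps I would carry out in order are: (1) state that an agent's strategy is a (measurable) map $v \mapsto s_v(i) \in \{\ws,\wrr\}$ and that the induced remote mass is $\fin = \P_{v\sim\popdist}[s_v(i)=\wrr]$; (2) compute the expected utility difference $u_v(\ws,\fin;\cdot) - u_v(\wrr,\fin;\cdot) = v - \posterior{i}\costperparam(\fin) - \constantcost(\fin)$ under posterior $F_i$, emphasizing linearity in $\trueparam$ so that only $\posterior{i}$ enters; (3) observe that this difference is strictly increasing in $v$ with slope $1$, hence single-crossing in $v$; (4) conclude that any best response, and in particular any equilibrium strategy, agrees $\popdist$-almost everywhere with a threshold rule with cutoff $v^*(i) := \posterior{i}\costperparam(\fin^*) + \constantcost(\fin^*)$ where $\fin^*$ is the equilibrium remote mass; (5) handle the tie at $v = v^*(i)$ by the stated convention (indifferent agents choose $\wrr$), which is measure-zero when $\popdist$ is atomless and in any case does not affect the aggregate, so the stated form $s_v^*(i)=\wrr$ for $v\le v^*(i)$ and $s_v^*(i)=\ws$ for $v>v^*(i)$ is without loss.

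I expect the main subtlety — rather than a true obstacle — to be the logical ordering: the cutoff depends on the equilibrium aggregate $\fin^*$, while $\fin^*$ is in turn generated by the threshold rule, so one must be careful to present this as ``any equilibrium necessarily has this threshold form'' (a property of fixed points) rather than circularly assuming it. Existence and the explicit value of $\fin^*$ are then pinned down in Proposition~\ref{prop:eqbm_formula_prop_1} via the fixed-point equation $\fin^* = \popdist(v^*(i))$ together with $v^*(i)=\popdist^{-1}(\fin^*)$, which reduces to $\popdist^{-1}(u) \ge \costperparam(u)\posterior{i}+\constantcost(u)$; here Lemma~\ref{lemma:m_smooth}-type monotonicity of $\costperparam$ and the quantile function guarantees a well-defined infimum. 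For the present lemma, though, only the single-crossing step is needed, so the argument is short and the only care required is the measure-zero treatment of ties and the non-atomic nature of $\popdist$ (and, if $\popdist$ has atoms, allowing mixing at the threshold, which again leaves the threshold structure of the cutoff intact).
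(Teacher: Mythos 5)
Your proposal is correct and rests on the same core observation as the paper's proof: at the fixed equilibrium aggregate $\fin^*$, the payoff gap $v - \bigl(\posterior{\signal}\costperparam(\fin^*)+\constantcost(\fin^*)\bigr)$ is strictly increasing in $v$ while the cost term is type-independent, forcing a cutoff rule. The only cosmetic difference is that the paper defines $v^*(i)=\sup\{t: s_t^*(i)=\wrr\}$ and argues by contradiction, whereas you identify the cutoff directly as the equilibrium cost and invoke single-crossing; your explicit handling of the circularity (the cutoff is a property of any fixed point, not an assumption) and of ties matches the paper's intent.
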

\begin{proof}
The proof follows by construction. Given an equilibrium in response to observed posterior mean $\posterior{i}$ from generated signal $\signal$, $v^*(i) = \sup\{t:~s_t^*(i)~=~\wrr\}$. Suppose, by contradiction, that there exists $v < v^*(i)$ such that $s_v^*(i) = \ws$. Then, $v \geq c_1(\final(i))\posterior{\signal}+c_2(\final(i))$. However, this would imply that $\hat{v} > c_1(\final(i))\posterior{\signal}+c_2(\final(i))$ for all $\hat{v} > v$ and hence, $s^*_{\hat{v}}(i) = \ws$ and $\sup\{t: s_t^*(i) = \wrr\}\leq v$. This is a contradiction. Thus, we conclude that for all $v \leq v^*(i)$, $s_v^*(i) = \wrr$ and $v^*(i)$ satisfies the conditions of the critical type.\qed
\end{proof}
\noindent Next, we characterize the in-person equilibrium mass in response to signal $i$, and hence the equilibrium remote mass $\final(i)$. 
\begin{lemma}
\label{lemma:compute_inperson_mass}
The equilibrium remote mass $\final(i) = \inf\{u \geq 0: \popdist^{-1}(u) \geq c_1(u)\posterior{\signal}+c_2(u)\}$.
\end{lemma}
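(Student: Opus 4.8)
The plan is to leverage Lemma \ref{lemma:rectangular_curve}, which already guarantees that any equilibrium in response to signal $i$ is described by a single critical type $v^*(i)$: agents with value $v > v^*(i)$ work in-person and those with $v \le v^*(i)$ work remotely. Under the value distribution $\popdist$, the mass of agents with value at or below $v^*(i)$ is exactly $\popdist(v^*(i))$, so the equilibrium remote mass must satisfy $\final(i) = \popdist(v^*(i))$, equivalently $v^*(i) = \popdist^{-1}(\final(i))$ (using the quantile convention $\popdist^{-1}(u) = \sup\{t : \popdist(t) \le u\}$ from the model section). The core of the argument is then to pin down $v^*(i)$ via the indifference/consistency condition that defines equilibrium: the marginal agent of type $v^*(i)$ must (weakly) prefer remote work, i.e. $u_{v^*(i)}(\ws, \final(i); \cdot) \le 0$, which by \eqref{eqn:utility_fxn} and \eqref{eqn:infectious_cost} reads $v^*(i) \le c_1(\final(i))\posterior{i} + c_2(\final(i))$, while any type slightly above is willing to go in-person.

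First I would fix the realized signal $i$ and hence the posterior mean $\posterior{i}$, and write $u \coloneqq \final(i)$ for the equilibrium remote mass; the goal is to show $u = \inf\{u' \ge 0 : \popdist^{-1}(u') \ge c_1(u')\posterior{i} + c_2(u')\}$. Define $\phi(u') \coloneqq \popdist^{-1}(u') - c_1(u')\posterior{i} - c_2(u')$. Note $\phi$ is nondecreasing: $\popdist^{-1}$ is nondecreasing and $c_1, c_2$ are (weakly) decreasing by the standing assumptions on the cost functions, so $-c_1(u')\posterior{i} - c_2(u')$ is nondecreasing in $u'$. Thus the set $\{u' : \phi(u') \ge 0\}$ is an up-set, and its infimum $u^\dagger$ is well-defined (it is finite since $\popdist^{-1}(1)$ may be large but $\phi$ eventually dominates; more carefully, at $u'=1$ we have $c_1(1)=c_2(1)=0$ so $\phi(1) = \popdist^{-1}(1) \ge 0$, guaranteeing the set is nonempty).

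The two directions: (i) I would show $u \ge u^\dagger$ by arguing that at the equilibrium remote mass, the marginal in-person agent (type $\popdist^{-1}(u)$, approached from above) finds in-person work weakly profitable — otherwise a positive mass of agents currently choosing $\ws$ would strictly prefer to deviate to $\wrr$, increasing the remote mass, contradicting equilibrium. This gives $\popdist^{-1}(u) \ge c_1(u)\posterior{i}+c_2(u)$ (with appropriate one-sided limit care), i.e. $\phi(u) \ge 0$, so $u \ge u^\dagger$. (ii) Conversely, $u \le u^\dagger$: if $u > u^\dagger$, then since $\phi$ is nondecreasing and already nonnegative just above $u^\dagger$, there is a positive mass of remote agents with value strictly exceeding their cost $c_1(u)\posterior{i}+c_2(u)$, who would strictly prefer to deviate to in-person work, again contradicting equilibrium. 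Combining, $u = u^\dagger$, which is the claimed formula.

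The main obstacle I expect is handling the boundary/continuity technicalities at the critical type: because $\popdist$ need not be strictly increasing or continuous, $\popdist^{-1}$ can jump or be flat, so the indifference condition holds only in a one-sided (weak-inequality) sense and "the marginal agent" must be treated as a limit. Using $m(\cdot)$'s definition as an infimum (rather than asserting a clean equality $\popdist^{-1}(u) = c_1(u)\posterior{i}+c_2(u)$) is precisely what sidesteps this, but the deviation arguments in both directions need to be phrased carefully in terms of positive-mass sets of agents and the monotonicity of $\phi$, rather than pointwise first-order conditions. Continuity and boundedness of $c_1, c_2$ (and hence of $\phi$ away from jumps of $\popdist^{-1}$) ensure the infimum is attained and the construction yields a genuine equilibrium, closing the argument.
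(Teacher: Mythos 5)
Your proposal is correct and follows essentially the same route as the paper's proof: both invoke the critical-type structure of Lemma \ref{lemma:rectangular_curve} and then identify $\final(i)$ with the infimum via a two-sided deviation/contradiction argument that exploits the monotonicity of $\popdist^{-1}$ and the (strictly) decreasing cost term $c_1(u)\posterior{\signal}+c_2(u)$. Your directions (i) and (ii) correspond exactly to the paper's $\epsilon>0$ and $\epsilon<0$ cases, with your explicit $\phi$ and its up-set structure being a mild repackaging of the same argument.
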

\begin{proof}
Let $\final(i) = z$ and $\inf\{u \geq 0: \popdist^{-1}(u) \geq c_1(u)\posterior{\signal}+c_2(u)\} = z+\epsilon$ for some $\epsilon$. By Lemma \ref{lemma:rectangular_curve}, all agents of type $v \leq \popdist^{-1}(z)$ are such that $v \leq c_1(z)\posterior{\signal}+c_2(z)$ and all agents of type $v > \popdist^{-1}(z)$ are such that $ v > c_1(z)\posterior{\signal}+c_2(z)$.

\noindent Suppose by contradiction $\epsilon > 0$. Since $c_1(u)\posterior{\signal}+c_2(u)$ is strictly decreasing in $u$, by definition of infimum, for all $t < z+\epsilon$, $\popdist^{-1}(t) < c_1(t)\posterior{\signal}+c_2(t)$. Hence, $\popdist^{-1}(z+\frac{\epsilon}{2}) < c_1(z+\frac{\epsilon}{2})\posterior{\signal}+c_2(z+\frac{\epsilon}{2})$. But $\popdist^{-1}(z+\frac{\epsilon}{2}) \geq \popdist^{-1}(z)$ since $\popdist^{-1}$ is non-decreasing, and hence $\popdist^{-1}(z+\frac{\epsilon}{2}) > c_1(z)\posterior{\signal}+c_2(z) > c_1(z+\frac{\epsilon}{2})\posterior{\signal}+c_2(z+\frac{\epsilon}{2})$. This is a contradiction.

\noindent Analogously, suppose by contradiction, that $\epsilon < 0$. Then, $\popdist^{-1}(z+\frac{\epsilon}{2}) \leq \popdist^{-1}(z) \leq c_1(z)\posterior{\signal}+c_2(z) < c_1(z+\frac{\epsilon}{2})\posterior{\signal}+c_2(z+\frac{\epsilon}{2})$. % by monotonicity, then by eq defn in 1st paragraph, by monotonicyt of cost
But, by infimum definition, $\popdist^{-1}(z+\frac{\epsilon}{2}) \geq c_1(z+\frac{\epsilon}{2})\posterior{\signal}+c_2(z+\frac{\epsilon}{2})$. % may need to show that this holds for all t > inf but small detail
This is again a contradiction.

\noindent This implies that $\epsilon = 0$, so $\final(i) = \inf\{u \geq 0: \popdist^{-1}(u) \geq c_1(u)\posterior{\signal}+c_2(u)\}$.\qed 
% \srs{Clean up proofs and edge case.}
\end{proof}
Together, Lemma \ref{lemma:rectangular_curve} and Lemma \ref{lemma:compute_inperson_mass} imply Proposition \ref{prop:eqbm_formula_prop_1}.\qed 

\subsubsection*{Proof of Lemma
\ref{lemma:m_smooth}}
\noindent Observe that $0 \leq m(\posterior{}) \leq 1$, since by definition (Prop. \ref{prop:eqbm_formula_prop_1}), $m(\posterior{}) \geq 0$ and $G^{-1}(1) > 0 = c_1(1) \posterior{} + c_2(1)$ which implies that $m(\posterior{}) \leq 1$. Hence, $m(\posterior{})$ is bounded.
% \inf{u: \geq c_1(u)\posterior{\signal}+c_2(u)
\noindent Similarly, letting $f(u) = \frac{\popdist^{-1}(u) - c_2(u)}{c_1(u)}$, we equivalently have $m(\posterior{}) = \inf\{u: f(u) \geq \posterior{} \}$. Since $c_1(u)$ is a strictly decreasing function in $u$ and $\popdist^{-1}(u) - c_2(u)$ is a non-decreasing function, $f(\cdot)$ is strictly increasing. For any $\posterior{}' \leq \posterior{}''$ notice $\{u: f(u) \geq \posterior{}'' \} \subseteq \{u: f(u) \geq \posterior{}' \}$, so $m(\posterior{}') \leq m(\posterior{}'')$ and hence $m$ is non-decreasing. Applying Berge's Maximum Principle, we obtain that $m(\posterior{})$ is continuous. \qed

\subsection{Regime with non-MPS}
\label{subsec:state_indpt_non_mps}
We now focus on the characterization of optimal signaling mechanism in regime~$\mathrm{R4}$, which corresponds to the case when the prior mean~$\priormean$ does not lies in any of the intervals $\bar{\Theta}_k,\; k=1,\dots,\numintervals$, but lies in the gap between two contiguous intervals, i.e., $\exists k'\in[\numintervals]$ such that $\priormean\in( \hil{k'}, \lol{k'+1})$. In this regime, the planner seeks to design signaling that induces posterior mean beliefs outside of the interval~$( \hil{k'}, \lol{k'+1})$, where~$\bar{\Theta}_{k'}=[\lol{k'},\hil{k'}]$ (resp. $\bar{\Theta}_{k'+1}=[\lol{k'+1},\hil{k'+1}]$) is the interval immediately left (resp. right) to the~$\priormean$. 

The following example illustrates that regime~$\mathrm{R4}$ may not necessarily admit an optimal signaling mechanism with MPS:
 \begin{example}
 \label{ex:noMPSinR4}
Let $\dist$ be uniform on $[0,1]$ ($\priormean = 0.5$). Consider $K=2$ and for some small $\epsilon$, define $\bar{\Omega}_1 = [0.4-\epsilon,0.4+\epsilon]$ and $\bar{\Omega}_2 = [0.6-\epsilon,0.6+\epsilon]$. Hence, regime~$\mathrm{R4}$ is active. We can exhaust mechanisms with MPS by considering three cases: Firstly, no information mechanism (which has a MPS with $t_1 = 1$) yields~$0$ planner objective value~\eqref{eqn:red_1_form}.  Secondly, when~$t_1 < 0.8-2\epsilon$, the posterior mean~$\posterior{1}$ is strictly less than $0.4 -\epsilon$ (and thus $\posterior{1} \notin \bar{\Omega}_1\cup\bar{\Omega}_2$), yielding the objective value of $1-F(t_1) < 1$. Finally, when $0.8-2\epsilon \leq t < 1$, then $\posterior{k} \notin \bar{\Omega}_1\cup\bar{\Omega}_2$ for all $k > 1$, so the objective value is $F(t_1) < 1$. Hence, no signaling mechanism with MPS can achieve the maximum objective value of~$1$. \\
However, observe that by choosing $\I = \{1,2\}$ with $z_{\param}(1) = 0.7$ and $z_{\param}(2) = 0.3$ for all $\param \leq 0.5$, and $z_{\param}(1) = 0.3$ and $z_{\param}(2) = 0.7$ for all $\param \geq 0.5$, the posterior means are $\posterior{1} = 0.4$ and $\posterior{2} = 0.6$.  This interval-based mechanism does not have a MPS but yields the maximum objective value of $1$.$\hfill{\vartriangleleft}$
\end{example}  

Hence, unlike regimes~$\mathrm{R1}-\mathrm{R3}$, we need to broaden the search for optimal signaling mechanism to include mechanisms with a non-MPS structure. Analogous to the proof of Theorem~\ref{thm:r123}, we can state the problem of computing an optimal direct signaling mechanism~$\mathcal{T}_\pi^\ast$ as follows:
\begin{align*}
V_{F,h}^\ast &=  \min_{\substack{\mathcal{T}_\pi: \{(q_i, \posterior{i})\}_{i \in [\numintervals+1]} \\ H \succcurlyeq \dist \\\posterior{i} \in \bar{\Theta}_i,\; \forall i\in[\numintervals]} } q_{\numintervals+1},
\end{align*}
where, by definition of $\bar{\Theta}_i$s, we know that $\posterior{k}$ is increasing for all $k \in [\numintervals]$. However, in contrast to both $\mathrm{R2}$ and $\mathrm{R3}$ (where we know that $\posterior{\numintervals} < \posterior{\numintervals+1}$), we can no longer determine how $\mu_{\numintervals+1}$ positioned relative to the other posterior means $\posterior{1}, \dots, \posterior{\numintervals}$. Still, one can computationally solve for~$\mathcal{T}_\pi^\ast$ by iterating over all $\numintervals+1$ possible placements of $\posterior{\numintervals+1}$ relative to $\{\posterior{i}\}_{i\in[\numintervals]}$. That is, we can solve~$\numintervals+1$ individual optimization problems of the form~\eqref{eq:indopt} and finally obtain the optimal direct mechanism as $\mathcal T^\ast= \argmin_{j=1,\dots, \numintervals+1}V_{F,h}^j$.
\begin{subequations}\label{eq:indopt}
\begin{align}
& V_{F,h}^j =  \min_{\substack{\mathcal{T}_\pi: \{(q_i, \posterior{i})\}_{i \in [\numintervals+1]}}} q_{\numintervals+1} \label{eq:indopt1}\\
\text{s.t.} & \quad  H \succcurlyeq \dist \label{eq:indopt2}\\
&\quad  \posterior{i} \in \bar{\Theta}_i,\; \forall i\in[\numintervals] \label{eq:indopt3}\\
& \quad  \begin{cases}
\posterior{\numintervals+1} < \posterior{1} & \text{if} \quad j=1 \\ 
\posterior{j-1} < \posterior{\numintervals+1} \leq \posterior{j}& \text{if} \quad  j=2,\dots,\numintervals \\ 
\posterior{\numintervals+1} > \posterior{\numintervals} & \text{if} \quad j=\numintervals+1  
\end{cases} \label{eq:indopt4}
\end{align}
\end{subequations}
Analogous to the proof of Theorem~\ref{thm:r123}, we can rewrite~\eqref{eq:indopt1}-\eqref{eq:indopt3} and obtain for $j~=~1,\dots,\numintervals~+~1$:  
\begin{align*}
 V_{\dist,\obj}^j =&\min q_{\numintervals+1}         \\
 \text{s.t. }&\sum_{i\in [\numintervals+1]} q_i = 1 \nonumber\\
 & \lol{i} \leq \posterior{i} \leq \hil{i} \quad \forall i \in [\numintervals] \\
 & \eqref{eqn:MPC}, \eqref{eqn:mean_mean},\; \text{and} \eqref{eq:indopt4} \; 
%  & \priormean \leq \posterior{\numintervals+1} \leq 1\\
% & \sum_{j=1}^{m} q_j\posterior{j} \geq \int_0^{\sum_{j=1}^{m} q_j} \dist^{-1}(t)dt\hspace{0.25cm} \forall m \in [\numintervals]\\
% &\sum_{s=1}^{\numintervals+1} q_s \posterior{s} = \priormean
\end{align*}
In fact, each of these problems can convexified by introducing variables $z_j \coloneqq q_j\posterior{j}$. Hence, we arrive at the following result: 
\begin{proposition} An optimal direct signaling mechanism~ $\mathcal{T}_{\pi^*}$ and the corresponding optimal value~$V_{F,h}^\ast$ can be obtained by solving~$\numintervals$ convex programs.  
\end{proposition}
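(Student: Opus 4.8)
The plan is to repeat the reduction carried out in the proof of Theorem~\ref{thm:r123}, but now accounting for the extra combinatorial freedom in where the ``residual'' posterior mean $\posterior{\numintervals+1}$ sits. By Lemma~\ref{lemma:num_signals_bdd} we may restrict to direct mechanisms $\{(q_i,\posterior{i})\}_{i\in[\numintervals+1]}$ with $\posterior{i}\in\bar{\Theta}_i$ for $i\in[\numintervals]$ and $\posterior{\numintervals+1}\notin\cup_k\bar{\Theta}_k$, with objective $\min q_{\numintervals+1}$ subject to implementability $H\succcurlyeq\dist$. The obstacle, relative to regimes $\mathrm{R2}$--$\mathrm{R3}$, is that Lemma~\ref{lemma:reduce_MPC_constraints} writes $H\succcurlyeq\dist$ as the constraints~\eqref{eqn:MPC}--\eqref{eqn:mean_mean} \emph{in increasing order of the posterior means}: in $\mathrm{R2}$--$\mathrm{R3}$ this order is known a priori, whereas here $\posterior{\numintervals+1}$ may land in any of the gaps $(\hil{j-1},\lol{j})$ between consecutive intervals (with the conventions $\hil{0}\coloneqq 0$ and $\lol{\numintervals+1}\coloneqq\maximal$), so the order of the means is not determined in advance.

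The first step is therefore to branch on the position of $\posterior{\numintervals+1}$: for each admissible gap $j$ form the program in~\eqref{eq:indopt} with the additional constraint $\hil{j-1}\le\posterior{\numintervals+1}\le\lol{j}$. Since $\posterior{i}\in\bar{\Theta}_i=[\lol{i},\hil{i}]$ and $\hil{i}<\lol{i+1}$, adding this constraint pins down the full increasing order of all $\numintervals+1$ posterior means \emph{within} subproblem $j$, so~\eqref{eqn:MPC}--\eqref{eqn:mean_mean} can be written out explicitly for that fixed order. Soundness of the branching is then immediate: any point feasible for subproblem $j$ has its means in the assumed order, so the constraints it satisfies are exactly $H\succcurlyeq\dist$, hence it is implementable via~\cite{gentzkow_rothschild-stiglitz_2016}. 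Completeness follows from Lemma~\ref{lemma:num_signals_bdd}, since an optimal mechanism has $\posterior{\numintervals+1}$ in some open gap, hence is feasible in the corresponding subproblem. Consequently $V_{\dist,\obj}^\ast=\min_j V_{\dist,\obj}^j$, and the minimizing subproblem furnishes the optimal direct mechanism $\mathcal{T}_{\pi^\ast}$.

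It remains to convexify each subproblem, which is where the substitution $z_i\coloneqq q_i\posterior{i}$ is used. Under this substitution the objective $q_{\numintervals+1}$, the normalization $\sum_i q_i=1$, the mean condition~\eqref{eqn:mean_mean} (which becomes $\sum_i z_i=\priormean$), the interval constraints (which become $\lol{i}q_i\le z_i\le\hil{i}q_i$), and the gap constraint on the residual signal (which becomes $\hil{j-1}q_{\numintervals+1}\le z_{\numintervals+1}\le\lol{j}q_{\numintervals+1}$) are all linear; and~\eqref{eqn:MPC} takes the form $f\big(\textstyle\sum_{l\le n}q_{i_l}\big)\le\sum_{l\le n}z_{i_l}$, where $(i_1,\dots,i_{\numintervals+1})$ is the fixed order of subproblem $j$ and $f(x)=\int_0^x\dist^{-1}(s)\,ds$ is convex. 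Each such constraint bounds a convex function of the $q$'s above by a linear function of the $z$'s, so its feasible set is convex; hence every subproblem is a convex program, and $\posterior{i}=z_i/q_i$ recovers the mechanism on its support. The only genuine difficulty is the simultaneous presence of the bilinear products $q_i\posterior{i}$ and the unknown ordering of the means; both are dissolved at once by restricting $\posterior{\numintervals+1}$ to one closed gap per subproblem (which fixes the order while keeping the gap constraint linear in $(q_{\numintervals+1},z_{\numintervals+1})$) together with the $z_i=q_i\posterior{i}$ substitution.
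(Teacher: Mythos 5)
Your proposal is correct and follows essentially the same route as the paper: restrict to $\numintervals+1$ signals via Lemma~\ref{lemma:num_signals_bdd}, branch on the location of the residual posterior mean $\posterior{\numintervals+1}$ so that the ordering needed for Lemma~\ref{lemma:reduce_MPC_constraints} is fixed within each subproblem (the paper branches on the order of $\posterior{\numintervals+1}$ relative to the other means via~\eqref{eq:indopt4}, you branch on the gap $[\wlh{j-1},\wll{j}]$ it occupies---these carry the same ordering information), and convexify each subproblem with the substitution $z_i = q_i\posterior{i}$. Your added detail that each majorization constraint becomes a convex function of the $q$'s bounded above by a linear function of the $z$'s is a correct and welcome justification of the convexification step that the paper only asserts; the only cosmetic discrepancy is that both your argument and the paper's own text produce $\numintervals+1$ subproblems while the proposition's statement says $\numintervals$.
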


Although one can computationally obtain an optimal direct mechanism~$\mathcal{T}_{\pi^*}$, the question of analytical characterization of optimal signaling mechanism~$\pi_{F,h}^\ast$ that implements~$\mathcal{T}_{\pi^*}$ is not trivial for regime~$\mathrm{R4}$, mainly because we can no longer utilize the MPS. However, as shown in Example~\ref{ex:noMPSinR4}, one can still hope to find an optimal mechanism~$\pi_{F,h}^\ast$ that partitions the statespace~$\paramspace$ into subintervals and uses a fixed probability distribution $z_{\param}(\cdot)$ within each interval. We now show that optimal $\pi_{F,h}^\ast$ has an interval-based structure when we restrict attention to a (more likely) subcase of regime~$\mathrm{R4}$. 

From the definition of~$\mathrm{R4}$, one can intuitively argue that signaling is less effective for a prior distribution~$\dist$ if the induced beliefs are ``tightly concentrated'' in the interval~$(\hil{k'}, \lol{k'+1})$. In particular, for a given~$\dist$, consider the increasing functions: $\ubar{s}(t) = \E_\dist[\param|\param \leq t]$ and $\bar{s}(t) = \E_\dist[\param|\param \geq t]$. Then the information design may not increase planner's expected utility beyond the no-information benchmark if, for most values of~$t$, $\bar{s}(t)$ is close to $\priormean$ and $F(t)$ close to $0$, or $\ubar{s}(t)$ is close to $\priormean$ and $F(t)$ close to $1$. We now introduce a sub-regime of~$\mathrm{R4}$, denoted $\mathrm{R4a}$, which corresponds to the situations when~$\dist$ places sufficient probability mass outside the interval~$( \hil{k'}, \lol{k'+1})$:  

\begin{itemize}
    \item[]  $\mathrm{(R4)}$: $\priormean \notin \cup_{k=1}^K \bar{\Theta}_k$ and $\inf \cup_{k=1}^K\bar{\Theta}_k < \priormean < \sup \cup_{k=1}^K\bar{\Theta}_k $ 
    \item[] $\mathrm{(R4a)}$: $\exists \theta', \theta'' \in \cup_{k=1}^K \bar{\Theta}_k$ such that for any $\delta \in [0,1]$ and any $t \in \paramspace$, the following constraints hold: 
    $0 \leq \delta p(t,\theta') \leq 1$ and $\delta p(t,\theta')+(1-\delta) p(t,\theta'') = 1$, where $p(t,\theta) \coloneqq \frac{(1-F(t))(\bar{s}(t)-\theta)}{F(t)(\theta-\ubar{s}(t))}, \; \theta\in\{\theta', \theta''\}$. 
\end{itemize}

\begin{proposition}
\label{prop:stateless_r4a}
Let~$\theta', \theta'\in \cup_{k=1}^K \bar{\Theta}_k'$ satisfy the conditions for $\mathrm{(R4)}$-($\mathrm{R4a}$), and let $\lambda = \delta p(t,\theta')$ for some $\delta \in [0,1]$, $t \in \paramspace$. Then $V_{F,\obj}^* = 1$ and $\pi_{F,\obj}^* = \langle \{1,2\}, \{z_\param\}_{\param \in \paramspace} \rangle$ with $z_\param(1) = \lambda \mathbb{I}\{\param \in [0, t]\}+\delta \mathbb{I}\{\param \in (t, M]\}$ and $g_\param(1) = (1-\lambda) \mathbb{I}\{\param \in [0, t]\}+(1-\delta) \mathbb{I}\{\param \in (t, M]\}$.
\end{proposition}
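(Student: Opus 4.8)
The plan is a direct verification: I will show that the two-signal mechanism $\pi_{F,\obj}^\ast$ in the statement induces exactly the two posterior means $\theta'$ and $\theta''$, and then invoke the fact that for a state-independent set-based preference the objective equals $\P\{\posterior{\signal}\in\cup_{k=1}^K\bar{\Theta}_k\}$ (see~\eqref{eqn:red_1_form}) and is bounded above by $1$. Since $\pi_{F,\obj}^\ast$ is given directly through the family $\{z_\param\}_{\param\in\paramspace}$, implementability (the mean-preserving-contraction requirement) is automatic; the only preliminary checks are that $z_\param(1),z_\param(2)\ge 0$ and $z_\param(1)+z_\param(2)=1$ for every $\param$. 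The first follows from $\delta\in[0,1]$, from the constraint $0\le\lambda=\delta\,p(t,\theta')\le 1$ supplied by $\mathrm{R4a}$, and from the identity $\delta\,p(t,\theta')+(1-\delta)\,p(t,\theta'')=1$, which forces $1-\lambda=(1-\delta)\,p(t,\theta'')\in[0,1]$; the second is immediate from the definition of $z_\param(2)$ as the complement.

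The core of the argument is a Bayes'-rule computation via~\eqref{eqn:q_i}--\eqref{eqn:theta_i}. The mechanism splits $\paramspace$ at $t$ into a low block $[0,t]$ and a high block $(t,\maximal]$ with conditional means $\ubar{s}(t)=\E_\dist[\param\mid\param\le t]$ and $\bar{s}(t)=\E_\dist[\param\mid\param\ge t]$, and routes signal $1$ with probability $\lambda$ from the low block and $\delta$ from the high block. Using $\int_0^t\param\,d\dist(\param)=\dist(t)\,\ubar{s}(t)$ and $\int_t^\maximal\param\,d\dist(\param)=(1-\dist(t))\,\bar{s}(t)$, the posterior mean of signal $1$ works out to
\begin{align*}
\posterior{1}=\frac{\lambda\,\dist(t)\,\ubar{s}(t)+\delta\,(1-\dist(t))\,\bar{s}(t)}{\lambda\,\dist(t)+\delta\,(1-\dist(t))},
\end{align*}
so that $\posterior{1}=\theta'$ is equivalent to $\lambda\,\dist(t)\,(\theta'-\ubar{s}(t))=\delta\,(1-\dist(t))\,(\bar{s}(t)-\theta')$, which is exactly the defining relation $\lambda=\delta\,p(t,\theta')$. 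The analogous computation for signal $2$ gives $\posterior{2}=\theta''$ if and only if $(1-\lambda)\,\dist(t)\,(\theta''-\ubar{s}(t))=(1-\delta)\,(1-\dist(t))\,(\bar{s}(t)-\theta'')$, i.e.\ $1-\lambda=(1-\delta)\,p(t,\theta'')$, which is precisely the second part of the $\mathrm{R4a}$ condition after substituting $\lambda=\delta\,p(t,\theta')$. Hence the only induced posterior means are $\theta'$ and $\theta''$, both lying in $\cup_{k=1}^K\bar{\Theta}_k$ by the hypothesis $\theta',\theta''\in\cup_{k=1}^K\bar{\Theta}_k$, so $V_{F,\obj}(\pi_{F,\obj}^\ast)=1$; combined with $V_{F,\obj}^\ast\le 1$ this yields $V_{F,\obj}^\ast=1$ with $\pi_{F,\obj}^\ast$ optimal.

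The algebra above is entirely mechanical, and in that sense there is no deep obstacle; the real care goes into the degenerate boundary cases --- $\dist(t)\in\{0,1\}$, $\delta\in\{0,1\}$, or one signal receiving zero probability --- where $p(t,\cdot)$, a conditional mean, or a posterior mean can be undefined. I would dispose of these by taking $t$ with $0<\dist(t)<1$ and $\delta\in(0,1)$ (implicit in the existence clause of $\mathrm{R4a}$, and enough to guarantee $q_1\ge\delta(1-\dist(t))>0$ and $q_2\ge(1-\delta)(1-\dist(t))>0$), and for the boundary subcases by noting that the objective then reduces to that of the single surviving signal, whose posterior mean is a conditional mean of $\dist$ still forced into $\cup_{k=1}^K\bar{\Theta}_k$ by the relevant limiting case of the $\mathrm{R4a}$ identity. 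The one thing worth highlighting is the conceptual point: the $\mathrm{R4a}$ condition is engineered precisely so that the signal probabilities $(\lambda,1-\lambda)$ and $(\delta,1-\delta)$ rotate the two-point ``block-mean'' distribution $\{(\dist(t),\ubar{s}(t)),(1-\dist(t),\bar{s}(t))\}$ into a two-point distribution supported on $\theta'$ and $\theta''$, so verifying the proposition amounts to verifying that this rotation is feasible.
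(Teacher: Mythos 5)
Your proposal is correct and follows essentially the same route as the paper: the paper's proof is a one-line ``check by construction'' that the mechanism induces the direct mechanism $\{(\lambda F(t)+\delta(1-F(t)),\theta'),\,((1-\lambda)F(t)+(1-\delta)(1-F(t)),\theta'')\}$ and then concludes $V_{F,\obj}^\ast=1$ from~\eqref{eqn:red_1_form}, and your argument simply carries out that check explicitly via~\eqref{eqn:q_i}--\eqref{eqn:theta_i}, showing $\posterior{1}=\theta'$ is equivalent to $\lambda=\delta\,p(t,\theta')$ and $\posterior{2}=\theta''$ to the second $\mathrm{R4a}$ identity. Your additional attention to the validity of the signal probabilities and to the degenerate cases $\dist(t)\in\{0,1\}$, $\delta\in\{0,1\}$ is a welcome elaboration of details the paper leaves implicit.
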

\begin{proof}
The proof follows from construction. We can check that $\pi_{F,\obj}^*$ induces $\mathcal{T}_{\pi_{F,\obj}^*} = \{(\lambda F(t) +  \delta(1-F(t)),\theta'),((1-\lambda) F(t) +  (1-\delta) (1-F(t)), \theta'')\}$ and from $\eqref{eqn:red_1_form}$ we conclude that $V_{F,\obj}^\ast = 1$. \qed
\end{proof}

The interval-based structure of the optimal mechanism~$\pi_{F,\obj}^*$ in~$\mathrm{R4a}$ is illustrated in~Fig.~\ref{fig:result_2}: the mechanism is based on a threshold~$t$ which splits the statespace~$\paramspace$ into two disjoint intervals, each corresponding to a signal distribution. Thus, the set of signal~$\I=\{1,2\}$. If $\theta\leq t$, $\pi_{F,\obj}^*$ reveals signal~$1$ with probability~$\lambda$ and signal~$2$ with probability~$(1-\lambda)$. If  $\theta> t$, $\pi_{F,\obj}^*$ reveals signal~$1$ with probability~$\delta$ and signal~$2$ with probability~$(1-\delta)$. Thus, for signal~$1$, the signal probability and induced posterior mean are~$\lambda F(t) +  \delta(1-F(t))$ and $\theta'$, respectively. Similarly, for signal~$2$, these quantities are $(1-\lambda) F(t) +  (1-\delta) (1-F(t))$ and $\theta''$.

\begin{figure}[h]     
    \centering
\includegraphics[width=50mm]{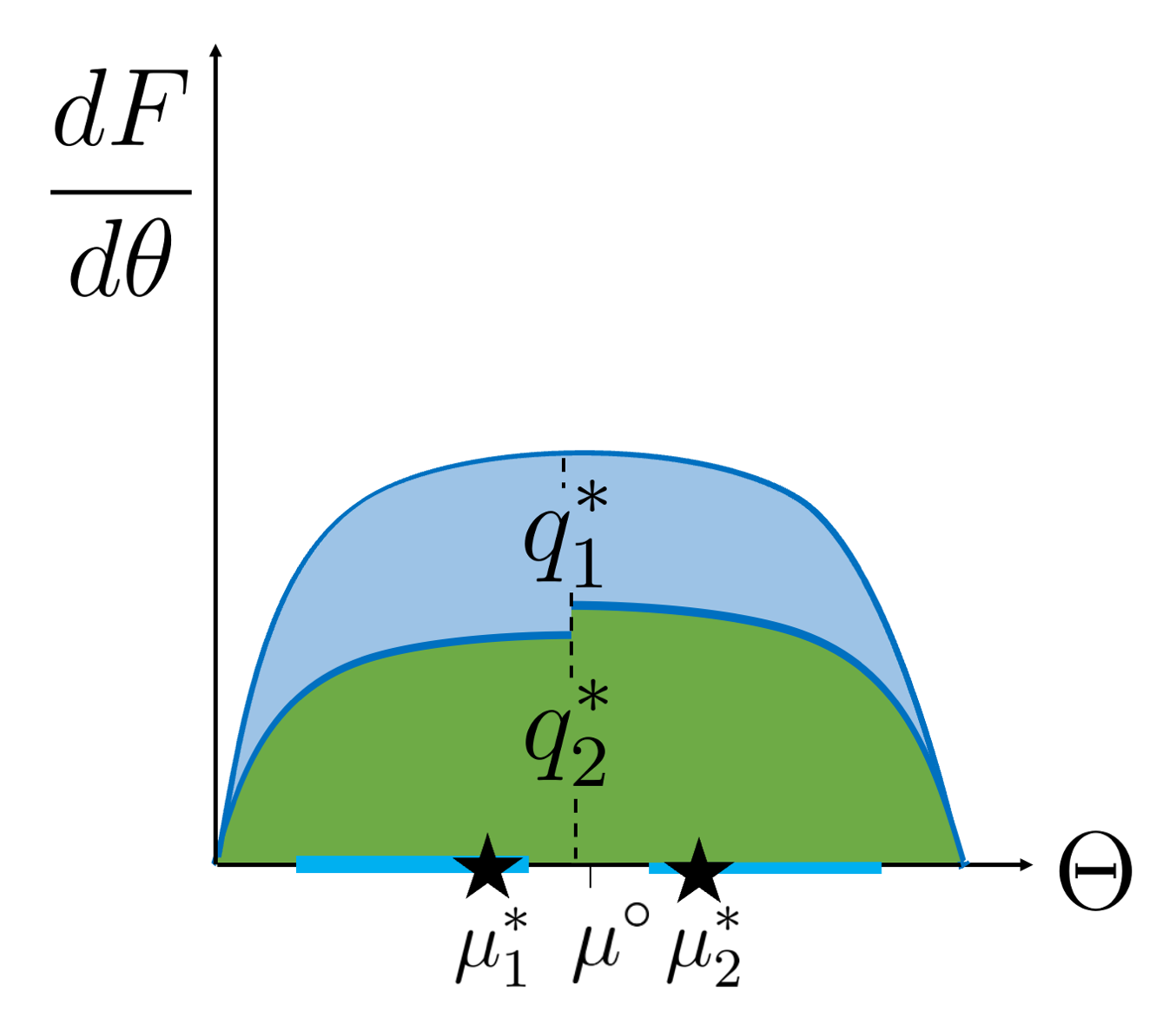}
\caption{The pdf for $\trueparam$ and the intervals $[\lol{k},\hil{k}]$ (blue lines) satisfying conditions for $\mathrm{(R4)}$-($\mathrm{R4a}$). The set of probability distributions for signals~$1$ and $2$ are denoted in blue and green respectively; the induced posterior means~$\posterior{1}$ and $\posterior{2}$ are marked as~$\star$.}\label{fig:result_2}
\end{figure}

%with the solution for the three cases from Proposition marked by the induced posterior means $\posterior{1}, \posterior{2}$ (star). The posteriors are induced by mapping the probability mass to signals as marked (green, blue).}

%\textcolor{blue}{[to be commented out:] Consequently, we focus on a more likely case -- subcase (R4a) -- where the conditions are met when $\dist$ has sufficient probability mass outside the gap between two intervals surrounding $\priormean$.}
\subsection{Proof of Lemma \ref{lemma:lp_discrete}}
\noindent For the discrete distribution, the optimization (see \eqref{eqn:gen_objective_stateful}) can be reformulated as: 
\begin{align*}
V_{H,\obj}^\ast &= \max_{\langle \{z_\param\}_{\param \in \{\nu_j\}_{j=1}^{N}},\I\rangle} \mathbb{E}[h(m(\posterior{i}),\trueparam)]\\
&= \max_{\langle \{z_\param\}_{\param \in \{\nu_j\}_{j=1}^{N}},\I\rangle} \sum_{j=1}^N \sum_{s\in|\I|} \mathbb{P}\{\signal=s,\trueparam=\nu_j\}\obj(m(\posterior{s}),\nu_j)\\
&= \max_{\langle \{z_\param\}_{\param \in \{\nu_j\}_{j=1}^{N}},\I\rangle} \sum_{j=1}^N \sum_{s\in|\I|} p_j z_{\nu_j}(s)\obj(m(\posterior{s}),\nu_j)
\end{align*}
Analogous to Lemma~\ref{lemma:num_signals_bdd}, the objective value remains unchanged when any two signals $s_1,s_2$ with $\posterior{s_1}, \posterior{s_2} \in [m^{-1}(y_{k-1}),m^{-1}(y_k))$ for any $k \in [K]$ are consolidated to a signal $s$ ($\posterior{s_1} \leq \posterior{s} \leq \posterior{s_2}$). Therefore, we define $\I = [K]$ with $m^{-1}(y_{\signal-1})\leq\posterior{\signal}<m^{-1}(y_\signal)$ for all $\signal \in \I$.
Choosing the parameterization $z_{ji} = p_j z_{\nu_j}(i)$, we can rewrite the above optimization problem as:
\begin{equation*}
\begin{array}{ll@{}ll}
\text{maximize}  & \displaystyle&\sum_{j=1}^N \sum_{i=1}^{K} c_{ji} z_{ji} &\\
\text{subject to}& \displaystyle&\sum_{i=1}^{K}  z_{ji} = p_j,  &j=1 ,\dots, N\\
   &\displaystyle &z_{ji} \geq 0,\hspace{0.5cm} &j=1 ,\dots, N,\quad i=1 ,\dots, K \\
    &\displaystyle m^{-1}(y_{i-1})& \leq  \posterior{i} = \frac{\sum_{j=1}^N \nu_j z_{ji}}{\sum_{j=1}^N z_{ji}} \leq m^{-1}(y_i), &i=1 ,\dots, K
\end{array}
\end{equation*}
Expanding the last inequalities we arrive at the specified linear program with optimal solution $\pi_{H,\obj}^\ast = \langle \I, \{z_{\param}\}_{\param \in \discpspace} \rangle$ where $\I = [K]$ and, for all $\signal \in \I$ and $j \in [N]$ setting $z_{\nu_j}(\signal) = 0$ if $p_j = 0$ and $z_{\nu_j}(\signal) = \frac{z_{ji}^\ast}{p_j}$.\qed

\subsection{Proof of Theorem \ref{thm:stateful}}
\label{sec:proofs_for_sec_4}
\noindent To analyze $\hat{\pi}_{\dist_\delta,h_\tau}$, we first note that by the Lipschitz continuity in $\fin$,  $\|h(;\theta)-h_\tau(;\theta)\|_{\infty} \leq \frac{\epsilon}{4}$ for all $\theta \in \paramspace$. Consequently, from Equation \eqref{eqn:gen_objective_stateful}, this implies that for any signalling mechanism $\pi$ and distribution $G$, $|V_{G,h}(\pi) - V_{G,h_\tau}(\pi)| \leq \frac{\epsilon}{4}$. Likewise, this implies that $|V_{G,h_\tau}(\pi_{G,h_\tau}^\ast)-V_{G,h}(\pi_{G,h}^\ast)| \leq \frac{\epsilon}{4}$ since the maximum operator contracts the difference between feasible solutions.

\noindent To show $\epsilon$-optimality of $\hat{\pi}_{\dist_\delta,h_\tau}$, we need to consider four signaling mechanisms. 
\begin{enumerate}
    \item $\pi_{\dist_\delta,\obj_\tau}^\ast = \langle \I_\delta, \{z^\delta_{\nu}\}_{\nu \in \discpspace}\rangle$ is the optimal solution  under the discrete distribution $\dist_\delta$ obtained by solving \textbf{LP($H,\mathbf{y}, \mathbf{c}$)} (Lemma~\ref{lemma:lp_discrete})  where $y_i = \frac{2i-1}{2\tau}$ and $c_{jk} = h(y_k;\nu_j)$.
    \item $\hat{\pi}_{\dist_\delta,\obj_\tau} = \langle \I_\delta , \{\hat{z}_\param\}_{\param \in \paramspace}\rangle$ such that for all $j \in [N], \param \in [\nu_{j-1},\nu_j)$, and $\signal \in \I_\delta$, we have $\hat{z}_\param(\signal) \coloneqq z^\delta_{\nu_j}(\signal)$. That is, the signal distribution in the state $\nu_j$ for the discretized optimal solution is applied to the entire corresponding interval of states in the continuous extension of the mechanism~$\pi_{\dist_\delta,\obj_\tau}^\ast$.
    \item $\pi_{\dist,\obj}^\ast = \langle \I, \{z^\ast_{\param}\}_{\param \in \paramspace} \rangle$ is the unknown true optimal signaling mechanism. This will be useful to bound the quality of $\hat{\pi}_{\dist_\delta,\obj}$.
    \item $\pi_{\dist,\obj}' = \langle \I, \{z'_\param\}_{\param \in \discpspace \subseteq \paramspace} \rangle$ is a discretized modification of $\pi_{\dist,\obj}^\ast$ obtained by averaging the signal distributions over intervals in $\dist_\delta$ as follows:
\end{enumerate}
\begin{align}
\label{eqn:z'}
z'_{\nu_j}(\signal) = \frac{\int_{\frac{j-1}{\delta}}^{\frac{j}{\delta}} z^\ast_\param(\signal) dF(\param)}{p_j}, \quad \forall j \in [N], \signal \in \I. 
\end{align}

% \noindent To summarize, the four signaling mechanisms are: 
% \begin{itemize}
% \item Optimal solution obtained by solving \textbf{LP($H,\mathbf{y}, \mathbf{c}$)} (Lemma~\ref{lemma:lp_discrete}): $\pi_{\dist_\delta,\obj}^\ast = \langle \I_\delta , \{z^\delta_{\param}\}_{\param \in \discpspace} \rangle$
% \item Proposed $\epsilon$-optimal solution (continuous adaptation of $\pi_{\dist_\delta,\obj}^\ast$): $\hat{\pi}_{\dist_\delta,\obj} = \langle \I_\delta , \{\hat{z}_\param\}_{\param \in \paramspace} \rangle$
% \item Optimal (unknown) solution to the original design problem: $\pi_{\dist,\obj}^\ast = \langle \I, \{z^\ast_{\param}\}_{\param \in \paramspace} \rangle$
% \item Averaging $\pi_{\dist,\obj}^\ast$ for the discretized distribution $\dist_\delta$ (constructed from $\pi_{\dist,\obj}^\ast$): $\pi_{\dist,h}' = \langle \I , \{z'_{\param}\}_{\param \in \paramspace^\delta}\rangle$
% \
% \end{itemize}
\noindent Observe that by the optimality of $\pi^\ast_{\dist_\delta,\obj_\tau}$:
\begin{align*}
    V_{\dist_\delta,\obj}(\pi_{\dist,\obj}')-V_{\dist_\delta,\obj}(\pi^\ast_{\dist_\delta,\obj_\tau}) &\leq \frac{\epsilon}{2} + V_{\dist_\delta,\obj_\tau}(\pi_{\dist,\obj}')-V_{\dist_\delta,\obj_\tau}(\pi^\ast_{\dist_\delta,\obj_\tau}) \\
    &\leq \frac{\epsilon}{2}
\end{align*}
Hence, :
\begin{align}
    V_{\dist,\obj}(\pi_{\dist,\obj}^\ast)&-V_{\dist,\obj}(\hat{\pi}_{\dist_\delta,\obj_\tau})
    =
    V_{\dist,\obj}(\pi_{\dist,\obj}^\ast)-V_{\dist_\delta,\obj}(\pi^\ast_{\dist_\delta,\obj}) + V_{\dist_\delta,\obj}(\pi^\ast_{\dist_\delta,\obj}) -V_{\dist,\obj}(\hat{\pi}_{\dist_\delta,\obj_\tau}) \nonumber \\ 
    &=
    V_{\dist,\obj}(\pi_{\dist,\obj}^\ast)-V_{\dist_\delta,\obj}(\pi_{\dist,\obj}') + V_{\dist_\delta,\obj}(\pi_{\dist,\obj}')-V_{\dist_\delta,\obj}(\pi^\ast_{\dist_\delta,\obj_\tau}) + V_{\dist_\delta,\obj}(\pi^\ast_{\dist_\delta,\obj_\tau}) -V_{\dist,\obj}(\hat{\pi}_{\dist_\delta,\obj_\tau}) \nonumber \\ 
    &\leq
    \underbrace{V_{\dist,\obj}(\pi_{\dist,\obj}^\ast)-V_{\dist_\delta,\obj}(\pi_{\dist,h}')}_{\textbf{(i)}} + \underbrace{V_{\dist_\delta,\obj}(\pi^\ast_{\dist_\delta,\obj_\tau}) -V_{\dist,\obj}(\hat{\pi}_{\dist_\delta,\obj_\tau})}_{\textbf{(ii)}} + \frac{\epsilon}{2} \label{eq:i_ii}
\end{align}
The term~\textbf{(i)} is the loss due to averaging the true optimum's signal distribution across each discretized interval. The term~\textbf{(ii)} represents the loss incurred by applying the discrete optimum's signal generation distribution across intervals. Following Def.~\ref{defn:eps_approx}, bounding each of these terms by $\frac{\epsilon}{4}$ is sufficient to ensure the $\epsilon$-optimality of $\hat{\pi}_{\dist_\delta,h_\tau}$. 

\noindent Observe that in~\textbf{(i)}, both $\pi_{\dist,\obj}^\ast$ and $\pi_{\dist,h}'$ use an identical set of signals $\I$; and in term~\textbf{(ii)} both $\pi^\ast_{\dist_\delta,\obj_\tau}$ and $\hat{\pi}_{\dist_\delta,\obj_\tau}$ use $\I_\delta$. From the construction of $\pi_{\dist,h}'$ (resp.~$\hat{\pi}_{\dist_\delta,\obj_\tau}$) one can conclude that for each signal $\signal \in \I$ (resp. $\signal\in\I_\delta$) the corresponding signal incidence probabilities $q_{\signal}^\ast$ and $q_{\signal}'$ (resp. $\hat{q}_{\signal}$ and $q_{\signal}^\delta$) are equal across $\pi_{\dist,h}^\ast$ and $\pi_{\dist,h}'$ (resp. $\pi_{\dist_\delta,h_\tau}^\ast$ and $\hat{\pi}_{\dist_\delta,h_\tau}$). Moreover, we can conclude from Lemmas~\ref{lemma:posteriors_close_I} and \ref{lemma:posteriors_close_I_delta} that the impact of our chosen discretization scheme on the posterior means induced by $\pi_{\dist,\obj}^\ast$ and $\pi_{\dist,h}'$ (i.e., $\posterior{\signal}^\ast$ and $\posterior{\signal}'$ for $\signal \in \I$) can be controlled; similarly for the posterior means induced by $\pi^\ast_{\dist_\delta,\obj_\tau}$ and $\hat{\pi}_{\dist_\delta,\obj_\tau}$  (i.e., $\posterior{\signal}^\delta$ and $\hat{\posterior{\signal}}$ for $\signal \in \I_\delta$).

\noindent We use these intermediate results to bound~$ V_{\dist,\obj}(\pi_{\dist,\obj}^\ast)-V_{\dist,\obj}(\hat{\pi}_{\dist_\delta,\obj_\tau})$ (see~\eqref{eq:i_ii}) for Lipschitz-continuous models. 

\begin{lemma}
\label{lemma:m_lipschitz}
If $\popdist$ is continuously differentiable with $0 < \frac{d\popdist}{dv} \leq \kappa$, $m(\cdot)$ is $C \kappa$-Lipschitz.
\end{lemma}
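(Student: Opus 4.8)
The plan is to work with the representation of the equilibrium map used in the proof of Lemma~\ref{lemma:m_smooth}. For $\mu \ge 0$ define $\psi_\mu(u) := G^{-1}(u) - c_1(u)\mu - c_2(u)$ on $u \in [0,1)$; then $\psi_\mu$ is continuous (all of $G^{-1}, c_1, c_2$ are) and strictly increasing (because $c_1$ is decreasing, $c_2$ weakly decreasing, $\mu \ge 0$, and $G^{-1}$ strictly increasing), and $m(\mu) = \inf\{u \ge 0 : \psi_\mu(u) \ge 0\}$. Two elementary facts drive the argument. (a) Since $c_1(1) = c_2(1) = 0$ while $G^{-1}(u)$ is bounded below by a positive constant as $u \uparrow 1$, we have $\psi_\mu(u) > 0$ for $u$ close to $1$, so $m(\mu) < 1$ for every $\mu$ in the (bounded) range of posterior means; consequently $\psi_\mu$ is continuous at $m(\mu)$, the set $\{\psi_\mu \ge 0\}$ is closed, so $\psi_\mu(m(\mu)) \ge 0$, and if moreover $m(\mu) > 0$ then letting $u \uparrow m(\mu)$ through $\{\psi_\mu < 0\}$ gives the equality $G^{-1}(m(\mu)) = c_1(m(\mu))\mu + c_2(m(\mu))$ (this is the equilibrium relation noted after Prop.~\ref{prop:eqbm_formula_prop_1}). (b) Since $G$ is continuously differentiable with $0 < \tfrac{dG}{dv} \le \kappa$, it is a strictly increasing bijection onto its range, so for $0 \le u \le u' < 1$ one has $u' - u = \int_{G^{-1}(u)}^{G^{-1}(u')} G'(t)\,dt \le \kappa\bigl(G^{-1}(u') - G^{-1}(u)\bigr)$, i.e. $G^{-1}$ increases at rate at least $1/\kappa$.

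Given these, I would fix $\mu_1 \le \mu_2$ in the domain and set $u_1 = m(\mu_1) \le u_2 = m(\mu_2)$ (monotonicity of $m$, Lemma~\ref{lemma:m_smooth}). If $u_1 = u_2$ there is nothing to show, so assume $u_2 > 0$; then $u_2 \in (0,1)$ is interior, the equilibrium equality in (a) gives $G^{-1}(u_2) = c_1(u_2)\mu_2 + c_2(u_2)$, and hence
\[
\psi_{\mu_1}(u_2) \;=\; G^{-1}(u_2) - c_1(u_2)\mu_1 - c_2(u_2) \;=\; c_1(u_2)\,(\mu_2 - \mu_1).
\]
Separately, $\psi_{\mu_1}(u_2) - \psi_{\mu_1}(u_1) = \bigl(G^{-1}(u_2) - G^{-1}(u_1)\bigr) - \mu_1\bigl(c_1(u_2) - c_1(u_1)\bigr) - \bigl(c_2(u_2) - c_2(u_1)\bigr)$, where, using $u_1 \le u_2$, the first term is $\ge (u_2 - u_1)/\kappa$ by (b), the second is $\ge 0$ ($c_1$ decreasing, $\mu_1 \ge 0$), and the third is $\ge 0$ ($c_2$ weakly decreasing). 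Combining with $\psi_{\mu_1}(u_1) \ge 0$ from (a) and the bound $c_1 \le C$,
\[
\frac{u_2 - u_1}{\kappa} \;\le\; \psi_{\mu_1}(u_2) - \psi_{\mu_1}(u_1) \;\le\; \psi_{\mu_1}(u_2) \;=\; c_1(u_2)(\mu_2 - \mu_1) \;\le\; C(\mu_2 - \mu_1),
\]
so $|m(\mu_2) - m(\mu_1)| = u_2 - u_1 \le C\kappa\,|\mu_2 - \mu_1|$, which is the claim.

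I expect the only real obstacle to be the care needed at the boundary of $[0,1]$: one must confirm $m(\mu) < 1$ (so $\psi_\mu$ is continuous at $m(\mu)$ and $c_1(m(\mu)) > 0$) and deal with the corner $m(\mu_1) = 0$, where the equilibrium equality weakens to the inequality $\psi_{\mu_1}(0) \ge 0$ --- which is harmless, since the estimate above only uses $\psi_{\mu_1}(u_1) \ge 0$, never the equality. Apart from that, the whole proof rests on the single quantitative input $G^{-1}(u') - G^{-1}(u) \ge (u' - u)/\kappa$ supplied by the upper bound on $G'$, together with monotonicity of $c_1$ and $c_2$.
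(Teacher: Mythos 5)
Your proof is correct and is essentially the paper's own argument: both rest on the equilibrium (in)equalities $G^{-1}(m(\mu_1))\ge c_1(m(\mu_1))\mu_1+c_2(m(\mu_1))$ and $G^{-1}(m(\mu_2))\le c_1(m(\mu_2))\mu_2+c_2(m(\mu_2))$, the monotonicity of $c_1,c_2$ with the bound $c_1\le C$, and the single quantitative input that $0<G'\le\kappa$ makes $G$ $\kappa$-Lipschitz (equivalently, $G^{-1}$ expands distances by at least $1/\kappa$). The only differences are presentational --- you argue directly on the $G^{-1}$ side via $\psi_\mu$ where the paper applies $G$ to both sides and derives a contradiction --- and your explicit handling of the corner cases $m(\mu)=0$ and $m(\mu)<1$ is, if anything, slightly more careful than the paper's.
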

\begin{proof}
\noindent Recall that $m(\mu) \coloneqq \inf\{u \geq 0: G^{-1}(u) \geq c_1(u)\mu+c_2(u)\}$ and, by assumption, $G$ is continuously differentiable and monotone so $G^{-1}$ is continuous. Since $c_1, c_2$ are also continuous:
\begin{align*}
    m(\mu) &= \begin{cases}
    u & G^{-1}(u) = c_1(u)\mu + c_2(u), 0 < u < 1 \\
    0 & G^{-1}(0) \geq c_1(0)\mu + c_2(0) \\
    1 & G^{-1}(1) \leq c_1(1)\mu + c_2(1)\\
    \end{cases}
\end{align*}
Suppose by contradiction the claim is false, then there exists $0\leq \mu_1 < \mu_2 \leq M$ with $z_2 \coloneqq m(\mu_2)$, $z_1 \coloneqq m(\mu_1)$ such that $z_2 - z_1 > C \kappa (\mu_2-\mu_1)$. Since $z_1 > z_2$:
\begin{align*}
    \popdist^{-1}(z_1) &\geq c_1(z_1)\mu_1+c_2(z_1)\\
    \popdist^{-1}(z_2) &\leq c_1(z_2)\mu_2+c_2(z_2)
\end{align*}
Therefore:
\begin{align*}
    z_1 &\geq \popdist(c_1(z_1)\mu_1+c_2(z_1))\\
    z_2 &\leq \popdist(c_1(z_2)\mu_2+c_2(z_2))
\end{align*}
Observe that:
\begin{align*}
    \popdist(c_1(z_2)\mu_2+c_2(z_2)) - \popdist(c_1(z_1)\mu_1+c_2(z_1)) &\geq z_2 - z_1 > C \kappa (\mu_2-\mu_1)
\end{align*}
However:
\begin{align*}
    \popdist(c_1(z_2)\mu_2+c_2(z_2)) - \popdist(c_1(z_1)\mu_1+c_2(z_1)) &\leq \kappa(c_1(z_2)\mu_2+c_2(z_2)-c_1(z_1)\mu_1-c_2(z_1)) \\
    &\leq \kappa(c_1(z_2)\mu_2+c_2(z_2)-c_1(z_1)\mu_1-c_2(z_2))\\
    &\leq \kappa(c_1(z_2)\mu_2-c_1(z_1)\mu_1) \\
    &\leq \kappa(c_1(z_2)\mu_2-c_1(z_2)\mu_1) \\
    &\leq C \kappa(\mu_2-\mu_1)
\end{align*}
This is a contradiction, hence $m$ must be $C \kappa$-Lipschitz. 
\end{proof}

\begin{lemma}
    \label{lemma:posteriors_close_I}
    For any $\signal \in \I$, $q_{\signal}^\ast = q_{\signal}'$ and $0 \leq \posterior{\signal}^\ast-\posterior{\signal}' \leq \frac{1}{\delta}$.
\end{lemma}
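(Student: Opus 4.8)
The plan is a direct computation from the averaging definition~\eqref{eqn:z'}, using the single structural fact that $\nu_j$ is the \emph{left} endpoint of the $j$-th discretization cell $[\tfrac{j-1}{\delta},\tfrac{j}{\delta})$, so that the $\delta$-discretization displaces probability mass leftward by strictly less than $\tfrac1\delta$.

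First I would establish $q_i^\ast = q_i'$. Applying~\eqref{eqn:q_i} to $\pi_{\dist,\obj}'$ under $\dist_\delta$ and substituting the definition of $z'_{\nu_j}(i)$, the factors $p_j$ cancel cell by cell:
\[
q_i' = \sum_{j=1}^N p_j\, z'_{\nu_j}(i) = \sum_{j=1}^N \int_{(j-1)/\delta}^{j/\delta} z^\ast_\param(i)\, d\dist(\param) = \int_0^M z^\ast_\param(i)\, d\dist(\param) = q_i^\ast.
\]
For a cell with $p_j = 0$ the corresponding integral over $[\tfrac{j-1}{\delta},\tfrac{j}{\delta})$ also vanishes (the integrand lies in $[0,1]$ and the cell has zero $\dist$-mass), so the identity is unaffected and $z'_{\nu_j}(\cdot)$ may be chosen arbitrarily there.

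Next I would compute the posterior means, assuming without loss of generality $q_i^\ast > 0$ (a signal realized with probability zero can be discarded). By~\eqref{eqn:theta_i}, $\posterior{i}' = \tfrac1{q_i'}\sum_j \nu_j p_j z'_{\nu_j}(i) = \tfrac1{q_i^\ast}\sum_j \nu_j \int_{(j-1)/\delta}^{j/\delta} z^\ast_\param(i)\, d\dist(\param)$, whereas $\posterior{i}^\ast = \tfrac1{q_i^\ast}\int_0^M \param\, z^\ast_\param(i)\, d\dist(\param)$. Subtracting and regrouping the integral over the partition gives
\[
\posterior{i}^\ast - \posterior{i}' = \frac{1}{q_i^\ast}\sum_{j=1}^N \int_{(j-1)/\delta}^{j/\delta} (\param - \nu_j)\, z^\ast_\param(i)\, d\dist(\param).
\]
On the $j$-th cell one has $\nu_j \le \param < \nu_j + \tfrac1\delta$, hence $0 \le \param - \nu_j < \tfrac1\delta$; since $z^\ast_\param(i) \ge 0$ each integrand is nonnegative, yielding $\posterior{i}^\ast - \posterior{i}' \ge 0$, and is bounded above by $\tfrac1\delta z^\ast_\param(i)$, yielding $\posterior{i}^\ast - \posterior{i}' \le \tfrac{1}{\delta q_i^\ast}\sum_j \int_{(j-1)/\delta}^{j/\delta} z^\ast_\param(i)\, d\dist(\param) = \tfrac1\delta$, which is exactly the claimed bound.

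There is no substantive obstacle here — the argument is mechanical. The only points needing a word of care are the zero-mass cells (handled above), the measurability of $\param \mapsto z^\ast_\param(i)$ (part of the standing regularity on signaling mechanisms, which makes the integrals well defined), and the boundary point $\param = M$ left uncovered by the half-open cells, which carries zero $\dist$-mass because $\dist$ is continuous. The entire content of the lemma is that $\delta$-discretization shifts probability mass to the left by at most $\tfrac1\delta$, and the displayed identity for $\posterior{i}^\ast - \posterior{i}'$ makes this precise.
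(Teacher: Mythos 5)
Your proof is correct and follows essentially the same route as the paper's: cancel the $p_j$ factors cell by cell to get $q_i'=q_i^\ast$, then regroup the difference of posterior means as $\frac{1}{q_i^\ast}\sum_j\int_{(j-1)/\delta}^{j/\delta}(\param-\nu_j)z^\ast_\param(i)\,d\dist(\param)$ and bound the integrand using $0\le\param-\nu_j<\tfrac1\delta$. In fact your version writes the signs correctly (the paper's displayed computation has two compensating sign slips, stating $0<\nu_k-\param\le\tfrac1\delta$ where $\nu_k$ is the left endpoint), and your remarks on zero-mass cells and the endpoint $\param=M$ are harmless additional care.
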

\begin{proof}
For any $i \in \I$, the discretization scheme and construction of $z_{\nu_j}'$ implies that:
\begin{align*}
q_\signal' &= \sum_{j=1}^N \P_{\param \sim \dist_\delta}[\param = \nu_j] z'_{\nu_j}(\signal) \\
&= \sum_{j=1}^N\int_{\frac{j-1}{\delta}}^{\frac{j}{\delta}} dF(\param) \frac{\int_{\frac{j-1}{\delta}}^{\frac{j}{\delta}} z^\ast_\param(\signal) dF(\param)}{\int_{\frac{j-1}{\delta}}^{\frac{j}{\delta}} dF(\param)}\\
&= \sum_{j=1}^N \int_{\frac{j-1}{\delta}}^{\frac{j}{\delta}} z^\ast_\param(\signal) dF(\param) = \int_{0}^{\maximal} z^\ast_\param(\signal) dF(\param) = q_{\signal}^\ast
\end{align*}
We now bound the difference between $\posterior{i}^\ast$ and $\hat{\posterior{i}}$:
\begin{align*}
\posterior{\signal}^\ast-\posterior{\signal}' &= \frac{\int_{0}^{M} \param z^\ast_{\param}(i) d\dist(\param) }{\int_{0}^{M} z^\ast_{\param}(i) d\dist(\param) }-\frac{\int_{0}^{M} \param z'_{\param}(i) d\dist_\delta(\param) }{\int_{0}^{M} z'_{\param}(i) d\dist_\delta(\param) }\\
&= \frac{\int_{0}^{M} \param z^\ast_{\param}(i) d\dist_\delta(\param) }{q_i^\ast}-\frac{\int_{0}^{M} \param z'_{\param}(i) d\dist(\param) }{q_i'}\\
&=\frac{1}{q_i^\ast}\Big( \int_{0}^{M} \param z^\ast_{\param}(i) d\dist(\param)- \int_{0}^{M} \param z'_{\param}(i) d\dist_\delta(\param)\Big)\\
&= \frac{1}{q_i^\ast}\Big(\sum_{k=1}^{N}\int_{\frac{k-1}{\delta}}^{\frac{k}{\delta}} \param z^\ast_\param(i) dF(\param)-\sum_{k=1}^{N} \nu_k z'_{\nu_k}(i) \int_{\frac{k-1}{\delta}}^{\frac{k}{\delta}}d\dist(\param)\Big)\\
&= \frac{1}{q_i^\ast}\sum_{k=1}^{N} \int_{\frac{k-1}{\delta}}^{\frac{k}{\delta}}(\nu_k-\param)z^\ast_{\theta}(i)d\dist(\param)\\
\intertext{By the discretization scheme, for all $k$, $\theta \in [\frac{k-1}{\delta},\frac{k}{\delta}]$, we know that $0 < \nu_k-\theta \leq \frac{1}{\delta}$:}
0 &\leq \frac{1}{q_i^\ast}\sum_{k=1}^{N} \int_{\frac{k-1}{\delta}}^{\frac{k}{\delta}}(\nu_k-\param)z^\ast_{\theta}(i)d\dist(\param) \leq \frac{1}{\delta q_i^\ast}\sum_{k=1}^{N} \int_{\frac{k-1}{\delta}}^{\frac{k}{\delta}}z^\ast_{\theta}(i)d\dist(\param)\\
0 &\leq \frac{1}{q_i^\ast}\sum_{k=1}^{N} \int_{\frac{k-1}{\delta}}^{\frac{k}{\delta}}(\nu_k-\param)z^\ast_{\theta}(i)d\dist(\param) \leq \frac{1}{\delta q_i^\ast}q_i^\ast\\
\intertext{Therefore:}
0 &\leq \posterior{\signal}^\ast-\posterior{\signal}' \leq \frac{1}{\delta} \quad
\qedsymbol
\end{align*}

\end{proof}
\begin{lemma}
    \label{lemma:posteriors_close_I_delta}
    For any $\signal \in \I_\delta$, $q_{\signal}^\delta = \hat{q}_{\signal}$ and $0 \leq \hat{\posterior{\signal}}- \posterior{\signal}^\delta \leq \frac{1}{\delta}$.
\end{lemma}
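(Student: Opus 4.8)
The plan is to mirror the proof of Lemma~\ref{lemma:posteriors_close_I} almost verbatim: the present claim is its discrete-to-continuous counterpart, with the roles of $\dist$ and $\dist_\delta$ interchanged (now $\dist_\delta$ is the coarse distribution and $\dist$ the fine one over which we integrate), so the same two-step template applies. First I would establish the equality of signal probabilities. By definition $\hat q_\signal = \int_0^\maximal \hat z_\param(\signal)\,d\dist(\param)$; splitting this integral over the discretization intervals $[\tfrac{j-1}{\delta},\tfrac{j}{\delta})$ of Definition~\ref{defn:discretized_dist}, on each of which $\hat z_\param(\signal)$ equals the constant $z^\delta_{\nu_j}(\signal)$, and using $\int_{(j-1)/\delta}^{j/\delta} d\dist(\param)=p_j$, one gets $\hat q_\signal = \sum_{j=1}^N z^\delta_{\nu_j}(\signal)\,p_j = q_\signal^\delta$. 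Here I would also note that $\I_\delta$ may be taken to contain only signals with $q_\signal^\delta>0$, so that $\posterior{\signal}^\delta$ and $\hat{\posterior{\signal}}$ are well defined; signals of probability zero are never sent and can be discarded without affecting the mechanism or its value.

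For the posterior-mean bound, using $\hat q_\signal = q_\signal^\delta$ I would write
\[
\hat{\posterior{\signal}} - \posterior{\signal}^\delta \;=\; \frac{1}{\hat q_\signal}\Big(\int_0^\maximal \param\,\hat z_\param(\signal)\,d\dist(\param) \;-\; \sum_{j=1}^N \nu_j\, z^\delta_{\nu_j}(\signal)\, p_j\Big),
\]
and then split the integral interval by interval, again replacing $\int_{(j-1)/\delta}^{j/\delta} d\dist(\param)$ by $p_j$ inside the second sum, so that the two terms combine into $\frac{1}{\hat q_\signal}\sum_{j=1}^N \int_{(j-1)/\delta}^{j/\delta}(\param-\nu_j)\,\hat z_\param(\signal)\,d\dist(\param)$. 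Because $\dist_\delta$ places the mass of each interval at its \emph{left} endpoint $\nu_j=\tfrac{j-1}{\delta}$ (the same feature that gives $\E_\dist[\trueparam]\ge\E_{\dist_\delta}[\nu]$ noted after Definition~\ref{defn:discretized_dist}), we have $0\le\param-\nu_j\le\tfrac1\delta$ for every $\param$ in the $j$-th interval. Bounding the integrand between $0$ and $\tfrac1\delta\,\hat z_\param(\signal)$ and re-summing via the signal-probability identity just established then yields $0\le\hat{\posterior{\signal}}-\posterior{\signal}^\delta\le\tfrac1\delta$, as claimed.

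I do not expect a genuine obstacle here: the argument is a routine change-of-summation/averaging computation that uses neither monotonicity nor Lipschitz input and makes no appeal to $m(\cdot)$. The only points needing care are bookkeeping ones — keeping the discretization intervals aligned with the convention of Definition~\ref{defn:discretized_dist} so that $\param-\nu_j$ comes out nonnegative (which is precisely what produces the \emph{one-sided} bound $\hat{\posterior{\signal}}\ge\posterior{\signal}^\delta$ with gap at most the mesh width $\tfrac1\delta$), and the trivial reduction to signals of positive probability so that the posterior means appearing in the statement are defined.
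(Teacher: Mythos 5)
Your proposal is correct and is precisely the argument the paper intends: its proof of this lemma is literally ``analogous to Lemma~\ref{lemma:posteriors_close_I},'' and you have carried out that analogy — equality of signal probabilities by splitting the integral over the discretization cells where $\hat z_\param(\signal)$ is constant, then the interval-by-interval bound $0\le\param-\nu_j\le\tfrac1\delta$ yielding the one-sided gap of at most the mesh width. Your bookkeeping is in fact cleaner than the paper's template, which in the proof of Lemma~\ref{lemma:posteriors_close_I} writes the difference as $(\nu_k-\param)$ with the claim $0<\nu_k-\param\le\tfrac1\delta$ (a sign slip, since $\nu_k$ is the \emph{left} endpoint); your orientation $\param-\nu_j\ge 0$ is the one consistent with Definition~\ref{defn:discretized_dist} and with the stated direction of the inequality $\hat{\posterior{\signal}}\ge\posterior{\signal}^\delta$.
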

\begin{proof}
    The proof is analogous to that of Lemma~\ref{lemma:posteriors_close_I}.
\end{proof}

% Recalling \eqref{eq:i_ii}, observe that:
% \begin{align}
% V_{\dist,\obj}(\pi_{\dist,\obj}^\ast)-V_{\dist,\obj}(\hat{\pi}_{\dist_\delta,\obj_\tau})
    % &=
    % V_{\dist,\obj}(\pi_{\dist,\obj}^\ast)-V_{\dist_\delta,\obj}(\pi^\ast_{\dist_\delta,\obj}) + V_{\dist_\delta,\obj}(\pi^\ast_{\dist_\delta,\obj}) -V_{\dist,\obj}(\hat{\pi}_{\dist_\delta,\obj}) \nonumber \\ 
    % &=
    % V_{\dist,\obj}(\pi_{\dist,\obj}^\ast)-V_{\dist_\delta,\obj}(\pi_{\dist,h}') + V_{\dist_\delta,\obj}(\pi_{\dist,h}')-V_{\dist_\delta,\obj}(\pi^\ast_{\dist_\delta,\obj}) + V_{\dist_\delta,\obj}(\pi^\ast_{\dist_\delta,\obj}) -V_{\dist,\obj}(\hat{\pi}_{\dist_\delta,\obj}) \nonumber \\ 
    % &\leq
    % \underbrace{V_{\dist,\obj}(\pi_{\dist,\obj}^\ast)-V_{\dist_\delta,\obj}(\pi_{\dist,h}')}_{\textbf{(i)}} + \underbrace{V_{\dist_\delta,\obj}(\pi^\ast_{\dist_\delta,\obj}) -V_{\dist,\obj}(\hat{\pi}_{\dist_\delta,\obj})}_{\textbf{(ii)}} \label{eq:i_ii_iii}
    % \end{align}
\noindent The proof proceeds in two parts - bounding terms \textbf{(i)} and \textbf{(ii)} from \eqref{eq:i_ii}. 

\noindent Using 
Lemma \ref{lemma:m_lipschitz} and the conditions of the theorem, we can conclude that planner's objective function $\obj(m(\posterior{\signal}), \trueparam)$ is uniformly $C\kappa\eta_{1}$-Lipschitz in $\posterior{\signal}$.
This is immediate by the conservation of Lipschitz continuity under composition.

We can now analyze (i):
\begin{align*}
       V_{\dist,\obj}(\pi_{\dist,\obj}^\ast)-V_{\dist_\delta,\obj}(\pi_{\dist,h}') &= \mathbb{E}_{\param^\ast \sim \dist, i \sim z^\ast_{\param}}[\obj(m(\posterior{\signal}); \param^\ast)] - \mathbb{E}_{\param' \sim \dist_\delta, \signal \sim 
       \hat{z}_{\param'}}[\obj(m(\posterior{\signal}'); \param')] \\
       &= \sum_{i \in \I}\int_{\param \in \paramspace}\obj(m(\posterior{\signal}); \param) z_{\param}(i) dF(\param) - \sum_{i \in \I}\sum_{\nu \in \paramspace_\delta}\obj(m(\posterior{\signal}'); \nu) p_\nu z_{\nu}(i)  \\
       % &= \sum_{i \in \I}\sum_{k=1}^{N} \int_{\frac{k-1}{\delta}}^{\frac{k}{\delta}}\obj(m(\posterior{\signal}); \param) z_{\param}(i) d\dist(\param) - \sum_{i \in \I}\sum_{k=1}^{N} \int_{\frac{k-1}{\delta}}^{\frac{k}{\delta}}\obj(m(\posterior{\signal}'); \nu) z_{\nu_k}(i)d\dist(\param)  \\
       &= \sum_{i \in \I}\sum_{k=1}^{N}  \Big(\int_{\frac{k-1}{\delta}}^{\frac{k}{\delta}}\obj(m(\posterior{\signal}); \param) z_{\param}(i) d\dist(\param) - \int_{\frac{k-1}{\delta}}^{\frac{k}{\delta}}\obj(m(\posterior{\signal}'); \nu_k) z_{\nu_k}(i) d\dist(\param) \Big) \\
        &\stackrel{\eqref{eqn:z'}}{=} \sum_{i \in \I}\sum_{k=1}^{N}  \Big(\int_{\frac{k-1}{\delta}}^{\frac{k}{\delta}}\big(\obj(m(\posterior{\signal}); \param) -\obj(m(\posterior{\signal}'); \nu_k)\big)z_{\param}(i) d\dist(\param) \Big) \\
        \intertext{By Lemma \ref{lemma:posteriors_close_I}, since $|\nu_k-\param| < \frac{1}{\delta}$ for all $\param \in [\frac{k-1}{\delta},\frac{k}{\delta}]$:}
        &\leq \sum_{i \in \I}\sum_{k=1}^{N}  \Big(\int_{\frac{k-1}{\delta}}^{\frac{k}{\delta}}\big(\obj(m(\posterior{\signal}); \nu_k) -\obj(m(\posterior{\signal}'); \nu_k)+\frac{\eta_{2}}{\delta}\big)z_{\param}(i) d\dist(\param) \Big) \\
         &\leq \sum_{i \in \I}\sum_{k=1}^{N}  \Big(\big(\frac{C\kappa\eta_{1}}{\delta}+\frac{\eta_{2}}{\delta}\big)z_{\param}(i) d\dist(\param) \Big)  = \frac{C\kappa\eta_{1}}{\delta}+\frac{\eta_{2}}{\delta} < \frac{\epsilon}{4}
\end{align*}
We can analogously simplify expression (ii) using the same decomposition. 
\begin{align*}
       V_{\dist_\delta,\obj}(\pi^\ast_{\dist_\delta,\obj_\tau}) -V_{\dist,\obj}(\hat{\pi}_{\dist_\delta,\obj_\tau}) &= \mathbb{E}_{\param^\ast \sim \dist, i \sim z^\ast_{\param}}[\obj(m(\posterior{\signal}); \param^\ast)] - \mathbb{E}_{\param' \sim \dist_\delta, \signal \sim 
       \hat{z}_{\param'}}[\obj(m(\posterior{\signal}'); \param')] \\
       &= \sum_{i \in \I}\int_{\param \in \paramspace}\obj(m(\posterior{\signal}); \param) z_{\param}(i) dF(\param) - \sum_{i \in \I}\sum_{\nu \in \paramspace_\delta}\obj(m(\posterior{\signal}'); \nu) p_\nu z_{\nu}(i)  \\
       % &= \sum_{i \in \I}\sum_{k=1}^{N} \int_{\frac{k-1}{\delta}}^{\frac{k}{\delta}}\obj(m(\posterior{\signal}); \param) z_{\param}(i) d\dist(\param) - \sum_{i \in \I}\sum_{k=1}^{N} \int_{\frac{k-1}{\delta}}^{\frac{k}{\delta}}\obj(m(\posterior{\signal}'); \nu) z_{\nu_k}(i)d\dist(\param)  \\
       &= \sum_{i \in \I}\sum_{k=1}^{N}  \Big(\int_{\frac{k-1}{\delta}}^{\frac{k}{\delta}}\obj(m(\posterior{\signal}); \param) z_{\param}(i) d\dist(\param) - \int_{\frac{k-1}{\delta}}^{\frac{k}{\delta}}\obj(m(\posterior{\signal}'); \nu_k) z_{\nu_k}(i) d\dist(\param) \Big) \\
        &= \sum_{i \in \I}\sum_{k=1}^{N}  \Big(\int_{\frac{k-1}{\delta}}^{\frac{k}{\delta}}\big(\obj(m(\posterior{\signal}); \param) -\obj(m(\posterior{\signal}'); \nu_k)\big)z_{\param}(i) d\dist(\param) \Big) \\
        \intertext{By Lemma \ref{lemma:posteriors_close_I} and Lemma \ref{lemma:m_lipschitz} and since $|\nu_k-\param| < \frac{1}{\delta}$ for all $\param \in [\frac{k-1}{\delta},\frac{k}{\delta}]$:}
        &\leq \sum_{i \in \I}\sum_{k=1}^{N}  \Big(\int_{\frac{k-1}{\delta}}^{\frac{k}{\delta}}\big(\obj(m(\posterior{\signal}); \nu_k) -\obj(m(\posterior{\signal}'); \nu_k)+\frac{\eta_{2}}{\delta}\big)z_{\param}(i) d\dist(\param) \Big) \\
         &\leq \sum_{i \in \I}\sum_{k=1}^{N}  \Big(\big(\frac{C\kappa\eta_{1}}{\delta}+\frac{\eta_{2}}{\delta}\big)z_{\param}(i) d\dist(\param) \Big) = \frac{C\kappa\eta_{1}}{\delta}+\frac{\eta_{2}}{\delta} < \frac{\epsilon}{4}
\end{align*}
% Therefore, $V_{\dist,\obj}(\pi_{\dist,\obj}^\ast)&-V_{\dist,\obj}(\hat{\pi}_{\dist_\delta,\obj}) \leq \frac{\epsilon}{2}$. 

\noindent Hence, returning to the general form, $V_{\dist,\obj}(\pi^\ast_{\dist,\obj}) -V_{\dist,\obj}(\hat{\pi}_{\dist_\delta,\obj_\tau}) \leq \epsilon$ and thus $\hat{\pi}_{\dist_\delta,\obj_\tau}$ is $\epsilon$-optimal. \qedsymbol
% \subsection{Figures for Sec. \ref{sec:stateful}}

% \subsubsection{Distribution of $(\trueparam,\fin_{\pi_{F,h}^\ast}(\signal))$}
% \label{subsec:theta_y_plots}

\subsection{Infectious Cost Model}
\label{appendix:infectious_cost_model} 
\noindent We justify the agent's cost of being of infected \eqref{eqn:infectious_cost} using a simple epidemiological model. We refer to an activity-based model on a complete graph discussed in \cite{hota_generalized_2021,allen_introduction_2008}. Specifically, consider a unit-mass of non-atomic agents over two periods $t\in\{0,1\}$, where each agent begins at $t=0$ in one of the three possible states: susceptible ($S$), asymptomatic ($X$) and symptomatic ($Y$). In this model, both the asymptomatic and symptomatic agents can transmit the disease. Denote the infectious state of each agent $i \in [0,1]$ at time $t$ by $\chi_i(t) \in \{S,X,Y\}$. Assuming that the symptomatic individuals are required to self-isolate, the remaining agents $\mathcal{P} \coloneqq \{i: \chi_i(0) \neq Y\}$ are subject to the decision-making process we consider in Sec. \ref{sec:model}. Since the remaining agents in $\mathcal{P}$ cannot exactly know their existing state, we assume that conditioned on $\chi_i(0) \neq Y$, each agent $i\in\mathcal{P}$ is independently assigned $\chi_i(0) = S$ with probability $p$. Letting the action of agent $i$ in period $t=0$ be $a_i \in \{\ws,\wrr\}$, the mass of agents working in person is $m = \int_{i \in \mathcal{P}} \mathbb{I}\{a_i=\ws \}$ and the mass of asymptomatic agents working in-person is $m_X = \int_{i \in \mathcal{P}} \mathbb{I}\{a_i=\ws \}\mathbb{I}\{\chi_i(0)= X\}$. Given the risk of contracting disease from a single contact -- what we refer to as the risk parameter $\param$ -- any initially susceptible agent $i$ with $\chi_i(0) = S$ and $a_i = \ws$ transitions to being infected in period $t=1$ (i.e., $\chi_i(1) \in \{X,Y\}$) with probability $\param m_X$ for small $\param$. Only susceptible agents will pay an incremental infectious cost as the remaining agents were already infected. Specifically, agents $i$ incur a cost $\gamma$ if and only if $\chi_i(0) = S$ and $\chi_i(1) \in \{X,Y\}$. Hence, if $a_i = \wrr$, agent $i$ has no contact with other individuals and hence the infection cost she fares in expectation is 0. On the other hand, if $a_i = \ws$, then agent $i$ pays an expected cost $\beta(\param,m)$:
\begin{align*}
  \beta(\param,m) &=  \mathbb{E}[\gamma \indicator{\chi_i(0) =S \wedge \chi_i(1) \in \{X,Y\}}] \\
  &= \gamma \mathbb{P}[\chi_i(0) =S]\mathbb{P}[\chi_i(1) \in \{X,Y\}\}|\chi_i(0) =S] \\
  &= \gamma (1-p) \param \mathbb{E}[m_X] \\
  &= \gamma p(1-p) \param m
\end{align*}

\noindent This model suggests that an agent's expected cost of infection has a linear dependence on $\param$ and the mass of agents working in-person $m$ (which is $1-\fin$ in our setting). This is consistent with the functional form of infectious costs in \eqref{eqn:infectious_cost}; in particular, when choosing $c_1(u) = 1-u$ and $c_2(u) = 0$. More generally, as the network structure underlying the infection dynamics becomes specialized or other diseases become intermingled, the associated $c_i$ may be better estimated through other functions that satisfy the assumptions we make on these terms. For the purpose of numerical experiments in Sec. \ref{subsec:numerical_state_indpt}, we consider $\beta(\param, \fin)~=~\param (1-\fin)$.

\subsection{Optimal signaling against No-Information and Full-Information Benchmarks}
\label{subsec:numerical_state_indpt}
\begin{figure}[h!]
    \centering
    \includegraphics[width=6cm]{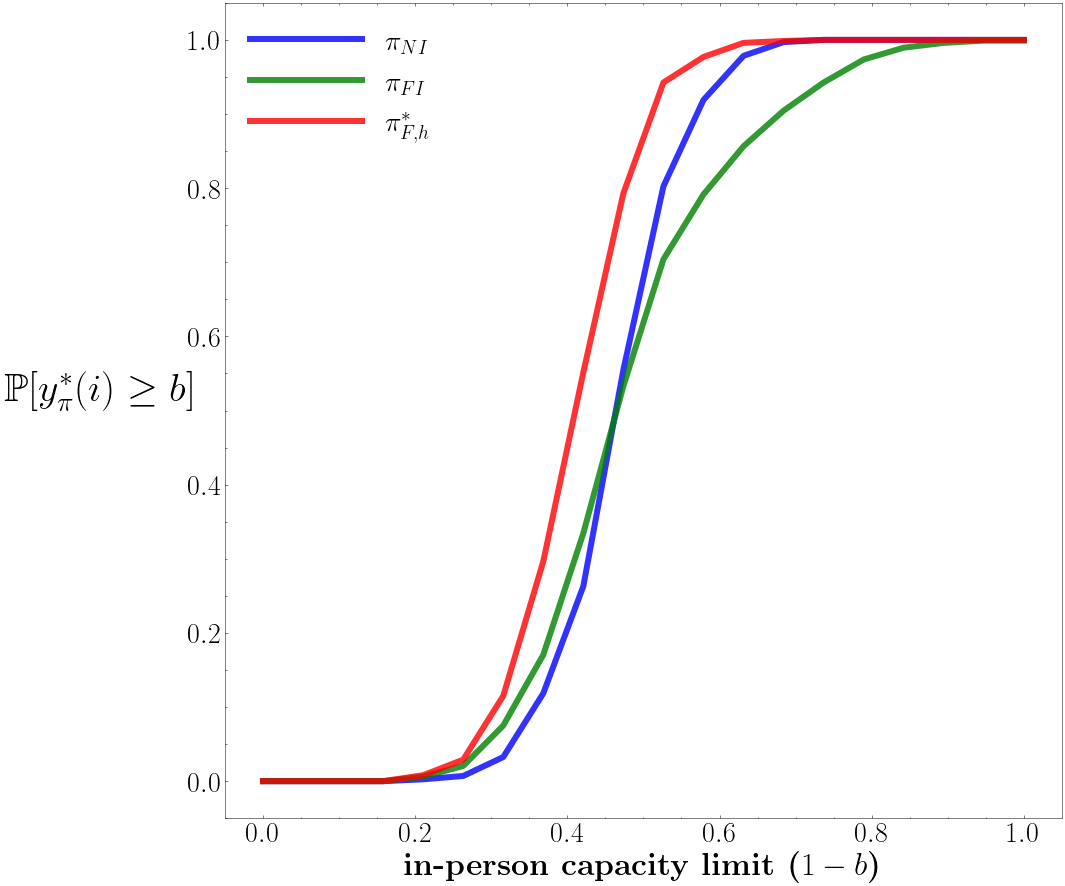}
    \caption{Capacity compliance across non-informative, fully-informative and optimal signaling.}
    \label{fig:stateless_carying_capacity}
\end{figure}
We present a numerical example to compare the optimal signaling mechanism in Theorem~\ref{thm:r123} with no-information and full-information benchmarks. Consider the planner's set-based preference~$\goal = \{\fin : \fin \geq b\}$; i.e., the planner prefers that the mass of in-person agents $(1-\fin)$ is below a threshold capacity limit~$(1-b)$ of the workplace facility. Let $\dist$ be uniform on $[5, 20]$ ($\priormean=12.5$) and $\popdist$ uniform on~$[0,10]$. The planner seeks to maximize the probability that in-person mass complies with the capacity limit. 

\noindent We vary $b$ from $0$ to $1$ in increments of~$0.05$ and use 10,000 scenario runs for each~$b$ to obtain the outcomes of optimal signaling mechanism and the two benchmarks. The results are averaged across scenarios and shown in Fig.~\ref{fig:stateless_carying_capacity}. The optimal signaling mechanism indeed provides higher compliance relative to the two benchmarks, but the improvement reduces as $b \nearrow 1$ and as $b \searrow 1$. As $b \nearrow 1$, the intersection between the outcomes achievable in equilibrium and the ones preferred by the planner progressively reduces to $0$, hence the effectiveness of signaling in influencing agents decreases. On the other hand, as~$b \searrow 1$ the set of acceptable outcomes grows to encompass all outcomes, and the optimal signaling as well as no- and full-information benchmarks approach full compliance. Thus, the value of optimal signaling decreases as the planner's set of acceptable outcomes grows. 

\subsection{State-dependent non-MPS mechanisms at optimality}
\label{ex:statefulpool}
Consider $\obj(\fin,\trueparam) = |\frac{\trueparam}{3} - y|$ with $m(\mu) = \mu$. Let $\dist$ be uniform over $[0,1]$ ($\priormean = 0.5$). Observe $h$ is uniformly $2$-Lipschitz in both $\fin$ and $\trueparam$ over $\fin,\trueparam \in [0,1]$. This example clarifies that modifying the probability of generating signal $\signal\in\I$ when the state is $\param$ (i.e. $z_{\param}(\signal)$) is not straightforward -- any modification in the mechanism through the parameter $z_{\param}(\signal)$ to change $\posterior{\signal}$ so that $m( \posterior{\signal}) = \final(\signal)$ moves away from $\param$ can result in $\final(\signal)$ moving toward $\param'$ for some other state $\param'$ that also maps to $\signal$ (i.e. $z_{\param}(\signal)>0$). 

\noindent Thus, the planner may elect to generate signals in a way that forgoes some utility in some realizations of the state $\trueparam$ to generate posterior means that are more preferred for other realizations of the state $\trueparam$. This complicates the structure of the optimal signaling mechanisms. In particular, a monotone partitional structure no longer holds and the optimal solution requires pooling of states from disconnected regions of the state-space. 
%Thus, optimal mechanism in this setting does not admit a MPS.  

\noindent To show this, consider any mechanism $\pi$ satisfying MPS. Then there exists $\{t_k\}_{j=0}^K$ with $K \geq 0$, $t_0 = 0$ and $t_K = 1$ where $\pi$ generates signal $j \in [K]$ exactly when $\trueparam \in [t_{j-1}, t_{j}]$. Consequently, this mechanism induces a posterior mean~$\posterior{j}=\frac{t_{j-1}+t_{j}}{2}$ for signal~$j$. Hence the objective attained by the planner using this mechanism is:
\begin{align}
V_{F,\obj}(\pi) &= \mathbb{E}_{\trueparam \sim \dist, i \sim z_{\trueparam}}\big[\obj(\final(i); \trueparam)\big] \\
    &= \E\big[\E[\lvert \frac{\trueparam}{3}-\posterior{j}\rvert \big| t_{j-1}\leq \trueparam < t_j]\big]
\end{align}

\noindent For any $j$ and for all $\trueparam \in [t_{j-1}, t_{j}]$ observe that $\posterior{j} = \frac{t_{j-1}+t_{j}}{2} \geq \frac{t_{j}}{2} > \frac{t_{j}}{3} \geq \frac{\trueparam}{3}$, which implies that $\lvert \frac{\trueparam}{3}-\posterior{j}\rvert = \posterior{j} - \frac{\trueparam}{3}$. Hence by tower rule and mean-preservation of the posteriors $\posterior{j}$:
\begin{align}
V_{F,\obj}(\pi) &= \E\big[\E[\posterior{j}-\frac{\trueparam}{3}\big| t_{j-1}\leq \trueparam < t_j]\big] \\
&= \E[\posterior{j}]-\E[\frac{\trueparam}{3}]\\
&= \priormean -  \frac{1}{3} \priormean = \frac{1}{3} 
\end{align}
Hence, we have shown that every MPS mechanism achieves the same objective value of $\frac{1}{3}$.

\noindent Now consider the mechanism: $\pi^\mathrm{pool} = \langle \I, \{z_\param\}_{\param\in\paramspace}\rangle$ where $\I = \{\mathbf{1},\mathbf{2},\mathbf{3}\}$ and $z_{\theta}(s)$ is as follows:
\begin{align*}
z_\theta(\cdot) = \begin{cases}
& \mathbf{1} \text{ w.p. } 1 \text{ if }\param\in\mathcal{S}_1\coloneqq[0,0.25]\\
& \mathbf{2} \text{ w.p. } 1 \text{ if }\param\in\mathcal{S}_2\coloneqq[0.25,0.35]\cup[0.95,1]\\
& \mathbf{3} \text{ w.p. } 1 \text{ if }\param\in\mathcal{S}_3\coloneqq[0,1]\setminus\{\mathcal{S}_1\cup\mathcal{S}_2\}
\end{cases}
\end{align*}
The resulting posterior means are $\posterior{\mathbf{1}} = 0.125, \posterior{\mathbf{2}} = 0.525$ and $\posterior{\mathbf{3}} = 0.65$. Enumerating the objective, $\pi^\mathrm{pool}$ achieves an objective value of $\approx 0.4854$, exceeding the performance of any MPS-based mechanism. 

\subsection{Comparison with \cite{de_vericourt_informing_2021}}
\label{subsec:dv_comp}
We first perform a direct comparison with the model studied in~\cite{de_vericourt_informing_2021}, which corresponds to the following planner utility in our setting: 
\begin{align*}
\obj_{\text{ref}(\lambda)}(\fin;\trueparam) = \lambda \mathbb{E}_{v \sim G}[v\mathbb{I}\{v\geq G^{-1}(\fin)\}]- (1-\lambda) \trueparam (1-\fin)^2.
\end{align*}
Under the model they present, $\popdist \sim Unif[0,6]$ is the distribution of agent's value of in-person work, and  continuous-valued state $\trueparam\sim Unif[0,10]$ (in contrast to the binary-valued state in~\cite{de_vericourt_informing_2021}. That is, planner's expected utility is the expected gain of all the agents who choose in-person work (i.e, $\mathbb{E}_{v \sim G}[v\mathbb{I}\{v\geq G^{-1}(\fin)\}]$) net the {total} expected disutility incurred by these agents in facing risk of disease transmission ($\trueparam (1-\fin)^2$). The weight $\lambda\in[0,1]$ captures the tradeoff between the two terms. 

\noindent We do a direct comparison of our computational approach for approximating $\pi_{\dist,\obj_{\text{ref}(\lambda)}}^\ast$ against their closed-form optimal solution. Particularly, we replicate the preference model $\obj_{\text{ref}(\lambda)}$ of ~\cite{de_vericourt_informing_2021} and consider a binary model of uncertainty $\trueparam \sim \bar{\dist}$ that takes value $\trueparam = 0$ with probability $\frac{1}{2}$ and $\trueparam = 10$ with probability $\frac{1}{2}$. We apply our numerical approach to find approximate solutions for various levels of discretization when $\lambda\in[0,0.25,0.5,0.75,1]$. The results of ~\cite{de_vericourt_informing_2021} provide a closed form representation of $\pi_{\bar{\dist},\obj_{\text{ref}(\lambda)}}^\ast$ and show that full information $\pi_{FI}$ is optimal except when $\lambda=1$. In this setting, observe our algorithm need not discretize $\bar{\dist}$ as it is already a discrete distribution. In Fig. \ref{fig:discrete_error}, we plot the error in the objective value achieved between using the computed signaling mechanism and the true optimal signaling mechanism as we vary the discretization $\tau$ used in approximating $\obj_{\text{ref}(\lambda)}$. As shown, our approach recovers a signaling mechanism with hardly any discretization when the true optimum is full information disclosure. When $\lambda = 1$, the convergence is slower -- but still faster than the rate provided in Theorem \ref{thm:stateful}. When the optimal signaling mechanism is no longer fully-informative, the errors in the solution to the chosen linear program solution $z_{ji}^\ast$ accumulate more heavily in the derived signaling mechanism as constructed in Lemma \ref{lemma:lp_discrete}. This occurs because the solutions $z_{ji}^\ast$ for partial-information disclosure mechanisms are no longer as sparse as fully-informative or non-informative disclosure rules. In general, however, our results do recover those of ~\cite{de_vericourt_informing_2021} and our rate of convergence beats the theoretical bound we provide.

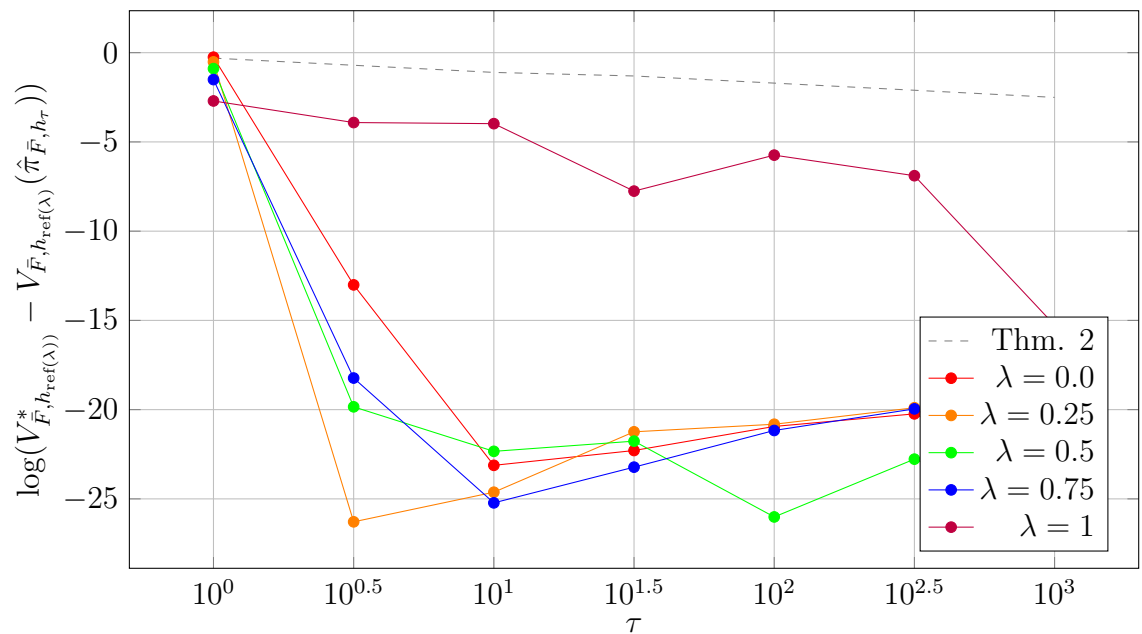
\begin{figure}[h!]
\centering
\begin{tikzpicture}
\begin{axis}[
% scaled y ticks=-12:0,
% ytick scale label code/.code={},
symbolic x coords={1,3,10,32,100,316,1000},
xticklabels={$10^{0}$,$10^{0.5}$,$10^{1}$,$10^{1.5}$,$10^{2}$,$10^{2.5}$,$10^{3}$},
xtick=data,
height=9cm,
width=15cm,
grid=major,
xlabel={$\tau$},
ylabel={$\log(V_{\bar{F},h_{\text{ref}(\lambda))}}^\ast - V_{\bar{F},h_{\text{ref}(\lambda)}}(\hat{\pi}_{\bar{F},h_{\tau}}))$},
legend style={
cells={anchor=east},
legend pos=south east,
}
]
\addplot[-,dashed,gray] coordinates {
(1,-0.30541906202998762)  (3,-0.70541906202998762)  (10,-1.10541906202998762)  (32,-1.30541906202998762)  (100,-1.70541906202998762)  (316,-2.10541906202998762)
(1000,-2.50541906202998762)};
\addplot[red,mark=*] coordinates {
(1,-0.25235032842418237)  
(3,-13.010250810089534)  
(10,-23.120460721604918)  
(32,-22.290475254182006)  
(100,-20.94179643776927)  
(316,-20.236889404001346)  
(1000,-23.878350656130195)  
};
\addplot[orange,mark=*] coordinates {
(1,-0.5161178830572486)  
(3,-26.285730048361632)  
(10,-24.62650327462636)  
(32,-21.24231941351772)  
(100,-20.81839271960233)  
(316,-19.89459928481808)  
(1000,-25.544142354256344)  
};
\addplot[green,mark=*] coordinates {
(1,-0.8935811616169201)  
(3,-19.842766871673753)  
(10,-22.336555778521685)  
(32,-21.765509735176526)  
(100,-26.011860576428106)  
(316,-22.774048286700893)  
(1000,-25.862761398540382)  
};
\addplot[blue,mark=*] coordinates {
(1,-1.5071409042585173)  
(3,-18.230108877330288)  
(10,-25.223175349592598)  
(32,-23.222195124494082)  
(100,-21.167450848201103)  
(316,-19.95764362880023)  
(1000,-23.78327723444541)  
};
\addplot[purple,mark=*] coordinates {
(1,-2.7052113077937587)  
(3,-3.915120628419362)  
(10,-3.9769864826846133)  
(32,-7.756332294300116)  
(100,-5.740690183119474)  
(316,-6.893089519121701)  
(1000,-15.266485227863647)  
};

\legend{Thm. \ref{thm:stateful},$\lambda = 0.0$,$\lambda = 0.25$,$\lambda = 0.5$,$\lambda = 0.75$,$\lambda = 1$}
\end{axis}
\end{tikzpicture}
\caption{Error of computed $\epsilon$-optimal solution $\hat{\pi}_{\bar{F},h_{\tau}}$ as discretization ($\tau$) increases and the weights $\lambda$ are varied.}
\label{fig:discrete_error}
\end{figure}

% Here, we provide a numerical analysis that identifies some insights into the optimal signaling mechanism as determined numerically for more general objectives under continuous distributions. 

\end{document}